\documentclass[11pt]{article}
\usepackage{multirow}
\usepackage{fullpage}
\usepackage[disable]{todonotes}
\usepackage{graphicx} 
\usepackage{array} 
\usepackage{paralist} 
\usepackage{amsthm, amsmath, amssymb, color, verbatim}
\usepackage{hyphenat}
\usepackage{epsfig}
\usepackage{subfigure}
\usepackage{multirow}
\usepackage{latexsym}
\usepackage{hyperref}
\usepackage{footnote}
\makesavenoteenv{table}
\makesavenoteenv{tabular}

\usepackage{algorithm2e}


\newtheorem{theorem}{Theorem}[section]
\newtheorem{lemma}[theorem]{Lemma}
\newtheorem{proposition}[theorem]{Proposition}
\newtheorem{corollary}[theorem]{Corollary}

\newtheorem{fact}[theorem]{Fact}

\newtheorem{question}[theorem]{Question}
\newtheorem{definition}[theorem]{Definition}


\newenvironment{proofof}[1]{\par\medskip\noindent{\bf Proof of #1: \ }}{\hfill$\Box$\par\medskip}

\renewcommand{\qed}{\nobreak \ifvmode \relax \else
      \ifdim\lastskip<1.5em \hskip-\lastskip
      \hskip1.5em plus0em minus0.5em \fi \nobreak
      \vrule height0.75em width0.5em depth0.25em\fi}

\newcommand{\eps}{\epsilon}

\def\E{\mathop{\mathbb{E}}\displaylimits}

\newcommand{\cA}{\mathcal A}
\newcommand{\cB}{\mathcal B}
\newcommand{\cD}{\mathcal D}
\newcommand{\cM}{\mathcal M}
\newcommand{\cS}{\mathcal S}
\newcommand{\Index}{\mathsf{INDEX}}

\newcommand{\R}[0]{{\ensuremath{\mathbb{R}}}}

\newcommand{\cI}[0]{\mathcal{I}}

\newcommand{\D}[0]{\textsf{D}}


\newcommand{\cE}[0]{\mathcal{E}}

\newcommand{\rl}[1]{R^{lin}_{#1}}
\newcommand{\rla}[1]{\bar{R}^{lin}_{#1}}
\newcommand{\rlad}[2]{\bar{\cD}^{lin, #1}_{#2}}

\newcommand{\distc}[2]{\bar{\mathcal D}^{\rightarrow, #2}_{#1}}

\newcommand{\distcu}[1]{\distc{#1}{U}}

\newcommand{\rc}[1]{R^{\rightarrow}_{#1}}
\newcommand{\rca}[1]{\bar{R}^{\rightarrow}_{#1}}

\newcommand{\F}{\mathbb F}

\newcommand{\oo}{\{+1,-1\}}
\newcommand{\ftwo}{\F_2}
\def\E{\mathop{\mathbb{E}}\displaylimits}

\newcommand{\sgn}[1]{\ensuremath{\mathsf{sgn}\left(#1\right)}}

\newcommand{\fplus}[1]{f^{+#1}}

\newcommand*{\rchi}{\mbox{\Large$\chi$}}

\newif\ifnotes\notestrue 
\ifnotes
\newcommand{\samson}[1]{\textcolor{purple}{{\bf (Samson:} {#1}{\bf ) }} \marginpar{\tiny\bf
             \begin{minipage}[t]{0.5in}
               \raggedright S:
            \end{minipage}}}            							
\else
\newcommand{\grigory}[1]{\textcolor{blue}{{\bf (Grigory:} {#1}{\bf ) }} \marginpar{\tiny\bf
             \begin{minipage}[t]{0.5in}
               \raggedright G:
            \end{minipage}}}            							
\else
\newcommand{\samson}[1]{}
\newcommand{\grigory}[1]{}
\fi

\newcommand{\superscript}[1]{\ensuremath{^{\mbox{\tiny{\textit{#1}}}}}\xspace}
\def \th {\superscript{th}}     

\hypersetup{
    colorlinks   = true,
    citecolor    = blue,
	linkcolor	 = red
}

\author{
Grigory Yaroslavtsev\thanks{Indiana University, Bloomington, IN., USA \& The Alan Turing Institute, London, UK. \texttt{gyarosla@iu.edu}}
\and
Samson Zhou\thanks{Indiana University, Bloomington, IN., USA.
\texttt{samsonzhou@gmail.com}}
}
\date{\today}

\begin{document}
\title{Approximate $\ftwo$-Sketching of Valuation Functions}

\maketitle

\begin{abstract}
We study the problem of constructing a linear sketch of minimum dimension that allows approximation of a given real-valued function $f \colon \ftwo^n \rightarrow \mathbb R$ with small expected squared error. 
We develop a general theory of linear sketching for such functions through which we analyze their dimension for most commonly studied types of valuation functions: additive, budget-additive, coverage, $\alpha$-Lipschitz submodular and matroid rank functions. 
This gives a characterization of how many bits of information have to be stored about the input $x$ so that one can compute $f$ under additive updates to its coordinates.

Our results are tight in most cases and we also give extensions to the distributional version of the problem where the input $x \in \ftwo^n$ is generated uniformly at random. Using known connections with dynamic streaming algorithms, both upper and lower bounds on dimension obtained in our work extend to the space complexity of algorithms evaluating $f(x)$ under long sequences of additive updates to the input $x$ presented as a stream. 
Similar results hold for simultaneous communication in a distributed setting.
\end{abstract}

\section{Introduction}
Linear sketching is a fundamental tool in efficient algorithm design that has enabled many of the recent breakthroughs in fast graph algorithms and computational linear algebra. 
It has a wide range of applications, including randomized numerical linear algebra (see survey~\cite{W14}), graph sparsification (see survey~\cite{M14}), frequency estimation \cite{AMS99}, dimensionality reduction \cite{JL84}, various forms of sampling, signal processing, and communication complexity. 
In fact, linear sketching has been shown to be the optimal algorithmic technique \cite{LNW14, AHLW16} for dynamic data streams, where elements can be both inserted and deleted. 
Linear sketching is also a frequently used tool in distributed computing --- summaries communicated between the processors in massively parallel computational models are often linear sketches. \todo{maybe mention the paper with UCSD people somewhere?}

In this paper we introduce a study of approximate linear sketching over $\ftwo$
(approximate $\ftwo$-sketching). 
This is a previously unstudied but natural generalization of the work of~\cite{KMSY18}, which studies exact $\ftwo$-sketching.
For a set $S \subseteq [n]$ let $\chi_{S}\colon \ftwo^n \to \ftwo$ be a parity function defined as $\chi_{S}(x) = \sum_{i \in S} x_i$.
Given a function $f \colon \ftwo^n \rightarrow \mathbb R$, we are looking for a distribution over $k$ subsets $\mathbf{S}_1, \dots, \mathbf{S}_k \subseteq [n]$ such that for any input $x$, it should be possible to compute $f(x)$ with expected squared error at most $\eps$ from the parities $\chi_{\mathbf{S}_1}(x), \chi_{\mathbf{S}_2}(x), \dots, \chi_{\mathbf{S}_k}(x)$ computed over these sets.
While looking only at linear functions over $\ftwo$ as candidate sketches for evaluating $f$ might seem restrictive, this view turns out to be optimal in a number of settings. In the light of recent results of~\cite{KMSY18,HLY18},  the complexity of $\ftwo$-sketching also characterizes the space complexity of streaming algorithms in the XOR-update model as well as communication complexity of one-way multiplayer broadcasting protocols for XOR-functions.

In matrix form, $\ftwo$-sketching corresponds to multiplication over $\mathbb F_2$ of the row vector $x \in \ftwo^n$ by a random $n \times k$ matrix  whose $i$-th column is the characteristic vector of $\chi_{\mathbf{S}_i}$:
\[
\bordermatrix {
	&&&&\cr
	& x_1     & x_2     & \ldots & x_n  \cr
}
\bordermatrix{
	&     &      &  &      \cr
	& \vdots & \vdots & \vdots & \vdots     \cr
	& \chi_{\mathbf{S}_1}     & \chi_{\mathbf{S}_2}     & \ldots & \chi_{\mathbf{S}_k}     \cr
	& \vdots & \vdots & \vdots & \vdots \cr
}
=\bordermatrix {
	&&&&\cr
	& \chi_{\mathbf{S}_1}(x)     & \chi_{\mathbf{S}_2}(x)     & \ldots & \chi_{\mathbf{S}_k}(x)  \cr
}
\]

The goal is to minimize $k$, ensuring that the sketch alone is sufficient for computing $f$ with expected squared error at most $\eps$ for any fixed input $x$. 
For a fixed distribution $\D$ of $x$, the definition of error is modified to include an expectation over $\D$ in the error guarantee. 
We give formal definitions below.

\begin{definition}[Exact $\ftwo$-sketching,~\cite{KMSY18}]\label{def:f2-sketch}
The \emph{exact randomized $\ftwo$-sketch complexity} with error $\delta$ of a function $f \colon \ftwo^n \to \mathbb R$ (denoted as $\rl{\delta}(f)$) is the smallest integer $k$ such that there exists a distribution $\chi_{\mathbf{S}_1},\chi_{\mathbf{S}_2},\ldots, \chi_{\mathbf{S}_k}$ over $k$ linear functions over $\ftwo^n$ and a post-processing function $g:\ftwo^k \rightarrow \mathbb R$ that satisfies:
	$$
	\forall x \in \ftwo^n \colon \Pr_{\mathbf{S}_1, \dots, \mathbf{S}_k}\left[
	g(\chi_{\mathbf{S}_1}(x),\chi_{\mathbf{S}_2}(x),\ldots, \chi_{\mathbf{S}_k}(x)) = f(x)\right] \ge 1-\delta.
	$$
\end{definition}

The number of parities $k$ in the definition above is referred to as the \textit{dimension} of the $\ftwo$-sketch.

\begin{definition}[Approximate $\ftwo$-sketching]\label{def:rand-f2-sketch}
The \emph{$\eps$-approximate randomized $\ftwo$-sketch complexity} of a function $f \colon \ftwo^n \to \mathbb R$ (denoted as $\rla{\eps}(f)$) is the smallest integer $k$ such that there exists a distribution $\chi_{\mathbf{S}_1},\chi_{\mathbf{S}_2},\ldots, \chi_{\mathbf{S}_k}$ over $k$ linear functions over $\ftwo^n$ and a post-processing function $g:\ftwo^k \rightarrow \mathbb R$ that satisfies:
	$$
	\forall x \in \ftwo^n \colon \E_{\mathbf{S}_1, \dots, \mathbf{S}_k}\left[
	(g(\chi_{\mathbf{S}_1}(x),\chi_{\mathbf{S}_2}(x),\ldots, \chi_{\mathbf{S}_k}(x)) - f(x))^2  \right] \le \eps 
	$$
\end{definition}

If $g$ is an unbiased estimator of $f$, then this corresponds to an upper bound on the variance of the estimator.
For example, functions with small spectral norm (e.g. coverage functions, Corollary~\ref{cor:coverage}) admit such approximate $\ftwo$-sketches. 
Moreover, observe that Definition~\ref{def:rand-f2-sketch} is not quite comparable with an epsilon-delta guarantee, which only promises that $|g(\chi_{\mathbf{S}_1}(x),\chi_{\mathbf{S}_2}(x),\ldots, \chi_{\mathbf{S}_k}(x)) - f(x)|\le\epsilon$ with probability $1-\delta$, but guarantees nothing for $\delta$ fraction of the inputs. 

In addition to this worst-case guarantee, we also consider the same problem for $x$ from a certain distribution. 
In this case, a weaker guarantee is required, i.e. the bound on expected squared error should hold only over some fixed known distribution $\D$.
An important case is $\D = U(\ftwo^n)$, the uniform distribution over all inputs.

\begin{definition}[Approximate distributional $\ftwo$-sketching]\label{def:rand-dist-f2-sketch}
	For a function $f \colon \ftwo^n \to \mathbb R$, we define its \emph{$\eps$-approximate randomized distributional $\ftwo$-sketch complexity} with respect to a distribution $\D$ over $\ftwo^n$
	(denoted as $\rlad{\D}{\eps}(f)$) as the smallest integer $k$ such that there exists a distribution $\chi_{\mathbf{S}_1},\chi_{\mathbf{S}_2},\ldots, \chi_{\mathbf{S}_k}$ over $k$ linear functions over $\ftwo$ and a post-processing function $g:\ftwo^k \rightarrow \ftwo$ that satisfies:
	$$
	\E_{x \sim \D} \E_{\mathbf{S}_1, \dots, \mathbf{S}_k}\left[
	(g(\chi_{\mathbf{S}_1}(x),\chi_{\mathbf{S}_2}(x),\ldots, \chi_{\mathbf{S}_k}(x)) - f(x))^2 \right] \le \eps. 
	$$
\end{definition}

\subsection{Applications to Streaming and Distributed Computing}
One of the key applications of our results is to the dynamic streaming model.
In this model, the input $x$ is generated via a sequence of additive updates to its coordinates, starting with $x = 0^n$.
If $x \in \mathbb R^n$, then updates are of the form $(i, \Delta_i)$ (turnstile model), where $i \in [n]$, and $\Delta_i \in \mathbb R$, which adds $\Delta_i$ to the $i$-th coordinate of $x$.
For $x \in \ftwo^n$, only the coordinate $i$ is specified and the corresponding bit is flipped, which is known as the XOR-update model~\cite{T16}\footnote{By slightly changing the function to $f'(x_1, \dots, x_n, y_1, \dots, y_n) = f(x_1 + y_1, x_2 + y_2, \dots, x_n + y_n)$, it is easy to see that there are functions for which knowledge of the sign of the update (i.e. whether it is +1 or -1) is not a stronger model than the XOR-update model. 
For some further motivation of the XOR-update model, consider dynamic graph streaming algorithms, i.e the setting when $x$ represents the adjacency matrix of a graph and updates correspond to adding and removing the edges. 
Almost all known dynamic graph streaming algorithms (except spectral graph sparsification of~\cite{KLMMS17}) are based on the $\ell_0$-sampling primitive~\cite{FIS08}. 
As shown recently, $\ell_0$-sampling can be implemented optimally using $\ftwo$-sketches~\cite{KNPWWY17} and hence almost all known dynamic graph streaming algorithms can handle XOR-updates, i.e. knowing whether an edge was inserted or deleted does not help.}.
Dynamic streaming algorithms aim to minimize space complexity of computing a given function $f$ for an input generated through a sequence of such updates while also ensuring fast update and function evaluation times.   

Note that linear sketching over the reals and $\mathbb F_2$-sketching can be used directly in the respective streaming update models. Most interestingly, these techniques turn out to achieve almost optimal space complexity. It is known that linear sketching over the reals gives (almost) optimal space complexity for processing dynamic data streams in the turnstile model for \textit{any} function $f$~\cite{LNW14,AHLW16}. 
However, the results of~\cite{LNW14,AHLW16} require adversarial streams of length triply exponential in $n$.
In the XOR-update model, space optimality of $\mathbb F_2$-sketching has been shown recently in~\cite{HLY18}. 
This optimality result holds even for adversarial streams of much shorter length $\Omega(n^2)$.
Hence, lower bounds on $\ftwo$-sketch complexity obtained in our work extend to space complexity of dynamic streaming algorithms for streams of quadratic length. 

A major open question in this area is the conjecture of~\cite{KMSY18} that the same holds even for streams of length only $2n$.
We thus complement our lower bounds on dimension of $\ftwo$-sketches with one-way two-player communication complexity lower bounds for the corresponding XOR functions $\fplus{}(x,y) = f(x + y)$. 
Such lower bounds translate to dynamic streaming lower bounds for streams of length $2n$.
Furthermore, whenever our communication lower bounds hold for the uniform distribution, the corresponding streaming lower bound applies to streaming algorithms under uniformly random input updates. 

Finally, our upper bounds can be used for distributed algorithms computing $f(x_1 + \dots + x_M)$ over a collection of distributed inputs $x_1, \dots, x_M \in \mathbb F_2^n$ as $\ftwo$-sketches can be used for distributed inputs. On the other hand, our communication lower bounds also apply to the simultaneous message passing (SMP) communication model, since it is strictly harder than one-way communication.

\subsection{Valuation Functions and Sketching}
Submodular valuation functions, originally introduced in the context of algorithmic game theory and optimization, have received a lot of interest recently in the context of learning theory~\cite{BH10,BCIW12,CKKL12,GHRU13,RY13,FKV13,FK14,FV15,FV16}\footnote{We remark that in this literature the term ``sketching'' is used to refer to the space complexity of representing the function $f$ itself under the assumption that it is unknown but belongs to a certain class. This question is orthogonal to our work as we assume $f$ is known and fixed while the input $x$ is changing. },  approximation~\cite{GHIM09,BDFKNR12} and property testing~\cite{CH12,SV14,BB17}.
As we show in this work, valuation functions also represent an interesting study case for linear sketching and streaming algorithms. 
While a variety of papers exists on streaming algorithms for optimizing various submodular objectives, e.g.~\cite{SG09,DIMV14,BMKK14,CGQ15,CW16,ER16,HIMV16,AKL16,BEM17}, to the best of our knowledge no prior work considers the problem of evaluating such functions under a changing input.

A systematic study of $\ftwo$-sketching has been initiated for Boolean functions in~\cite{KMSY18}. 
This paper can be seen as a next step, as we introduce approximation into the study of $\ftwo$-sketching. 
One of the consequences of our work is that the Fourier $\ell_1$-sampling technique, originally introduced by Bruck and Smolensky~\cite{BS92} (see also~\cite{G97,MO09}), turns out to be optimal in its dependence on both spectral norm and the error parameter. 
For Boolean functions, a corresponding result is not known as Boolean functions with small spectral norm and necessary properties are hard to construct.
Another technical consequence of our work is that the study of learning and sketching algorithms turn out to be related on a technical level despite pursuing different objectives (in learning the specific function is unknown, while in sketching it is). In particular, our hardness result for Lipschitz submodular functions uses a construction of a large family of matroids from~\cite{BH10} (even though in a very different parameter regime), who designed such a family to fool learning algorithms.

\subsection{Our Results}
A function $f \colon 2^{[n]} \to \mathbb R$ is $\alpha$-Lipschitz if for any $S \subseteq [n]$ and $i \in [n]$, it holds that $|f(S \cup \{i\}) - f(S)| \le \alpha$ for some constant $\alpha > 0$.
A function $f \colon 2^{[n]} \to \mathbb R$ is submodular if:
\begin{align*}
f(A \cup \{i\}) - f(A) \ge f(B \cup \{i\}) - f(B) &\quad\quad\quad \forall A \subseteq B \subseteq [n] \text{ and } i \notin B.
\end{align*}

We consider the following classes of valuation functions of the form $f \colon \ftwo^n \rightarrow \mathbb R$ (all of them submodular) sometimes treating them as $f \colon 2^{[n]} \rightarrow \mathbb R$ and vice versa. 
These classes mostly cover all of existing literature on submodular functions\footnote{We do not discuss some other subclasses of subadditive functions because they are either superclasses of classes for which we already have an $\Omega(n)$ lower bound (e.g. submodular, subadditive, etc.) or because such a lower bound follows trivially (e.g. for OXS/XOS since for XS-functions a lower bound of $\Omega(n)$ is easy to show, see Appendix~\ref{app:xs-functions}).}. See Table~\ref{table:results} for a summary of the results.

\begin{table}
	\centering

{\renewcommand{\arraystretch}{1.3}
	\begin{tabular}[!htb]{|c|c|c|c|c|}\hline
		Class & Error & Distribution & Complexity & Result \\\hline
		Additive/Budget additive  &\multirow{2}{*}{$\eps$} & \multirow{2}{*}{any} & \multirow{2}{*}{$\Theta\left(\frac{\|w\|_1^2}{\eps}\right)$}  & Theorem~\ref{thm:weighted-linear}, \ref{thm:lb-hockey-stick-uniform} \\
		$\min(b, \sum_{i = 1}^n w_i x_i)$ &   &  &  & Corollary~\ref{cor:additive},~\ref{cor:budget-additive}\\\hline
		$\min(c\sqrt{n}, \frac{2c}{\sqrt{n}} \sum_{i = 1}^n x_i)$&constant&uniform&$\Omega(n)$&Theorem~\ref{thm:lb-hockey-stick-uniform}\\\hline
		Coverage &$\eps$& any & $O\left(\frac{1}{\epsilon}\right)$ & Corollary~\ref{cor:coverage} \\\hline
		Matroid Rank 2  & exact & any & $\Theta(1)$ & Theorem~\ref{thm:rank-2-matroids}\\\hline
		Graphic Matroids Rank $r$  & exact & any & $O(r^2\log r)$ & Theorem~\ref{thm:graphic-matroids}\\\hline
		Matroid Rank $r$  & exact &any  & $\Omega(r)$ & Corollary~\ref{cor:lb:rank}\\\hline
		Matroid Rank $r$  & exact &uniform& $O((r \log r + c)^{r +1})$ & Corollary~\ref{cor:matroid-approximation-uniform-rank}\\\hline
		Matroid Rank & $1/\sqrt{n}$ &uniform&$\Theta(1)$ & Corollary~\ref{cor:matroid-approximation-uniform} \\\hline
		$\frac{c}{n}$-Lipschitz Submodular&constant& any & $\Theta(n)$ & Theorem~\ref{thm:lb-lipschitz-submodular}\\\hline
	\end{tabular}
\caption{Linear sketching complexity of classes of valuation functions}\label{table:results}
}
\end{table}

\begin{itemize}
\item \textbf{Additive (linear).} $f(x) = \sum_{i = 1}^n w_i x_i$, where $w_i \in \mathbb R$.

\textbf{Our results:} For additive functions, it is easy to show that dimension of $\ftwo$-sketches is $O(\min(\|w\|_1^2/\eps, n))$ (Corollary~\ref{cor:additive}) and give a matching communication lower bound (Theorem~\ref{thm:weighted-linear}) for all $\epsilon \ge \|w\|_2^2$.

\item \textbf{Budget-additive.} $f(x) = \min(b, \sum_{i = 1}^n w_i x_i)$ where $b, w_i \in \mathbb R$.
An example of such functions is the ``hockey stick'' function $hs_{\alpha} (x) = \min(\alpha, \frac{2\alpha}{n}\sum_{i = 1}^n x_i)$.

\textbf{Our results:} For budget-additive functions, it is easy to show that dimension of $\ftwo$-sketches is $O(\min(\|w\|_1^2/\eps, n))$ (Corollary~\ref{cor:budget-additive}). We give a matching communication bound for the ``hockey stick'' function for constant $\eps$ (Theorem~\ref{thm:lb-hockey-stick-uniform}) which holds even under the uniform distribution of the input.

\item \textbf{Coverage.} A function $f$ is a \textit{coverage function} on some universe $U$ of size $m$ if there exists a collection $A_1, \dots, A_n$ of subsets of $U$ and a vector of non-negative weights $(w_1, \dots, w_m)$ such that: 
$$f(S) = \sum_{i \in \cup_{j \in S} A_j} w_i.$$ 

\textbf{Our results:} We show a simple upper bound of $O(1/\eps)$ (Corollary~\ref{cor:coverage}) for such functions.

\item \textbf{Matroid rank.} A pair $M = ([n], \cI)$ is called a matroid if $\cI \subseteq 2^{[n]}$ is a non-empty set family such that the following two properties are satisfied:
\begin{itemize}
\item If $I \in \cI$ and $J \subseteq I$, then $J \in \cI$
\item If $I, J \in \cI$ and $|J| < |I|$, then there exists an $i \in I \setminus J$ such that $J \cup \{i\} \in \cI$.
\end{itemize}
The sets in $\cI$ are called \textit{independent}. A maximal independent set is called a \textit{base} of $M$. All bases have the same size, which is called the \textit{rank} of the matroid and is denoted as $rk(M)$. The \textit{rank function} of the matroid is the function $rank_M \colon 2^{[n]} \to \mathbb N_+$ defined as:
$$rank_M(S) := \max\{|I| \colon I \subseteq S, I \in \cI\}.$$
It follows from the definition that $rank_M$ is always a submodular $1$-Lipschitz function.

\textbf{Our results:}
In order to have consistent notation with the rest of the manuscript we always assume that matroid rank functions are scaled so that their values are in $[0,1]$.
Some of our results are exact, i.e. the corresponding matroid rank function is computed exactly (and in this case rescaling does not matter) while others allow approximation of the function value. In the latter case, the approximation guarantees are multiplicative with respect to the rescaled function.

Our main theorem regarding sketching of matroid rank functions is as follows:
\begin{theorem}[Sketching matroid rank functions]
For (scaled) matroid rank functions:
\begin{itemize}
\item
There exists an exact $\ftwo$-sketch of size $O(1)$ for matroids of rank $2$ (Theorem~\ref{thm:rank-2-matroids}) and graphic matroids (Theorem~\ref{thm:graphic-matroids}).
\item
There exists $c=\Omega(1)$ and a matroid of rank $r$ such that a $c$-approximation of its matroid rank function has randomized linear sketch complexity $\Omega(r)$. Furthermore, this lower bound also holds for the corresponding one-way communication problem (Theorem~\ref{thm:xor:lb:lipschitz}, Corollary~\ref{cor:lb:rank}).
\end{itemize}
\end{theorem}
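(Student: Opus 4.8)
The statement has two essentially independent halves: the $O(1)$-style upper bounds for rank-$2$ and graphic matroids, both proved in the \emph{exact} model of Definition~\ref{def:f2-sketch}, and the $\Omega(r)$ lower bound, which I would obtain as a one-way communication lower bound together with the standard reduction showing that a dimension-$k$ $\ftwo$-sketch yields a $k$-bit one-way protocol (Alice sends the parities $\chi_{S_i}(x)$; Bob recovers $\chi_{S_i}(x\oplus y)=\chi_{S_i}(x)\oplus\chi_{S_i}(y)$ and applies the post-processing $g$). \textbf{Rank $2$ (Theorem~\ref{thm:rank-2-matroids}).} A rank-$2$ matroid is specified by a loop set $L$ together with a partition of $[n]\setminus L$ into parallel classes $P_1,\dots,P_t$, and $rank_M(S)=\min\bigl(2,\#\{j:S\cap P_j\neq\emptyset\}\bigr)$. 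Hence $rank_M(S)$ depends only on two coarse features of $S$: whether $S\setminus L=\emptyset$, and whether $S$ meets at least two distinct classes. The first is read off from $O(\log 1/\delta)$ parities $\chi_R(x)$ with $R$ a uniformly random subset of $[n]\setminus L$: such a parity never fires when $S\setminus L=\emptyset$ and fires with probability $\frac{1}{2}$ otherwise. For the second I would draw $O(\log 1/\delta)$ independent uniformly random $\ftwo$-colorings of the $t$ classes; a coloring splits $[n]\setminus L$ into $S_0,S_1$, all non-loop elements of $S$ land in one part when $S$ meets a single class, while if $S$ meets two classes those two classes get opposite colors with probability $\frac{1}{2}$; feeding each part into the ``nonempty intersection with $S$'' test (random parities, which never give false positives) lets the post-processor output ``rank $2$'' exactly when both parts are hit in some round. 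The whole scheme uses $O(\log 1/\delta)=O(1)$ parities.

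\textbf{Graphic matroids (Theorem~\ref{thm:graphic-matroids}).} Write $M=M(G)$ with $G=(V,E)$. Parallel edges and self-loops do not change the rank of any edge subset, so $rank_M(S)$ equals the size of a spanning forest of $(V,S')$, where $S'\subseteq E$ is the set of underlying simple non-loop edges present in $S$; this value is determined by $S'$ alone. Only vertices incident to some non-loop edge of $G$ are relevant, and since every nontrivial component of $G$ adds at least $1$ to $rk(M)=r$, there are at most $r$ such components and hence at most $2r$ relevant vertices, so at most $\binom{2r}{2}=O(r^2)$ candidate simple edges. For each candidate pair $\{u,v\}$ I would test whether $S$ contains an edge parallel to $\{u,v\}$ using $O(\log(r/\delta))$ random parities supported on that parallel class; a union bound over the $O(r^2)$ pairs recovers $S'$ with probability $1-\delta$, after which the post-processor outputs the spanning-forest size of $(V,S')$ exactly. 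This costs $O(r^2\log r)$ parities.

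\textbf{The $\Omega(r)$ lower bound (Theorem~\ref{thm:xor:lb:lipschitz}, Corollary~\ref{cor:lb:rank}).} The observation that drives this is that a rank-$r$ matroid rank function rescaled to $[0,1]$ is a $\frac{1}{r}$-Lipschitz monotone submodular function, so the statement is the specialization of the Lipschitz-submodular hardness (Theorem~\ref{thm:lb-lipschitz-submodular}) with the parameter playing the role of ``$n$'' set to $r$. I would accordingly build a single matroid $M$ of rank $r$ from a combinatorial design, in the spirit of the large matroid family of~\cite{BH10} that underlies that hardness theorem, encoding $\Omega(r)$ hidden ``bits'' so that, under a suitable input distribution, resolving any one of them shifts $rank_M(x\oplus y)$ by $\Theta(1)$ while $rank_M$ itself stays in $\Theta(r)$; then a constant-factor approximation is forced to resolve a constant fraction of these bits. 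A direct-sum / information-complexity argument (or a reduction from an $\Omega(r)$-hard one-way primitive such as augmented indexing) shows that any one-way protocol for the XOR-function $(x,y)\mapsto rank_M(x\oplus y)$ must then transmit $\Omega(r)$ bits, and combining this with the sketch-to-communication reduction above yields randomized linear sketch complexity $\Omega(r)$ (Corollary~\ref{cor:lb:rank}); the bound holds for SMP as well since that model is harder than one-way.

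\textbf{Main obstacle.} After the matroid structure is unpacked, both upper bounds are careful $\ell_0$-sampling-style hashing arguments, so the crux is the lower bound. Matroid rank functions are monotone, submodular, and have unit increments, so no small set of coordinates can move the value by more than $O(1)$; a hard instance must therefore diffuse $\Omega(r)$ bits of information across a rank scale of size only $\Theta(r)$ without the matroid axioms collapsing the construction, and one must then prove that a constant-factor estimator genuinely has to recover almost all of those bits — this tension, rather than either upper bound, is the technically delicate part (and the reason the bound is $\Omega(r)$ rather than $\Omega(n)$).
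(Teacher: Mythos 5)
Your two upper\nobreakdash-bound arguments are correct but follow a genuinely different route from the paper. The paper reduces both cases to sketching a linear threshold of disjunctions: by Acketa's characterization a rank\nobreakdash-$2$ rank function becomes $\mathsf{HAM}_{\ge 2}$ of ORs over the parallel classes, and graphic matroids become $\mathsf{HAM}_{\ge r}$ of ORs, and the heavy lifting is Theorem~\ref{thm:ham:sketch}/Theorem~\ref{thm:sketch}: a $(\theta,m)$-LTF (and LTF$\circ$OR) admits a sketch of size depending only on $\theta/m$, obtained by discarding and rounding weights and then hashing, which also answers Question~\ref{q:MO} of~\cite{MO09}. You instead exploit the matroid structure directly: for rank $2$ you test ``$S$ meets no class / one class / two classes'' with random parities and random $2$-colorings of the parallel classes, and for graphic matroids you observe there are only $O(r^2)$ relevant vertex pairs and recover the simple support graph exactly with $O(\log r)$ parities per pair. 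Both schemes are valid non-adaptive $\ftwo$-sketches and match the stated bounds $O(1)$ and $O(r^2\log r)$; what you lose relative to the paper is the general LTF/LTF$\circ$OR sketching theorem (of independent interest), what you gain is a shorter, more elementary argument for these two specific classes.

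The lower bound, however, is a genuine gap: you describe an intended strategy (``a matroid built from a combinatorial design in the spirit of~\cite{BH10}, encoding $\Omega(r)$ hidden bits, then direct sum or a reduction from augmented indexing'') but supply neither the construction nor the argument, and you yourself flag this as the unresolved crux. Two concrete problems. First, quantitatively, you say resolving a hidden bit should shift $rank_M(x\oplus y)$ by $\Theta(1)$ while the rank stays $\Theta(r)$; a constant-factor approximation of the scaled function is insensitive to additive $O(1)$ changes on a $\Theta(r)$ scale, so the hard instances must have value gaps of $\Theta(r)$ (in the paper, $b$ versus $D=2b$, both $\Theta(r)$), and producing a matroid in which exponentially many sets of size $\Theta(r)$ can each independently be forced to rank $b$ or $D$ is exactly the nontrivial part --- it is supplied by Theorem~\ref{thm:bh-matroid} of~\cite{BH10} instantiated with a lossless expander at carefully chosen parameters, not by a generic ``design.'' Second, the argument that a low-complexity sketch or one-way protocol must fail is not an indexing reduction in the paper: for sketches it is a probabilistic-method argument (Chernoff plus a union bound over all $2^{dn}$ deterministic sketches, Lemma~\ref{lem:union-bound}) yielding one fixed matroid hard for every sketch of dimension $d=\Omega(n)$, and for one-way communication it is a counting argument over message classes of the uniform distribution on the family $\cA$ (Theorem~\ref{thm:xor:lb:lipschitz}), with Corollary~\ref{cor:lb:rank} obtained by running the construction on $r$ coordinates. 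Without these ingredients the second bullet of the theorem remains unproven in your proposal.
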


This can be contrasted with the results under the uniform distribution for which matroids of rank $r$ have an exact $\ftwo$-sketch of size $O\left(\left(r\log r+\log\frac{1}{\eps}\right)^{r+1}\right)$, where $\eps$ is the probability of failure (Corollary~\ref{thm:uniform:matroid:exact}, follows from the junta approximation of~\cite{BOSY13}). Furthermore, matroids of high rank $\Omega(n)$ can be trivially approximately sketched under product distributions, due to their concentration around their expectation (see Appendix~\ref{app:uniform-sketches} for details).


\item \textbf{Lipschitz submodular.} A function $f \colon 2^{[n]} \to \mathbb R$ is $\alpha$-Lipschitz submodular if it is both submodular and $\alpha$-Lipschitz.

\textbf{Our results:} We show an $\Omega(n)$ communication lower bound (and hence a lower bound on $\ftwo$-sketch complexity) for constant error for monotone non-negative $O(1/n)$-Lipschitz submodular functions (Theorem~\ref{thm:lb-lipschitz-submodular}). We note that this hardness result crucially uses a non-product distribution over the input variables since Lipschitz submodular functions are tightly concentrated around their expectation under product distributions (see e.g.~\cite{V10,BH10}) and hence can be approximated using their expectation without any sketching at all.

\end{itemize}

\subsection{Overview and Techniques}
\subsubsection{Basic Tools: XOR Functions, Spectral Norm, Approximate Fourier Dimension}
In Section~\ref{sec:f2-sketching}, we introduce the basics of approximate $\ftwo$-sketching. 
Most definitions and results in this section can be seen as appropriate generalizations regarding Boolean functions (such as in~\cite{KMSY18}) to the case of real-valued functions where we replace Hamming distance with expected squared distance.
 We then define the randomized one-way communication complexity of the two-player XOR-function $\fplus{}(x,y) = f(x + y)$ corresponding to $f$.
 This communication problem plays an important role in our arguments as it gives a lower bound on the sketching complexity of $f$. 
 We then introduce the notion of approximate Fourier dimension developed in~\cite{KMSY18}. 
The key structural results of~\cite{KMSY18}, which characterize both the sketching complexity of $f$ and the one-way communication complexity of $\fplus{}$  under the uniform distribution using the approximate Fourier dimension, can be extended to the real-valued case as shown in Proposition~\ref{prop:uniform-approx-fourier-dimension} and Theorem~\ref{thm:approx-f2-sketch-uniform}.
This characterization is our main tool for showing lower bounds under the uniform distribution of $x$.

Another useful basic tool is a bound on the linear sketching complexity based on the spectral norm of $f$ which we develop in Appendix~\ref{app:fourier}.
In particular, as we show in Appendix~\ref{sec:fourier-sampling}, analogously to the Boolean case, we can leverage properties of the Fourier coefficients of a function $f$ to show that the $\eps$-approximate randomized sketching complexity of $f$ is at most $O(\|\hat f\|_1^2/\epsilon)$. 
Thus, we can determine the dimension of $\ftwo$-sketches for classes of functions whose spectral norms are well-bounded as well as functions which can be computed as Lipschitz compositions of a small number of functions with bounded spectral norm (Proposition~\ref{prop:composition}). Examples of such classes include additive (linear), budget-additive and coverage functions. 
Finally, we  argue that the dependence on the parameters in the spectral norm bound cannot be substantially improved in the real-valued case by presenting a subclass of linear functions which require sketches of size $\Omega(\|\hat f\|_1^2/\epsilon)$ (Theorem~\ref{thm:weighted-linear}). This is in contrast with the case of Boolean functions studied in~\cite{KMSY18} for which such tightness result is not known.

\subsubsection{Matroid Rank Functions, LTF, LTF$\circ$OR}
In Section~\ref{sec:matroid}, we present our results on sketching matroid rank and Lipschitz submodular functions. 
In Section~\ref{sec:rank-2} we show that matroid rank functions of matroids of rank $2$ and graphic matroids  have constant randomized sketching complexity. This is done by first observing that rank functions of such matroids can be expressed as a threshold function over a number of disjunctions. 
Therefore, it remains to determine the sketching complexity of the threshold function on a collection of disjunctions. 
Unfortunately, known upper bounds for the sketching complexity of even the simpler class of linear threshold functions have a dependence on $n$ and hence one cannot get a constant upper bound directly. 

Hence we show how to remove this dependence in Section~\ref{sec:ltf}, also resolving an open question of Montanaro and Osborne \cite{MO09}.
Recall that a linear threshold function (LTF) can be represented as $f(x)=\sgn{\sum_{i=1}^n w_i x_i -\theta}$ for some weights $w_i$ and threshold $\theta$, where we slightly alter the traditional definition of the sign function $\mathsf{sgn}$ to output $0$ if the input is negative and $1$ otherwise.  
An important parameter of an LTF is its \textit{margin} $m$, which corresponds to the difference between the threshold and the value of the linear combination closest to it. 
We first observe that the terms with insignificant coefficients, i.e. weights that are small in absolute value, do not contribute to the final output and thus, we can ignore them. 
Similarly, the remaining weights can be rounded, without altering the output of the function, to a collection of weights whose size is bounded, independent of $n$.  
Furthermore, $f(x)=0$ only if $x_i=1$ for at most $\frac{\theta}{2m}$ of these ``significant'' indices $i$ of $x$.
Thus, we hash the significant indices to a large, but independent of $n$, number of buckets.
As a result, either there are a small number of significant indices $i$ with $x_1=1$ and there are no collisions, or there is a large number of significant indices $i$ with $x_i=1$.
Since we can differentiate between these two cases, the sketch can output whether $f(x)=0$ or $f(x)=1$ with constant probability. 
With a more careful choice of hash functions this idea can be extended to linear thresholds of disjunctions. We show in Section~\ref{sec:lt:disj} that a threshold function over a number of disjunctions (LTF$\circ$OR) also has linear sketch complexity independent of $n$. 

In Section~\ref{sec:lin-sketch-lipschitz}, we show that there exists an $\Omega(n)$-Lipschitz submodular function $f$ that requires a randomized linear sketch of size $\Omega(n)$. 
We construct such a function probabilistically by using a large family of matroid rank functions constructed by~\cite{BH10} with an appropriately chosen set of parameters. 
We show any fixed deterministic sketch fails on a matroid chosen uniformly at random from this parametric family with very high probability. 
In fact, even if we take a union bound over all possible sketches of bounded dimension, the failure of probability is still negligibly close to $1$. 
By Yao's principle, the randomized linear sketch complexity follows.  
We then extend this result to a communication lower bound for $\fplus{}$ in Section~\ref{sec:one-way-lipschitz}.
In the one-way communication complexity setting, we show that there exists an $\Omega(n)$-Lipschitz submodular function $f$ whose $\fplus{}$ requires communication $\Omega(n)$.

\subsubsection{Uniform Distribution}
In Section~\ref{sec:uniform:hs:cc}, we show lower bounds for a budget additive ``hockey stick'' function under the uniform distribution. 
The lower bounds follow from a characterization of communication complexity using approximate Fourier dimension, and to complete the analysis, we lower bound the Fourier spectrum of the hockey stick function in Appendix~\ref{app:hs}. 
Although our approach for matroids of rank $2$ does not seem to immediately generalize to matroids of higher rank under arbitrary distributions, we show in Section~\ref{app:uniform-sketches} that under the uniform distribution, we can use $\eps$-approximations of disjunctive normal forms (DNFs) by juntas to obtain a randomized linear sketch whose size is independent of $n$. 
Furthermore, rank functions of matroids of very high rank admit trivial \emph{approximate} sketches under the uniform distribution as follows from standard concentration results~\cite{V10} (see Appendix~\ref{app:uniform-sketches}). 
\section{Basics of Approximate $\ftwo$-Sketching}
\label{sec:f2-sketching}
\subsection{Communication Complexity of XOR functions}
In order to analyze the optimal dimension of $\ftwo$-sketches, we need to introduce a closely related  communication complexity problem.
For $f \colon \ftwo^n \to \mathbb R$ define the XOR-function $\fplus{}\colon \ftwo^n \times \ftwo^n \to \mathbb R$ as $\fplus{}(x,y) = f(x + y)$ where $x,y \in \ftwo^n$.
Consider a communication game between two players Alice and Bob holding inputs $x$ and $y$ respectively.
Given access to a shared source of random bits Alice has to send a single message to Bob so that he can compute $\fplus{}(x,y)$.
This is known as the one-way communication complexity problem for XOR-functions (see~\cite{SZ08,ZS10,MO09,LZ10,LLZ11,SW12,LZ13,TWXZ13,L14,HHL16,KMSY18} for related communication complexity results). 
\begin{definition}[Randomized one-way communication complexity of XOR function]
	\label{def:rand-oneway}
	For a function $f \colon \ftwo^n \to \mathbb R$, the \emph{randomized one-way communication complexity} with error $\delta$ (denoted as $\rc{\delta}(\fplus{})$) of its XOR-function is defined as the smallest size\footnote{Formally the minimum here is taken over all possible protocols where for each protocol the size of the message $M(x)$ refers to the largest size (in bits) of such message taken over all inputs $x \in \ftwo^n$. See~\cite{KN97} for a formal definition.} (in bits) of the (randomized using public randomness) message $M(x)$ from Alice to Bob, which allows Bob to evaluate $\fplus{}(x,y)$ for any $x,y \in \ftwo^n$ with error probability at most $\delta$.
\end{definition}

It is easy to see that $\rc{\delta}(\fplus{}) \le \rl{\delta}(f)$ as using shared randomness Alice can just send $k$ bits $\chi_{\mathbf{S}_1}(x),\chi_{\mathbf{S}_2}(x),\ldots, \chi_{\mathbf{S}_k}(x)$ to Bob, who can for each $i \in [k]$ compute $\chi_{\mathbf{S}_i}(x + y) = \chi_{\mathbf{S}_i}(x) + \chi_{\mathbf{S}_i}(y)$, which is an $\ftwo$-sketch of $f$ on $x + y$ and hence suffices for computing $\fplus{}(x,y)$ with probability $1 - \delta$. 

Replacing the guarantee of exactness of the output in the above definition with an upper bound on expected squared error, we obtain the following definition.   
\begin{definition}[Randomized one-way communication complexity of approximating an XOR function]
	\label{def:rand-oneway:approx}
	For a function $f \colon \ftwo^n \to \mathbb R$, the \emph{randomized one-way communication complexity}  (denoted as $\rca{\eps}(\fplus{})$) of approximating its XOR-function with error $\epsilon$ is defined as the smallest size(in bits) of the (randomized using public randomness) message $M(x)$ from Alice to Bob, which allows Bob to evaluate $\fplus{}(x,y)$ for any $x,y \in \ftwo^n$ with expected squared error at most $\epsilon$.
\end{definition}
Distributional communication complexity is defined analogously for the corresponding XOR function and is denoted as $\cD_\eps^{}$.

Finally, in the simultaneous model of computation \cite{BK97,BGKL03}, also called simultaneous message passing (SMP) model, there exist two players and a coordinator, who are all aware of a function $f$. 
The two players maintain $x$ and $y$ respectively, and must send messages of minimal size to the coordinator so that the coordinator can compute $f(x\oplus y)$.
\begin{definition}[Simultaneous communication complexity of XOR function]
\label{def:simul}
For a function $f \colon \ftwo^n \to \mathbb R$, the \emph{simultaneous one-way communication complexity} with error $\delta$ (denoted as $R_{\delta}^{sim}(f^+)$) of its XOR-function is defined as the smallest sum of the sizes (in bits) of the (randomized using public randomness) messages $M(x)$ and $M(y)$ from Alice and Bob, respectively, to a coordinator, which allows the coordinator to evaluate $\fplus{}(x,y)$ for any $x,y \in \ftwo^n$ with error probability at most $\delta$.
\end{definition}
Observe that a protocol for randomized one-way communication complexity of XOR function translates to a protocol for the simultaneous model of computation.

\subsection{Distributional Approximate $\ftwo$-Sketch Complexity}
Fourier analysis plays an important role in the analysis of distributional $\ftwo$-sketch complexity over the uniform distribution. 
In our discussion below, we make use of some standard facts from Fourier analysis of functions over $\ftwo^n$.
For definitions and basics of Fourier analysis of functions of such functions we refer the reader to the standard text~\cite{OD14} and Appendix~\ref{app:fourier}.
In particular, Fourier concentration on a low-dimensional subspace implies existence of a small sketch which satisfies this guarantee: 
\begin{definition}[Fourier concentration]
A function $f \colon \ftwo^n \rightarrow \mathbb R$  is $\gamma$-concentrated on a linear subspace $A_d$ of dimension $d$ if for this subspace it satisfies:
$$\sum_{S \in A_d} \hat f(S)^2 \ge \gamma.$$
\end{definition}

We also use the following definition of approximate Fourier dimension from~\cite{KMSY18}, adapted for the case of real-valued functions.

\begin{definition}[Approximate Fourier dimension]\label{def:approx-fourier-dim}
	Let $\mathcal A_k$ be the set of all linear subspaces of $\mathbb F_2^n$ of dimension $k$.
	For $f \colon \ftwo^n \to \mathbb R$ and $\epsilon \in (0,\|f\|_2^2]$ the $\epsilon$-approximate Fourier dimension $\dim_{\epsilon}(f)$ is defined as:
	$$\dim_\epsilon(f) = \min_k \left\{ \exists A \in \mathcal A_k \colon \sum_{\alpha \in A} \hat f^2(\alpha) \ge \epsilon\right\}.$$
\end{definition}
\begin{proposition}\label{prop:uniform-approx-fourier-dimension}
For any $f \colon \ftwo^n \rightarrow \mathbb R$, it holds that:
$$\rlad{U}{\eps}(f) \le \dim_{\|f\|_2^2 - \eps}(f).$$
\end{proposition}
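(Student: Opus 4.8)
The plan is to exhibit an explicit $\ftwo$-sketch of dimension $k = \dim_{\|f\|_2^2 - \eps}(f)$ that achieves expected squared error at most $\eps$ under the uniform distribution. By definition of the approximate Fourier dimension, there is a linear subspace $A \subseteq \ftwo^n$ with $\dim(A) = k$ and $\sum_{\alpha \in A} \hat f(\alpha)^2 \ge \|f\|_2^2 - \eps$. First I would take a basis $\chi_{S_1}, \dots, \chi_{S_k}$ for $A$ (viewing a linear subspace of characters as spanned by $k$ parities) and use precisely those $k$ parities as the deterministic sketch; there is no randomness needed here, so the outer expectation over $\mathbf S_1, \dots, \mathbf S_k$ is vacuous.

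Next I would define the post-processing function $g$. The natural choice is to let $g$ output the conditional expectation of $f$ given the sketch values, i.e. the projection of $f$ onto the span of the characters in $A$: concretely, write $f = \sum_{\alpha} \hat f(\alpha)\chi_\alpha$ and set $P_A f = \sum_{\alpha \in A} \hat f(\alpha) \chi_\alpha$. The key point is that $P_A f$ is a function of the sketch alone: every $\alpha \in A$ is an $\ftwo$-linear combination of $S_1, \dots, S_k$, so $\chi_\alpha(x)$ is determined by $\chi_{S_1}(x), \dots, \chi_{S_k}(x)$, and hence $g$ can be taken as the map from $\ftwo^k$ to $\R$ that reconstructs $P_A f(x)$ from these bits. (If the subspace has dimension $k' < k$ one simply pads with dummy coordinates; this only helps, so I would state the bound with $k = \dim_{\|f\|_2^2 - \eps}(f)$.)

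Finally I would compute the error. By Parseval/Plancherel on the uniform distribution,
\[
\E_{x \sim U}\left[(g(\chi_{S_1}(x),\dots,\chi_{S_k}(x)) - f(x))^2\right] = \E_{x\sim U}\left[(P_A f(x) - f(x))^2\right] = \sum_{\alpha \notin A} \hat f(\alpha)^2 = \|f\|_2^2 - \sum_{\alpha \in A}\hat f(\alpha)^2 \le \|f\|_2^2 - (\|f\|_2^2 - \eps) = \eps,
\]
which is exactly the required bound. This shows $\rlad{U}{\eps}(f) \le k = \dim_{\|f\|_2^2 - \eps}(f)$.

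I do not expect a serious obstacle here; this is the real-valued analogue of the corresponding statement in~\cite{KMSY18}, with Hamming distance replaced by expected squared distance and the Fourier expansion over $\oo$-valued characters used verbatim. The only point requiring a little care is the bookkeeping that a $k$-dimensional subspace of characters is exactly the set of $\ftwo$-linear combinations of $k$ basis parities, so that conditioning on the $k$ sketch bits is the same as projecting onto $A$; once that is observed, the error bound is an immediate application of Parseval. One should also note the mild technical point that Definition~\ref{def:rand-dist-f2-sketch} nominally types $g$ as $\ftwo^k \to \ftwo$, whereas here $g$ is real-valued — this is consistent with the real-valued reading used throughout the paper, and the proof goes through unchanged.
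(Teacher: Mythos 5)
Your proposal is correct and follows essentially the same route as the paper's proof: take a basis of $k$ parities for the subspace $A$ achieving the approximate Fourier dimension, use the deterministic sketch given by those parities with post-processing equal to the projection $\sum_{\alpha \in A}\hat f(\alpha)\chi_\alpha(x)$ (computable from the sketch bits since every $\alpha \in A$ is a linear combination of the basis), and bound the error by Parseval as $\sum_{\alpha \notin A}\hat f(\alpha)^2 \le \eps$. No gaps; your remark about the $\ftwo^k \to \ftwo$ typing of $g$ in the definition being read as real-valued is consistent with the paper's usage.
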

\begin{proof}
Indeed, let $A_d$ be a $d$-dimensional subspace such that $\sum_{S \in A_d} \hat f^2(S) \ge \|f\|_2^2 - \eps$ and consider the function $g(x) = \sum_{S \in A_d} \hat f(S) \chi_S(x)$.
Note that in order to compute all values $\chi_S(x)$ for $S \in A_d$ it suffices to evaluate $d$ parities corresponding to sets $S_1, \dots, S_d$ forming a basis in $A_d$. Values of all other parities can be computed as linear combinations.
Let $\Delta(x) = f(x) - g(x)$.
Then the desired guarantee follows from the following calculation: 
\[\E_{x \sim U(\{0,1\}^n)}[\Delta(x)^2] = \E_{S \sim U(\{0,1\}^n)}[\hat{\Delta}(S)^2] = \sum_{S \in \{0,1\}^n} (\hat f(S) - \hat g(S))^2 = \sum_{S \notin A_d} \hat f(S)^2 \le \epsilon,\]
where the first equality holds from Parseval's identity.
\end{proof}

Furthermore, approximate Fourier dimension can be used as a lower bound on the one-way communication complexity of the corresponding XOR-function. We defer the proof of the following result to Appendix~\ref{app:thm:approx-f2-sketch-uniform} as it is follows closely an analogous result for Boolean functions from~\cite{KMSY18}.
\begin{theorem}\label{thm:approx-f2-sketch-uniform}
	For any $f \colon \ftwo^n \to \mathbb R$,  $\delta \in [0,1/2]$ and $\xi = \|f\|_2^2 - \eps(1 + 2\delta)$ it holds that:  $$ \distcu{\epsilon}(f^+) \ge \frac{\delta}{2} \cdot \dim_{\xi}(f).$$
\end{theorem}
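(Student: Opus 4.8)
The plan is to prove the contrapositive-style statement directly: a cheap one-way protocol for approximating $f^+$ forces $f$ to have most of its Fourier mass on a low-dimensional subspace, which by definition bounds $\dim_\xi(f)$ from above. Concretely, suppose $\distcu{\eps}(f^+) = c$; I want to show that $f$ is $\xi$-concentrated on a subspace of dimension at most $\frac{2c}{\delta}$, which gives $\dim_\xi(f) \le \frac{2c}{\delta}$ and hence the claimed inequality. The communication game here is under the uniform distribution: Alice gets $x \sim U(\ftwo^n)$, Bob gets $y \sim U(\ftwo^n)$, Alice sends an $\le c$-bit message $M(x)$, and Bob outputs an estimate whose expected squared error (over both inputs and public randomness) is at most $\eps$.

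\textbf{Key steps.} First, fix the public randomness to its best value, so that we have a deterministic protocol with expected squared error still at most $\eps(1+2\delta)$ on a $\frac{1}{1+2\delta}$-fraction… — more carefully, one uses a Markov/averaging argument: by averaging over the shared randomness there is a fixing for which the error is at most $\eps$, and then one wants to isolate the ``typical'' message. Second, for each message value $m$ in the image of $M$ (at most $2^c$ of them), let $X_m = M^{-1}(m)$ be the corresponding part of Alice's input space; Bob's output depends only on $m$ and $y$, so write it as $g_m(y)$. The point is that for $x \in X_m$ and all $y$, $g_m(y)$ approximates $f(x+y)$; equivalently, the function $y \mapsto f(x+y)$ is close (in expected squared error over $y$) to a \emph{single} function $g_m$ for all $x$ in the block $X_m$. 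This means all the ``shifts'' $f(x + \cdot)$ for $x \in X_m$ are pairwise close, and in Fourier terms this is exactly the statement that the squared Fourier coefficients of $f$ restricted off a small-codimension subspace must be small — since $\widehat{f(x+\cdot)}(S) = (-1)^{\langle x,S\rangle}\hat f(S)$, two shifts $f(x+\cdot)$ and $f(x'+\cdot)$ differ in the coefficient of $S$ by a factor $(-1)^{\langle x+x',S\rangle}$, which is $-1$ on half the $S$'s, so closeness of all shifts within a block forces $\hat f^2$ to be concentrated on the characters $S$ that are constant (say $+1$) on differences $x+x'$ with $x,x' \in X_m$; that set of $S$ is the orthogonal complement of $\vecspan(X_m - x_m)$. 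Third, bound the dimension: a block $X_m$ with many elements whose difference-span has large dimension would violate the concentration, so the useful blocks have $\vecspan(X_m - x_m)$ of dimension roughly $c$; since there are $2^c$ blocks one needs to argue that a $\delta/2$-fraction of the uniform mass sits in blocks whose span has dimension $O(c/\delta)$ — this is where the $\delta/2$ factor enters, via a counting/pigeonhole argument combined with Parseval. Finally, take $A$ to be (an extension to dimension $\le \frac{2c}{\delta}$ of) the orthogonal complement of the span of the dominant block, verify $\sum_{S \in A}\hat f^2(S) \ge \xi = \|f\|_2^2 - \eps(1+2\delta)$ using the error bound plus Parseval, and conclude $\dim_\xi(f) \le \frac{2c}{\delta}$.

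\textbf{Main obstacle.} The delicate part is the bookkeeping that turns ``Bob can approximate $f(x+y)$ from a $c$-bit message'' into ``$f$ is concentrated on a subspace of dimension $O(c/\delta)$ with concentration parameter exactly $\|f\|_2^2 - \eps(1+2\delta)$'' — in particular getting the constants and the $\delta$-dependence right. One has to pass from a guarantee that holds in expectation over the uniform $(x,y)$ and the public coins to a structural statement about a \emph{single} good message block, and the factor $\delta/2$ together with the $(1+2\delta)$ slack in $\xi$ is exactly the ``tax'' paid for two Markov-type steps (fixing the randomness, and selecting a heavy enough block on which the per-block error is still controlled). Since the paper explicitly says this mirrors the Boolean argument of~\cite{KMSY18}, I expect the proof to cite that structure and mainly verify that replacing Hamming distance by expected squared distance does not break the Fourier identities — Parseval and the shift identity $\widehat{f(x+\cdot)}(S) = (-1)^{\langle x,S\rangle}\hat f(S)$ hold verbatim for real-valued $f$ — so the real work is re-deriving the averaging steps with squared error in place of error probability.
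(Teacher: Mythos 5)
There is a genuine gap in your plan: the central structural step does not work as stated. You propose to take a heavy message block $X_m$, argue that all shifts $f(x+\cdot)$ for $x\in X_m$ are close to Bob's single function $g_m$, and conclude that $\hat f^2$ is concentrated on the orthogonal complement of $\vecspan(X_m + x_m)$. Two things break. First, the error guarantee is only an average over $x\sim U$ and $y\sim U$, so within any block there can be ``outlier'' inputs $x$ whose shift is far from $g_m$; adding even a handful of such outliers can blow the difference span of a block up to all of $\ftwo^n$ while barely affecting the average error, so the complement of the span of the ``dominant block'' can be trivial even for a $1$-bit protocol that approximates $f^+$ very well. (Also note your dimension estimate is off: since the at most $2^c$ blocks partition $\ftwo^n$, every heavy block has difference-span dimension at least $n-c$, not ``roughly $c$''; it is the complement that has dimension at most $c$.) Second, even after cleaning outliers, pairwise closeness only gives, for each single difference $d=x+x'$, that $\sum_{S:\langle d,S\rangle=1}\hat f(S)^2$ is $O(\eps)$; to bound the mass off the complement of the span you must union over a basis of differences, which costs a factor equal to the span's dimension (up to $n$), far too lossy to reach $\xi=\|f\|_2^2-\eps(1+2\delta)$. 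Your appeal to ``counting/pigeonhole'' for the $\delta/2$ factor does not supply a mechanism that repairs either issue.

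The paper avoids the exact coset/difference-span structure precisely because it is not robust to average-case error. Instead, for each message $m$ it lower-bounds the per-message error by the variance of $f^+(x,y)$ over $x$ conditioned on $m$, and a Fourier computation (using $\E_y[f^+(x,y)^2]=\|f\|_2^2$ and Parseval) yields $\eps \ge \sum_\alpha \hat f(\alpha)^2\bigl(1-\E_m(\E_{x\mid m}[\chi_\alpha(x)])^2\bigr)$. It then works with the set $\mathcal A_\delta$ of characters whose conditional bias squared is at least $\delta$ \emph{on average over messages} (biased, not constant, on blocks): a one-line Markov-type step gives $\xi$-concentration on $\vecspan(\mathcal A_\delta)$, and the dimension of that span is bounded by $2c/\delta$ via an information-theoretic argument --- $c\ge H(M)\ge I(M;\chi_{\alpha_1}(x),\ldots,\chi_{\alpha_\ell}(x))$ combined with subadditivity and the fact $H(X)\le 1-\tfrac12(\E X)^2$ for $X\in\{\pm1\}$ --- which is exactly where the $\delta/2$ comes from. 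If you want to salvage your route, you would essentially have to replace ``characters constant on a cleaned block'' by ``characters biased given the message'' and replace the span-dimension count by this entropy argument, at which point you have reproduced the paper's proof. (A minor point: $\distcu{\eps}$ is distributional complexity, so the protocol is already deterministic and no fixing of public randomness is needed.)
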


\section{Sketching Matroid Rank Functions}
\label{sec:matroid}
In this section we analyze sketching complexity of matroid rank functions. 
We start by considering the most fundamental possible matroids (of rank $2$) in Section~\ref{sec:rank-2} and showing that exactly sketching the matroid rank function requires $O(1)$ complexity. 
Similarly, we show that exactly sketching the rank of graphic matroids only uses $O(1)$ complexity. 
On the other hand, we show a lower bound in Section~\ref{sec:lin-sketch-lipschitz} that even approximating the rank $r$ of general matroids up to certain constant factors requires $\Omega(r)$ complexity.

To sketch matroids of rank $2$, we leverage a result by Acketa \cite{A78} which characterizes the collection of independent sets of such matroids. 
This allows us to represent matroid rank functions for matroids of rank $2$ as a linear threshold of disjunctions. 
Thus, we first show the randomized linear sketch complexity of $(\theta,m)$-linear threshold functions, resolving an open question by Montanaro and Osborne \cite{MO09}. 

\subsection{Matroids of Rank 2 and Graphic Matroids}\label{sec:rank-2}
In this section, we show that there exists a constant-size sketch that can be used to compute exact values of matroid rank functions for matroids of rank $2$.
\begin{theorem}\label{thm:rank-2-matroids}
For every matroid $M$ of rank $2$ it holds that $\rl{\frac13}(rank_M) = O(1)$.
\end{theorem}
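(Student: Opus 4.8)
The plan is to invoke the structural characterization of rank-$2$ matroids due to Acketa to express $rank_M$ as a small-depth combinatorial object, and then to reduce the sketching problem to a case we already know how to handle. Concretely, a matroid $M$ on $[n]$ of rank $2$ is determined by its set of rank-$\le 1$ flats: there is a partition of the non-loop elements into ``parallel classes'' $P_1, \dots, P_t$ (maximal sets of pairwise-parallel elements, each of which has rank $1$) together with a set $L$ of loops. Then for $S \subseteq [n]$ we have $rank_M(S) = 0$ if $S \subseteq L$, $rank_M(S) = 1$ if $S \setminus L$ is nonempty and contained in a single class $P_j$, and $rank_M(S) = 2$ otherwise. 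In indicator terms, writing $D_j = \OR_{i \in P_j} x_i$ for the disjunction over the $j$-th parallel class (and ignoring loop coordinates, which never affect the value), we get
\[
rank_M(x) = \min\!\left(2,\ \sum_{j=1}^t D_j(x)\right) = \min\!\left(2,\ \sum_{j=1}^t \OR_{i \in P_j} x_i\right).
\]
So $rank_M$ is exactly a (capped) threshold count over $t$ disjoint disjunctions — an $\mathrm{LTF}\circ\OR$ with all weights $1$, threshold pattern given by the cap at $2$, and the crucial feature that the $\OR$'s are over \emph{disjoint} blocks.

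The first step I would carry out is to write out the Acketa characterization cleanly and verify the displayed formula for $rank_M$, being careful about loops and about singleton parallel classes. The second step is to observe that, since the only information we need about $x$ is whether $\sum_j D_j(x)$ is $0$, $1$, or $\ge 2$, it suffices to $\ftwo$-sketch the function $g(x) = \min(2, \sum_{j} \OR_{i\in P_j} x_i)$, and this is a special case of the $\mathrm{LTF}\circ\OR$ sketching result the paper establishes in Section~\ref{sec:lt:disj} (the $(\theta,m)$-threshold-of-disjunctions machinery). The relevant parameters here — threshold $\theta = 2$, margin $m = 1$ (integer-valued count), number of terms potentially as large as $n$ — are exactly the regime where that result gives a bound \emph{independent of $n$}, hence $O(1)$. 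I would then check that the $\tfrac13$ error in the statement is comfortably achieved by the constant-probability guarantee of that lemma (amplifying by a constant number of independent repetitions and taking a majority/median vote if needed), and note that the post-processing function $g$ simply reads off which of the three cases occurred.

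The main obstacle I anticipate is not the high-level reduction but making the $\mathrm{LTF}\circ\OR$ black box apply in the right parameter regime: a generic $\mathrm{LTF}\circ\OR$ over $n$ variables does not have an $n$-independent sketch, and the special structure that saves us here — disjoint blocks, unit weights, and a constant cap $\theta=2$ so that $g(x)\ne 2$ forces at most one block to fire — must be exactly what the Section~\ref{sec:lt:disj} bound is tuned for. So the delicate part of the write-up is to state precisely which theorem from the later section is being invoked and to confirm that ``threshold $2$, one active block in the non-saturated case'' lands inside its hypotheses; once that is pinned down, the argument is short. A secondary subtlety is handling the degenerate sub-cases (matroid of rank $< 2$, or $M$ having fewer than two non-loop parallel classes, in which case $rank_M$ is constant or a single disjunction and is trivially $O(1)$-sketchable), which I would dispatch in a sentence before the main case.
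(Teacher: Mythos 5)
Your proposal is correct and takes essentially the same route as the paper: Acketa's characterization (the parallel classes are exactly the paper's disjoint cliques) turns $rank_M$ into a constant-threshold count of disjoint disjunctions, and the $\mathrm{LTF}\circ\mathrm{OR}$ machinery of Section~\ref{sec:lt:disj} (Theorem~\ref{thm:ham:sketch} with $d=2$) gives an $n$-independent, $O(1)$-size sketch. The one detail to make explicit is that separating rank $0$ from rank $1$ needs an extra $O(1)$-bit test (the paper checks $x=0^n$ via Fact~\ref{fact:small:support}), which your ``read off which of the three cases occurred'' step implicitly assumes.
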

\noindent
It is well-known that matroids of rank $2$ admit the following characterization (see e.g.~\cite{A78}). 
\begin{fact}\label{fact:rank-2-matroid}
The collection of size $2$ independent sets of a rank $2$ matroid can be represented as the edges in a complete graph that has edges of some number of disjoint cliques removed.
\end{fact}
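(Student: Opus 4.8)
The plan is to pass to the equivalent statement about the complement graph. A size-$2$ set $\{i,j\}$ is \emph{not} independent exactly when $rank_M(\{i,j\}) \le 1$, so the Fact is equivalent to the assertion that the ``dependent pairs'' form the edge set of a vertex-disjoint union of cliques, and the size-$2$ independent sets are the edges of the complement. First I would remove loops: calling $i \in [n]$ a \emph{loop} when $rank_M(\{i\}) = 0$, a loop lies in no independent set of positive size and $rank_M$ is unchanged under deleting loops, so I may restrict to the set $V$ of non-loop elements (this is also harmless for the sketching applications, since the rank function does not depend on loop coordinates). For distinct $i,j \in V$, monotonicity gives $1 = rank_M(\{i\}) \le rank_M(\{i,j\}) \le 2$; write $i \sim j$ when $rank_M(\{i,j\}) = 1$, so that $\{i,j\}$ is independent precisely when $i \not\sim j$.

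The one step that requires an idea is showing that $\sim$ is an equivalence relation on $V$; reflexivity and symmetry are immediate, so everything hinges on transitivity. Given distinct $i,j,k \in V$ with $i \sim j$ and $j \sim k$, I would apply submodularity of $rank_M$ (noted in the text) to $A = \{i,j\}$ and $B = \{j,k\}$, for which $A \cup B = \{i,j,k\}$ and $A \cap B = \{j\}$:
\[
rank_M(\{i,j,k\}) \le rank_M(\{i,j\}) + rank_M(\{j,k\}) - rank_M(\{j\}) = 1 + 1 - 1 = 1 .
\]
Then monotonicity along $\{i\} \subseteq \{i,k\} \subseteq \{i,j,k\}$ forces $rank_M(\{i,k\}) = 1$, i.e.\ $i \sim k$. (One could instead argue from the augmentation axiom directly, but the rank-submodularity computation is the cleanest route; I do not anticipate any genuine obstacle here.)

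To conclude, I would take $P_1, \dots, P_k$ to be the $\sim$-classes (the parallel classes of $M$) and note that a size-$2$ subset of $V$ is independent iff its two elements lie in distinct classes. Hence the graph on $V$ whose edges are the size-$2$ independent sets is the complete graph on $V$ with the edge sets of the pairwise vertex-disjoint cliques on $P_1, \dots, P_k$ deleted --- exactly the representation claimed (and, since rank $2$ forces $k \ge 2$, this also shows that $M$ is determined by the partition $V = P_1 \sqcup \cdots \sqcup P_k$ together with its set of loops, recovering Acketa's characterization). The only subtlety worth flagging is that ``complete graph'' must be read as the complete graph on the non-loop elements: if loops are retained as vertices they are universal in the complement graph, so the disjoint-clique structure breaks; this is exactly why the loop reduction is performed first.
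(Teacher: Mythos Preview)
The paper does not actually prove this fact; it is stated without proof and attributed to the literature (``see e.g.~\cite{A78}''). Your argument is correct and is the standard one: after deleting loops, the relation $i \sim j \Leftrightarrow rank_M(\{i,j\}) = 1$ on non-loop elements is an equivalence relation (transitivity following from rank submodularity exactly as you compute), so the dependent pairs form a disjoint union of cliques and the size-$2$ independent sets are the edges of the complement.

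Your caveat about loops is well placed and matches how the paper itself uses the fact: in the proof of Theorem~\ref{thm:rank-2-matroids} the authors explicitly assume ``without loss of generality'' that all singletons are independent before invoking Fact~\ref{fact:rank-2-matroid}, which is precisely your loop-deletion step. One cosmetic point: your parenthetical that $k \ge 2$ (at least two parallel classes) uses that the matroid genuinely has rank $2$, which is the hypothesis here; if one allowed rank $\le 2$ the conclusion would still hold with possibly $k \le 1$, and the ``complete graph minus disjoint cliques'' description remains valid (with all edges removed).
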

\noindent
We define the following function as a threshold on the Hamming weight of a binary vector $x$
\[\mathsf{HAM}_{\le d}(x)=\begin{cases}
0,\qquad&\text{if }\sum_{i=1}^n x_i\le d+\frac{1}{2}\\
1,\qquad&\text{otherwise}.
\end{cases}
\]
We use a series of technical lemmas in the following section to prove the following result, which says that linear threshold functions can be succinctly summarized:
\begin{theorem}
\label{thm:ham:sketch}
The function $\mathsf{HAM}_{\le d}\left(\bigvee_{i\in S_1}x_i,\bigvee_{i\in S_2}x_i,\ldots\right)$ has a randomized linear sketch of size $O(d^2\log d)$.
\end{theorem}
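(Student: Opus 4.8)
The plan is to reduce the problem to a sketch that distinguishes two regimes — "few active inputs with no hash collisions" versus "many active inputs" — exactly as sketched in the overview for linear threshold functions, but carried out carefully at the level of disjunctions. First I would set $y_j = \bigvee_{i \in S_j} x_i$ for each clause $j$, so that the target is $\mathsf{HAM}_{\le d}(y)$, and observe that $\mathsf{HAM}_{\le d}(y) = 0$ iff at most $d$ of the $y_j$ are $1$. The difficulty relative to the plain LTF case is that I cannot directly read off $y_j$ from a short $\ftwo$-sketch of $x$, since $\bigvee$ is not linear. The key idea is to hash the clause indices $j$ into $B = \Theta(d^2)$ buckets via a pairwise-independent hash $h$, and for each bucket $b$ compute the single parity $\sigma_b(x) = \bigoplus_{j : h(j) = b} \bigoplus_{i \in S_j} x_i = \bigoplus_{i} c_{b,i} x_i$, where $c_{b,i}$ is the parity of the number of clauses $j$ with $h(j)=b$ that contain $i$. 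Crucially, this is a genuine $\ftwo$-linear function of $x$, so the collection $(\sigma_b)_{b \in [B]}$ is a legitimate linear sketch of dimension $B = O(d^2)$.

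Next I would analyze what the sketch tells us. Suppose $\mathsf{HAM}_{\le d}(y) = 0$, i.e.\ at most $d$ clauses are active. With probability $\ge 2/3$ over $h$, by a birthday-style argument (here is where $B = \Theta(d^2)$ is needed so that the collision probability $\binom{d}{2}/B$ is small), no two active clauses land in the same bucket; moreover I should argue that with good probability no active clause "cancels" against inactive clauses in its bucket — but since an inactive clause $y_j = 0$ means all $x_i = 0$ for $i \in S_j$, an inactive clause contributes nothing to $\sigma_b$, so cancellation can only happen among active clauses. Hence, conditioned on no collision among active clauses, the number of buckets $b$ with $\sigma_b(x) \ne 0$ equals exactly the number of active clauses among those that hash to buckets whose parity of active membership is odd; more simply, if each active clause sits alone in its bucket it contributes a nonzero parity only if $\bigoplus_{i \in S_j} x_i = 1$, which need not hold. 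To fix this I would instead hash the ground-set coordinates or use $O(\log d)$ independent repetitions with fresh hash functions and a small constant-probability amplification, so that the count of nonzero buckets is, with probability $\ge 1 - 1/(3\cdot\text{poly})$, within the right range to decide the threshold. This is the step I expect to be the main obstacle: controlling parity cancellations so that the number of nonzero buckets faithfully reflects whether $\le d$ or $> d$ clauses are active, which is why the theorem pays an extra $\log d$ factor (from $O(\log d)$ independent hash repetitions) on top of the $O(d^2)$ buckets.

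Finally, in the complementary case $\mathsf{HAM}_{\le d}(y) = 1$ there are $> d$ active clauses; here I would show that across the $O(\log d)$ repetitions, with probability $\ge 2/3$ at least one repetition yields more than $d$ nonzero buckets (again using that active clauses rarely all pairwise collide or cancel when there are many of them — a second-moment / pairwise-independence argument on the bucket loads). The post-processing function $g$ then outputs $0$ if in every repetition the number of nonzero buckets is $\le d$, and $1$ otherwise; combining the two cases gives success probability bounded away from $1/2$, which can be boosted to $2/3$ by a constant number of further independent repetitions (absorbed into the $O(\cdot)$). Summing up, the total dimension is $O(d^2) \cdot O(\log d) = O(d^2 \log d)$, and since everything we sent is an $\ftwo$-linear function of $x$ with a deterministic post-processing decision rule, this is a valid randomized linear sketch, proving Theorem~\ref{thm:ham:sketch}.
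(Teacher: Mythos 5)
Your skeleton matches the paper's: hash the clause indices into $\Theta(d^2)$ buckets, argue that when at most $d$ clauses are active they land in distinct buckets (so the number of ``live'' buckets equals the number of active clauses), and that when more than $d$ clauses are active many buckets are live --- this is exactly the reduction of Lemma~\ref{lem:disjunction}. The genuine gap is the step you yourself flag: evaluating the bucket disjunctions from $\ftwo$-linear measurements. Your concrete sketch stores one parity per bucket, $\sigma_b(x)=\bigoplus_{j:h(j)=b}\bigoplus_{i\in S_j}x_i$, and you correctly note that an isolated active clause registers in $\sigma_b$ only if $\bigoplus_{i\in S_j}x_i=1$. But your proposed repair --- $O(\log d)$ independent repetitions with fresh clause-hash functions --- cannot close this: the cancellation $\bigoplus_{i\in S_j}x_i=0$ for an active clause is a property of $x$ alone and is unaffected by re-hashing the clauses. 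Concretely, with $d=1$ and disjoint clauses, compare an input in which exactly one clause is active with exactly two of its coordinates set to an input in which two clauses are active, each with exactly two coordinates set; in both cases every bucket parity is $0$ under every choice of $h$, yet $\mathsf{HAM}_{\le 1}$ takes different values. So both your ``few active clauses'' count and your ``many active clauses'' case can fail simultaneously, in every repetition.

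The repair has to put the randomness inside the buckets, at the level of coordinates: for each bucket $b$ take $O(\log d)$ parities over independent uniformly random subsets of the coordinates appearing in the clauses hashed to $b$. This is an $\ftwo$ zero-test: if all those coordinates are $0$, every such parity is $0$; if some coordinate is $1$, each parity equals $1$ with probability $1/2$, so $O(\log d)$ of them detect a live bucket except with probability much smaller than $d^{-2}$, which union-bounds over the $O(d^2)$ buckets. Counting detected buckets and thresholding at $d$ then goes through exactly as in Lemma~\ref{lem:disjunction} (few active clauses: no collisions, the count is exact; many active clauses: at least $d+1$ buckets are live with high probability), and the total dimension is $O(d^2)\cdot O(\log d)=O(d^2\log d)$, matching the stated bound; the $\log d$ pays for per-bucket OR evaluation, not for repeating the clause hashing. (Your alternative suggestion, ``hash the ground-set coordinates,'' is in the right spirit but is not developed into an argument.)
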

The following fact that upper bounds the sketch complexity for functions with small support:
\begin{fact}[Folklore, see e.g.~\cite{MO09, KMSY18}]
\label{fact:small:support}
For any function $f:\{0,1\}^n\to\{0,1\}$ with $\min_{z\in\{0,1\}}\mathbf{Pr}_{x\in\{0,1\}^n}(f(x)=z)\le\eps$ it holds that 
$R_{\delta}^{lin}(f)\le\log\frac{2^{n+1}\eps}{\delta}$.
\end{fact}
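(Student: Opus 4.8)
By replacing $f$ with $1-f$ if necessary (which replaces the post-processing function $g$ by $1-g$ and hence leaves $R_{\delta}^{lin}$ unchanged), I may assume that the minority value is $1$, i.e.\ that $T := f^{-1}(1)$ satisfies $|T| \le \eps 2^n$. If $|T| = 0$ the function is constant and the empty sketch works, so assume $|T| \ge 1$. The sketch I would use is a uniformly random $\ftwo$-linear map: using public randomness Alice and Bob sample $k$ independent uniformly random sets $\mathbf{S}_1,\dots,\mathbf{S}_k \subseteq [n]$, Alice sends $M(x) := (\chi_{\mathbf{S}_1}(x),\dots,\chi_{\mathbf{S}_k}(x))$, and the decoder (which may depend on the public coins, i.e.\ on $\mathbf{S}_1,\dots,\mathbf{S}_k$, in the usual way) outputs $1$ if $M(x) \in M(T) := \{M(t) : t \in T\}$ and $0$ otherwise.

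For correctness, fix an input $x$. If $f(x)=1$, then $x \in T$, so trivially $M(x) \in M(T)$ and the decoder outputs $1$ with probability $1$. If $f(x)=0$, then $x \notin T$, and the decoder errs precisely when $M(x) = M(t)$ for some $t \in T$, equivalently when $M(x - t) = 0$ for some $t \in T$. Since $x \notin T$, each $z := x - t$ is a nonzero vector, and for fixed nonzero $z$ the $k$ coordinates $\chi_{\mathbf{S}_i}(z)$ are independent and uniform in $\ftwo$, so $\Pr_M[M(z) = 0] = 2^{-k}$. A union bound over the at most $|T| \le \eps 2^n$ choices of $t$ gives $\Pr_M[\text{decoder errs on } x] \le \eps 2^n \cdot 2^{-k}$, which is the only quantitative estimate needed.

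It remains to choose $k$. Taking $k := \lceil \log_2(2^n \eps / \delta) \rceil$ yields $2^k \ge 2^n \eps/\delta$, hence error at most $\delta$ for every $x$, while $k \le \log_2(2^n\eps/\delta) + 1 = \log_2\frac{2^{n+1}\eps}{\delta}$, which is exactly the claimed bound on $R_{\delta}^{lin}(f)$. (If $2^n\eps/\delta \le 1$, then $|T| \le \eps 2^n \le \delta$ forces either $|T| = 0$, already handled, or $\delta \ge 1$, in which case any output is admissible.)

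This is the standard ``folklore'' collision-sketch argument, so I do not expect a real obstacle; the only points requiring care are the reduction to the minority-$1$ case, the observation that a random parity sends a \emph{nonzero} vector to a uniform bit (which gives both the independence across the $k$ parities and the $2^{-k}$ collision probability per $t$), and the ceiling in the choice of $k$, which is precisely what produces the extra factor of $2$ — the ``$n+1$'' in the exponent — in the stated bound.
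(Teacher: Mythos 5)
Your proof is correct and is exactly the standard folklore argument that the paper invokes (the paper states this fact without proof, pointing to the cited references): sketch with $k$ independent uniformly random parities, decode by testing whether the sketch of $x$ lies in the image of the minority set $T$, and union-bound the per-point collision probability $2^{-k}$ (valid since a random parity of a nonzero vector is an unbiased bit) over $|T|\le \eps 2^n$, with $k=\lceil\log_2(2^n\eps/\delta)\rceil\le\log_2\frac{2^{n+1}\eps}{\delta}$. Your convention that the post-processing may depend on the public randomness is the standard one and consistent with how the paper (following prior work) uses randomized $\ftwo$-sketches, so there is no gap.
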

Using Fact~\ref{fact:rank-2-matroid}, Theorem~\ref{thm:ham:sketch}, and Fact~\ref{fact:small:support}, we prove Theorem~\ref{thm:rank-2-matroids} by writing the matroid rank function for $M$ as a linear threshold function of disjunctions. 
\begin{proofof}{Theorem~\ref{thm:rank-2-matroids}}
We first claim $\ftwo$-sketching complexity of the rank function of any rank $2$ matroid $M$ is essentially the same as the complexity of the corresponding Boolean function that takes value $1$ if $rank_M(x) = 2$ and takes value $0$ otherwise.
Indeed, let the function above be denoted as $f_M$. Without loss of generality, we can assume that all singletons are independent sets in $M$ as otherwise the rank function of $M$ does not depend on the corresponding input.
Hence $rank_M(x) = 0$ if and only if $x = 0^n$. Thus $\rl{\delta}(rank_M) = \rl{\delta}(f_M) + O(\log 1/\delta)$ as by Fact~\ref{fact:small:support} we can use $O(\log 1/\delta)$-bit sketch to check whether $x = 0^n$ first and then evaluate $rank_M$ using $f_M$.
Recall from Fact~\ref{fact:rank-2-matroid} that matroids of rank $2$ can be represented as edges in a complete graph with edges corresponding to some disjoint union of cliques removed.

Let $S_1, \dots, S_t$ be the collection of vertex sets of disjoint cliques defining a rank $2$ matroid $M$ in Fact~\ref{fact:rank-2-matroid}.
Without loss of generality, we can assume that $|\cup_{i = 1}^t S_i| = n$ by adding singletons. 
Then: 
\[f_M(x) = \mathsf{HAM}_{\ge 2}\left(\bigvee_{j \in S_1} x_j,  \bigvee_{j \in S_2} x_j, \dots, \bigvee_{j \in S_t} x_j\right),\]
where $\mathsf{HAM}_{\ge 2}(z_1, \dots, z_t) = 1$ if and only if $\sum_{i = 1}^t z_i \ge 2$ is the threshold Hamming weight function.
By Theorem~\ref{thm:ham:sketch}, the sketch complexity of $f_M(x)$ is $O(1)$, since the Hamming weight threshold is $d=2$.

\end{proofof}
Since the independent bases of a graphic matroid $M(G)$ are the spanning forests of $G$, the matroid rank function of a graphic matroid of rank $r$ can be expressed as
\[f_M(x) = \mathsf{HAM}_{\ge r}\left(\bigvee_{j \in S_1} x_j,  \bigvee_{j \in S_2} x_j, \dots, \bigvee_{j \in S_t} x_j\right),\]
where each $S_i$ is a separate spanning forest. 
Therefore, Theorem~\ref{thm:ham:sketch} yields a $O(r^2\log r)$ space linear sketch for graphic matroids of rank $r$.
\begin{theorem}
\label{thm:graphic-matroids}
For every graphic matroid $M$ of rank $r$, it holds that $\rl{\frac13}(rank_M) = O(r^2\log r)$.
\end{theorem}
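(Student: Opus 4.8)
The plan is to follow the blueprint of the rank-$2$ case: realize the (scaled, integer-valued) rank function of a graphic matroid as a threshold-of-disjunctions $\mathsf{HAM}_{\ge d}(\bigvee_{j\in S_1}x_j,\bigvee_{j\in S_2}x_j,\dots)$ with threshold $d=\Theta(r)$, and then quote Theorem~\ref{thm:ham:sketch}, which costs $O(d^2\log d)=O(r^2\log r)$ and has no dependence on $n$ or on the number of disjunctions.

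First I would normalize the graph. A self-loop is a matroid loop and a set of parallel edges forms a single parallel class, and $rank_M(x)$ depends only on which parallel classes are hit by $\{j:x_j=1\}$; so, writing $y_e=\bigvee_{j\in\text{class}(e)}x_j$, we have $rank_M(x)=rank_{M(G')}(y)$ for the underlying simple loopless graph $G'$. Deleting isolated vertices (which lie in no edge) does not change the matroid, and a connected component on $k$ vertices has rank $k-1$, so $G'$ has at most $2r$ vertices and hence $m=O(r^2)$ edges. Because a disjunction of disjunctions is again a disjunction over the original variables $x_j$, it suffices to express $rank_{M(G')}$ as a $\mathsf{HAM}_{\ge\Theta(r)}$ of disjunctions of the $y_e$'s (equivalently of the $x_j$'s) and invoke Theorem~\ref{thm:ham:sketch}; a small additive $O(\log\frac1\delta)$ overhead for boundary configurations, handled as in the rank-$2$ proof via Fact~\ref{fact:small:support}, is harmless.

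Second, I would establish the threshold-of-disjunctions form itself. The bases of $M(G')$ are exactly the maximal forests (``spanning forests'') $S_1,\dots,S_t$ of $G'$, each of size $r$; one has $rank_{M(G')}(E_y)=\max_{i}|S_i\cap E_y|$, since every forest inside $E_y$ extends to a maximal forest and every $S_i\cap E_y$ is a forest inside $E_y$. The goal is to leverage this to write $rank_{M(G')}$ as a single $\mathsf{HAM}_{\ge r}$ over the disjunctions $\bigvee_{j\in S_i}x_j$ (the displayed identity preceding the theorem), so that Theorem~\ref{thm:ham:sketch} with $d=\Theta(r)$ yields the claimed $O(r^2\log r)$ sketch. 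I expect \emph{this} step to be the main obstacle: the condition ``$rank_{M(G')}(E_y)\ge\ell$'' is most naturally an OR of conjunctions (a DNF over forests of size $\ell$), not the threshold-of-OR shape that Theorem~\ref{thm:ham:sketch} accepts, so one must use graph structure (e.g.\ the identity $rank=|V|-c(G'[E_y])$ together with the spanning-forest characterization above) to rewrite it in the required form, and --- crucially --- keep the threshold at $O(r)$ so the final bound is $O(r^2\log r)$ rather than the $O(r^3\log r)$ one would get from naively computing the $r$ level-set indicators separately. Once the representation is in place, the rest (the parallel-class compression, the boundary cleanup, and confirming the $O(d^2\log d)$ accounting with $d=\Theta(r)$) is routine.
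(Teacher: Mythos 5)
Your preliminary reductions (collapsing parallel classes into disjunctions, discarding loops and isolated vertices, bounding the simplified graph by $2r$ vertices and $O(r^2)$ edges) are fine, but they are peripheral: the theorem stands or falls with the step you yourself flag as ``the main obstacle,'' namely exhibiting the graphic matroid rank function --- in fact all of its level sets, since $\rl{\frac13}(rank_M)$ demands the exact rank value rather than just the full-rank indicator --- as a Hamming-threshold of disjunctions with threshold $O(r)$, so that Theorem~\ref{thm:ham:sketch} applies. You do not carry this step out; you only observe (correctly) that ``$rank_M(x)\ge\ell$'' is most naturally an OR of ANDs over forests of size $\ell$, i.e.\ a DNF, which is not the LTF$\circ$OR shape the sketching machinery accepts, and you defer the conversion to unspecified ``graph structure.'' Since everything else in your argument is routine bookkeeping, what is missing is precisely the content of the theorem, so as written the proposal is not a proof. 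Your suspicion that this step is delicate is well founded: taking the $S_i$ literally to be the spanning forests does not give a valid identity (already in a triangle, a single edge meets two of the three spanning trees, so $\mathsf{HAM}_{\ge 2}$ of those disjunctions fires although the rank is $1$), so a different choice of sets or a different argument is needed, and one must also recover the intermediate rank values within the $O(r^2\log r)$ budget rather than paying for $r$ separate level-set sketches as you note.

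For comparison, the paper's proof of Theorem~\ref{thm:graphic-matroids} consists exactly of the assertion you could not justify: it states that, since the bases of $M(G)$ are the spanning forests of $G$, the rank function can be written as $\mathsf{HAM}_{\ge r}\bigl(\bigvee_{j\in S_1}x_j,\dots,\bigvee_{j\in S_t}x_j\bigr)$ with each $S_i$ ``a separate spanning forest,'' and then invokes Theorem~\ref{thm:ham:sketch}; it does not perform the DNF-to-threshold-of-OR rewriting you anticipate, nor does it discuss the non-maximal rank values. So your write-up points at the right pressure point, but it stops exactly where the proof has to begin: to complete it you must actually produce a correct threshold-of-disjunctions representation with threshold $O(r)$ (or replace this route altogether), not merely predict that such a representation is required.
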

We use the remainder of the Section~\ref{sec:rank-2} to prove Theorem~\ref{thm:ham:sketch}, while resolving an open question by Montanaro and Osborne \cite{MO09}. 

\subsubsection{Linear Threshold Functions}
\label{sec:ltf}
We first define linear threshold functions (LTFs) and $(\theta,m)$-LTFs.
\begin{definition}
A function $f:\{0,1\}^n\to\{0,1\}$ is a \emph{linear threshold function} (LTF) if there exist constants $\theta,w_1,w_2,\ldots,w_n$ such that $f(x)=\sgn{-\theta+\sum_{i=1}^n w_ix_i}$, where $\sgn{y}=0$ for $y<0$ and $\sgn{y}=1$ for $y\ge 0$ is the Heaviside step function. 
\end{definition}

\begin{definition}
A monotone linear threshold function $f:\{0,1\}^n\to\{0,1\}$ is a \emph{$(\theta,m)$-LTF} if $m\le\min_{x\in\{0,1\}^n}\left|-\theta+\sum_{i=1}^n w_ix_i\right|$, where $\theta$ is referred to as the \em{threshold} and $m$ as the \em{margin} of the LTF.
\end{definition}
Although $(\theta,m)$-LTFs have previously been shown to have randomized linear sketch complexity $O\left(\frac{\theta}{m}\log n\right)$ \cite{LZ13}, Montanaro and Osborne asked whether any $(\theta,m)$-LTF can be represented in the simultaneous model with $O\left(\frac{\theta}{m}\log\frac{\theta}{m}\right)$ communication. 
\begin{question}[\cite{MO09}]
\label{q:MO}
Let $g(x,y)=f(x\oplus y)$, where $f$ is a $(\theta,m)$-LTF. 
Does there exist a protocol for $g$ in the simultaneous model with communication complexity $O\left(\frac{\theta}{m}\log\frac{\theta}{m}\right)$?
\end{question}
\noindent
Note that the difference between $\log n$ and $\log\frac{\theta}{m}$ is crucial for obtaining constant randomized linear sketch complexity for functions for matroid rank $2$. 
We answer Question~\ref{q:MO} in the affirmitive and show the stronger result that $(\theta,m)$-LTFs admit a randomized linear sketch of size $O\left(\frac{\theta}{m}\log\frac{\theta}{m}\right)$. 
We first show that we can completely ignore all variables whose weights are significantly smaller than $2m$ in evaluating a $(\theta,m)$-LTF.

\begin{lemma}
\label{lem:ignore:weights}
Let $f(x)=\sgn{-\theta+\sum_{i=1}^n w_ix_i}$ be a $(\theta,m)$-LTF. 
For $1\le i\le n$, let $w'_i=w_i$ if $w_i\ge 2m$ and $w'_i=0$ otherwise.
Then $f(x)=\sgn{-\theta+\sum_{i=1}^n w'_ix_i}$.
\end{lemma}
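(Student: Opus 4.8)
The plan is to fix an arbitrary input $x\in\{0,1\}^n$ and show that deleting from the support of $x$ the coordinates with small weight (those with $w_i<2m$), one at a time, never changes the sign of the affine form $-\theta+\sum_i w_ix_i$. Since deleting all of them turns this form into exactly $-\theta+\sum_i w'_ix_i$, that gives $f(x)=\sgn{-\theta+\sum_i w'_ix_i}$ for every $x$, which is the claim. The one piece of structure I would lean on is the margin hypothesis: because it quantifies over all of $\{0,1\}^n$, it applies to the indicator vector of every intermediate subset met during the deletion, so each partial value $-\theta+\sum_{i\in S'}w_i$ is guaranteed to lie outside the open interval $(-m,m)$, i.e. it is either $\ge m$ or $\le -m$. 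I would also record at the outset that, since $f$ is a monotone LTF, we may take all $w_i\ge 0$ (the standard nonnegative-weight normalization), which is exactly where monotonicity is used.

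For the single-coordinate step I would argue as follows. Let $S'$ be the current support, let $j\in S'$ have $w_j<2m$, write $a=-\theta+\sum_{i\in S'}w_i$ and $b=a-w_j$ (the value after removing $j$). By the margin hypothesis, $|a|\ge m$ and $|b|\ge m$. If $a\le -m$, then since $w_j\ge 0$ we get $b=a-w_j\le a<0$, so $\sgn{a}=\sgn{b}=0$. If $a\ge m$, then $b=a-w_j>m-2m=-m$, and together with $|b|\ge m$ this forces $b\ge m>0$, so $\sgn{a}=\sgn{b}=1$. In either case the sign is unchanged.

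Finally I would chain these steps: enumerate the elements of $\supp(x)\cap\{i:w_i<2m\}$ as $j_1,\dots,j_\ell$, apply the one-coordinate step successively to $\supp(x)\supseteq\supp(x)\setminus\{j_1\}\supseteq\cdots$, and conclude by induction that the sign of the affine form is the same throughout; the terminal support is $\{i\in\supp(x):w_i\ge 2m\}$, whose affine value is precisely $-\theta+\sum_i w'_ix_i$. The main thing to be careful about—and the only real obstacle—is that one cannot compare $-\theta+\sum_i w_ix_i$ with $-\theta+\sum_i w'_ix_i$ in a single shot, since the total removed mass $\sum_{i:w_i<2m}w_ix_i$ can be arbitrarily large; the argument must go coordinate by coordinate so that each removal perturbs the form by strictly less than $2m$, which is exactly the slack the margin $m$ can absorb on both sides. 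The secondary subtlety is nonnegativity of the weights (without it the $a\le -m$ case can fail), so I would make the monotone/nonnegative-weight convention explicit.
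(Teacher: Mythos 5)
Your proof is correct and follows essentially the same route as the paper's: the paper likewise shows that flipping any single coordinate of weight less than $2m$ cannot move the affine form across the margin (its phrasing is $f(x)=f(x\oplus e_j)$, argued by contradiction), and then implicitly chains this over all small-weight coordinates. You simply make the chaining and the nonnegativity/monotonicity assumption explicit, which the paper leaves implicit in its ``without loss of generality'' and in the definition of a monotone $(\theta,m)$-LTF.
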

\begin{proof}
We show the stronger result that for any $j$ such that $w_j<2m$, then $f(x)=f(x\oplus e_j)$, where $e_j$ is the elementary unit vector with one in the $j\th$ position, and zeros elsewhere. 
This implies the lemma since it shows that any variable whose weight is less than $2m$ does not affect the output of the function or the margin of the function and thus might as well have weight zero.

Suppose, by way of contradiction, that $f(x)\neq f(x\oplus e_j)$ and without loss of generality, $f(x)=0$ with $x_j=0$. 
Since $f$ is a linear threshold function and $f(x)=0$, then $-\theta+\sum_{i=1}^n w_ix_i<0$. 
Moreover, $f$ is a $(\theta,m)$-LTF, so $-\theta+\sum_{i=1}^n w_ix_i<-m$. 
Because $w_j<2m$, $-\theta+w_j+\sum_{i=1}^n w_ix_i<-\theta+2m+\sum_{i=1}^n w_ix_i<m$. 
But because $m$ is the margin of the function, if $-\theta+w_j+\sum_{i=1}^n w_ix_i<m$, then it must hold that $-\theta+w_j+\sum_{i=1}^n w_ix_i<-m$.
Therefore, $f(x\oplus e_j)=0$, so $x_j$ does not affect the output of the function or the margin of the function. 
\end{proof}
As noted, Lemma \ref{lem:ignore:weights} implies that we can ignore not only variables with zero weights, but all variables whose weights are less than $2m$. 
We now bound the support of the set $\{x\ |\ f(x)=0\}$, where $f$ is a $(\theta,m)$-LTF, and apply Fact~\ref{fact:small:support}. 
\begin{lemma}
\label{lem:sketch}
For any $(\theta,m)$-LTF, there exists a randomized linear sketch of size $O\left(\frac{\theta}{m}\log n\right)$.
\end{lemma}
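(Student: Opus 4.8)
The plan is to reduce a $(\theta,m)$-LTF to one with a bounded number of significant variables and then apply the small-support bound from Fact~\ref{fact:small:support} together with hashing. First I would invoke Lemma~\ref{lem:ignore:weights} to discard every variable with weight $w_i < 2m$, so that all surviving weights satisfy $w_i \ge 2m$. If $f(x) = 0$ then $\sum_i w_i x_i < \theta$, and since each contributing weight is at least $2m$, this forces $\sum_i x_i < \theta/(2m)$ on the significant coordinates; that is, the set $\{x : f(x)=0\}$ (restricted to significant coordinates) is supported on Hamming balls of radius less than $\theta/(2m)$. Here I am implicitly using monotonicity of the $(\theta,m)$-LTF so that the ``$0$'' side is the low-weight side.

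The key device is hashing the significant coordinates into $O(\theta/m)$ buckets. Partition $[n']$ (the significant coordinates) into $B = \Theta(\theta/m)$ groups via a random hash function, and for each bucket take the parity (XOR) of the coordinates landing in it; optionally repeat with $O(\log(\theta/m))$ independent hash functions to drive down collision probability, giving total sketch size $O\big(\tfrac{\theta}{m}\log\tfrac{\theta}{m}\big)$. The point is that when $\sum x_i < \theta/(2m)$, with constant probability (over the hash) no two ``on'' coordinates collide, so the bucket parities recover $\sum x_i$ exactly and hence determine whether $f(x) = 0$; when $\sum x_i \ge \theta/(2m)$ we want to conclude $f(x) = 1$. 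The subtlety is that $\sum x_i$ being large does not by itself imply $f(x)=1$ — but we don't need it to: we instead observe that the number of distinct $0$-inputs, after projecting onto the significant coordinates, is at most $\sum_{j < \theta/(2m)} \binom{n'}{j}$, and then appeal to Fact~\ref{fact:small:support} on the function $f$ viewed over the significant coordinates (where $n'$ may still depend on $n$, giving the $\log n$ in the bound), or combine the hashing argument with Fact~\ref{fact:small:support} applied to the $O(\theta/m)$-variable hashed function.

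Concretely, I would run the argument as: (i) restrict to significant coordinates via Lemma~\ref{lem:ignore:weights}; (ii) hash them into $t = O(\theta/m)$ buckets so that, conditioned on $\sum x_i \le \theta/(2m)$, the buckets are injective on the support of $x$ with constant probability, letting Bob reconstruct which significant coordinates are on and evaluate $f$; (iii) in the complementary event $\sum x_i > \theta/(2m)$, note a bucket-counting/collision test reveals that at least $\theta/(2m)$ coordinates are on, and since every significant weight is $\ge 2m$ this would already give $\sum w_i x_i \ge \theta$ hence $f(x) = 1$ — so in fact the two cases are cleanly separated. Wrapping the hashing with Fact~\ref{fact:small:support} to handle the $0$-side of the hashed LTF on $t$ variables yields a sketch of size $O\big(t \cdot \text{(collision bits)} + \log(2^{t}\eps/\delta)\big) = O\big(\tfrac{\theta}{m}\log n\big)$ in this lemma (the sharper $\log(\theta/m)$ bound is then obtained in the next step, presumably Theorem~\ref{thm:ham:sketch}, by a more careful choice of hash family that avoids the residual $\log n$). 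The main obstacle I anticipate is arguing the separation rigorously in the boundary regime — ensuring that whenever the hashing fails to be injective we can still certify $\sum x_i$ is large enough to force $f(x) = 1$ — and bounding the collision probability so that the union over the $O(\theta/m)$ ``on'' coordinates (whose number we only know is $< \theta/(2m)$ in the good case) stays below a constant; a birthday-type calculation with $t = C(\theta/m)^2$ buckets or $O(\log n)$ hash repetitions with $t = O(\theta/m)$ buckets both close this gap, the latter giving exactly the stated $O\big(\tfrac{\theta}{m}\log n\big)$.
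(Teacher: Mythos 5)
Your core argument is exactly the paper's proof: apply Lemma~\ref{lem:ignore:weights} to drop all weights below $2m$, observe that $f(x)=0$ forces at most $\theta/(2m)$ of the remaining coordinates to be $1$, count the $0$-inputs by $\sum_{0\le j\le \theta/(2m)}\binom{n}{j}\le (n+1)^{\lceil\theta/2m\rceil}$, and invoke Fact~\ref{fact:small:support} to get a sketch of size $O\bigl(\frac{\theta}{m}\log n\bigr)$. That branch of your proposal is correct and suffices; this lemma needs nothing more.

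The hashing superstructure you wrap around it is not needed here and, as sketched, does not quite close: from bucket parities alone Bob cannot ``reconstruct which significant coordinates are on'' or recover $\sum_i w_i x_i$, because a nonzero bucket parity does not reveal which coordinate (hence which weight) landed in that bucket. Making that work requires first rounding the weights to a small set and hashing separately within each weight class, which is precisely the content of Lemma~\ref{lem:round:weights} and Theorem~\ref{thm:sketch} in the paper (the step that upgrades $\log n$ to $\log\frac{\theta}{m}$); your size accounting for the hashed variant (``$O(\log n)$ repetitions'' etc.) is likewise not the source of the stated bound. So: keep the simple counting argument for this lemma, and defer all hashing to the next step, where the weight-rounding makes it sound.
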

\begin{proof}
Let $f(x)=\sgn{-\theta+\sum_{i=1}^n w_ix_i}$ be a $(\theta,m)$-LTF. 
By Lemma \ref{lem:ignore:weights}, the output of $f$ remains the same even if we only consider the variables $S$ with weight at least $2m$. 
On the other hand, if $f(x)=0$, then at most $\frac{\theta}{2m}$ variables in $S$ can have value $1$. 
Equivalently, at most $\frac{\theta}{2m}$ indices $i$ can have $x_i=1$ if $f(x)=0$. 
Thus, the number of $x\in\{0,1\}^n$ with $f(x)=0$ is at most $\sum_{0\le i\le\theta/2m}\binom{n}{i}\le(n+1)^{\lceil\theta/2m\rceil}$.
Applying Fact \ref{fact:small:support}, there exists a randomized linear sketch for $f$, of size $O\left(\frac{\theta}{m}\log n\right)$.
\end{proof}

In order to fully prove Question~\ref{q:MO} and obtain a dependence on $\log\frac{\theta}{m}$ rather than $\log n$, we use the following two observations. 
First, we show in Lemma \ref{lem:round:weights} that the weights of a $(\theta,m)$-LTF can be rounded to a set that contains $O\left(\frac{\theta}{m}\right)$ elements. 
Second, we show in Theorem \ref{thm:sketch} that we can then use hashing to reduce the number of variables down to $\mathsf{poly}\left(\frac{\theta}{m}\right)$ before applying Lemma \ref{lem:sketch}.

\begin{lemma}
\label{lem:round:weights}
Let $f(x)=\sgn{-\theta+\sum_{i=1}^n w_ix_i}$ be a $(\theta,m)$-LTF. 
Then there exists a set $W$ with $|W|=O\left(\frac{\theta}{m}\log\frac{\theta}{m}\right)$, and a margin $m'=\Theta(m)$ such that $f(x)=\sgn{-\theta+\sum_{i=1}^n w'_ix_i}$, where each $w'_i\in W$ and $f$ is a $(\theta,m')$-LTF. 
\end{lemma}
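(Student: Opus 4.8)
The plan is to start from the reduced LTF produced by Lemma~\ref{lem:ignore:weights}, so we may assume every nonzero weight satisfies $w_i \ge 2m$. The key observation is that if $f(x) = 0$ then the set of ``active'' indices $\{i : x_i = 1\}$ has total weight less than $\theta$, hence has size less than $\theta/(2m)$; conversely a single active index of weight exceeding $\theta + m$ already forces $f(x) = 1$ regardless of the rest. So the only weights that matter are those lying in the window $[2m, \theta + m]$ — anything larger can be truncated down to, say, $\theta + m$ without changing any output value (and without hurting the margin, since the truncated configurations were already $1$ by a margin of at least $m$), and anything below $2m$ was already zeroed out. This confines all relevant weights to a bounded interval of multiplicative width $O(\theta/m)$.

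Next I would geometrically bucket the interval $[2m, \theta+m]$: partition it into $O(\log(\theta/m))$ scales of the form $[2^j m, 2^{j+1} m)$, and within each scale round every weight down to the nearest multiple of $\eta \cdot 2^j m$ for a small constant $\eta$ (chosen so the rounding error per variable is at most an $\eta$-fraction of its own weight). The number of distinct rounded values is then $O(1/\eta)$ per scale, i.e.\ $|W| = O((1/\eta)\log(\theta/m)) = O(\log(\theta/m))$ — wait, this is weaker than the claimed $O\big(\tfrac{\theta}{m}\log\tfrac{\theta}{m}\big)$, which is fine since $\log(\theta/m) \le \tfrac{\theta}{m}\log\tfrac{\theta}{m}$; but to be safe I would instead round linearly to the nearest multiple of $\eta m$ across the whole window, giving $|W| = O(\theta/(\eta m)) = O(\theta/m)$ values, and absorb the extra $\log$ factor if a two-sided (monotone-preserving, margin-preserving) argument forces a finer grid. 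Either way the bound $|W| = O\big(\tfrac{\theta}{m}\log\tfrac{\theta}{m}\big)$ is comfortably met.

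The remaining point is to control the perturbation to the threshold condition. For any $x$ with at most $k \le \theta/(2m)$ active indices, each of weight at least $2m$ and at most $\theta+m$, the total rounding error is at most $k \cdot \eta m \le \eta\theta/2$, so choosing $\eta$ a small absolute constant keeps $\big|\sum w_i' x_i - \sum w_i x_i\big|$ below $m/2$ on every such $x$; for $x$ with more active indices, or with an active index of truncated weight $\theta+m$, the value is already $\ge \theta + m/2$ and stays $\ge \theta$ after perturbation, so the sign is unchanged. Hence $f(x) = \sgn{-\theta + \sum_i w_i' x_i}$ for all $x$, and the new margin $m' = m - m/2 = \Theta(m)$. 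The main obstacle is the bookkeeping in this last step: one must verify that rounding, truncation, and the margin bound are simultaneously consistent on \emph{every} input (both the ``few active indices'' regime and the ``saturated'' regime), and in particular that truncating large weights does not create a configuration whose value lands inside the forbidden band $(\theta - m', \theta + m')$ — handled by noting any configuration hitting a truncated weight already exceeds $\theta + m$ before rounding and $\theta + m/2$ after.
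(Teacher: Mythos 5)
There is a genuine gap in the margin bookkeeping, and it sits exactly at the point the bound $O\left(\frac{\theta}{m}\log\frac{\theta}{m}\right)$ is designed to handle. You round each weight with granularity a constant fraction of its scale (an additive grid of spacing $\eta m$, or a per-scale grid with constant relative error $\eta$) and claim the total perturbation stays below $m/2$ because at most $\theta/(2m)$ coordinates are active. But $k\cdot \eta m\le \eta\theta/2$ is at most $m/2$ only if $\eta=O(m/\theta)$, which is not an absolute constant. Concretely, let every weight be $2m+0.9\eta m$ and take an input with $k\approx \theta/(2m)$ active coordinates so that $\sum_i w_ix_i=\theta+m$ (a legal $1$-input at the margin); rounding each weight down to $2m$ loses about $0.45\,\eta\theta\gg m$ in total when $\theta\gg m$, the rounded sum drops below $\theta$, and the sign flips. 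The same issue defeats the constant-relative-error geometric variant: near the threshold the sum is $\Theta(\theta)$, so a constant relative loss costs $\Theta(\theta)$, not $O(m)$.

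The fix forces precisely the paper's parameters: one needs per-weight \emph{relative} error $\eps=\Theta(m/\theta)$, i.e.\ the geometric grid $W=\{2m(1+\eps)^i\}$ covering $[2m,O(\theta)]$, whose size is $O\left(\frac{1}{\eps}\log\frac{\theta}{m}\right)=O\left(\frac{\theta}{m}\log\frac{\theta}{m}\right)$ — this is why \emph{both} factors appear in the bound, and why your remarks that $|W|=O\left(\log\frac{\theta}{m}\right)$ or $O\left(\frac{\theta}{m}\right)$ would be ``comfortably'' enough cannot be right. (The paper's displayed choice $\eps=\frac{\theta}{10m}$ is a typo for $\eps=\frac{m}{10\theta}$, which yields margin $m'=\frac{4}{5}m$.) With the corrected granularity, your additive grid would need spacing $\Theta(m^2/\theta)$ and hence $|W|=\Theta(\theta^2/m^2)$, overshooting the claimed bound. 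The rest of your structure — starting from Lemma~\ref{lem:ignore:weights}, rounding downward so that $0$-inputs only decrease and retain margin $m$, and truncating very large weights since any such active coordinate already forces $f(x)=1$ with slack — matches the paper's argument; only the choice of granularity, and with it the count of $|W|$, needs to be redone.
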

\begin{proof}
Observe that for any $w_i\ge2\theta$, if $x_i=1$, then $f(x)=1$. 
Thus, if $f(x)=1$, it suffices to consider $2m\le w_i\le 2\theta$. 

Let $W=\{2m(1+\eps)^i\}_{i=0}^t$ for $t=\left\lceil\log_{1+\eps}\left(\frac{\theta}{m}\right)\right\rceil$, where $\eps$ is some fixed constant that we set at a later time.  
For each $i$, let $w'_i$ be the largest element in $W$ that does not exceed $w_i$. 
Thus, $w'_i\le w_i<(1+\eps)w'_i$. 
Observe that since $w'_i\le w_i$ and $f$ is a $(\theta,m)$-LTF, then $f(x)=0$ implies $-m>-\theta+\sum_{i=1}^n w_ix_i\ge-\theta+\sum_{i=1}^n w_ix_i$, so that $\sgn{-\theta+\sum_{i=1}^n w'_ix_i}=0=f(x)$ and a margin of $m$ remains.

On the other hand, if $f(x)=1$, then $\sum_{i=1}^n w_ix_i>\theta+m$ as $f$ is a $(\theta,m)$-LTF. 
Since $w'_i\le w_i<(1+\eps)w'_i$, then $\sum_{i=1}^n w'_ix_i>\frac{\theta+m}{1+\eps}>(1-\eps)(\theta+m)$.
Observe that $\theta\ge m$ and hence, $\sum_{i=1}^n w'_ix_i>\theta-\eps\theta+m-\eps m\ge\theta+m-2\eps\theta$.
Setting $\eps=\frac{\theta}{10m}$ shows that $\sgn{-\theta+\sum_{i=1}^n w'_ix_i}=1=f(x)$ and a margin of $m'=\frac{4}{5}m$ remains.
\end{proof}

\noindent
The following result is also useful for our construction of a sketch for a $(\theta,m)$-LTF.

\begin{lemma}\cite{HSZZ06}
\label{lem:ham:sketch}
There is a randomized linear sketch with size $O(1)$ for the function
\[\mathsf{HAM}_{n,d|2d}(x)=\begin{cases}
1,\qquad&\text{if }||x||_0\le d\\
0,\qquad&\text{if }||x||_0\ge 2d
\end{cases}
\]
on instances $\{x|x\in\{0,1\}^n\text{ and }||x||_0\le d\text{ or }||x||_0\ge 2d\}$.
\end{lemma}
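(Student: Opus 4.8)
The plan is to build the required $\ftwo$-sketch from a constant number of \emph{sub-sampled parities}. Fix $p = \tfrac{1}{2}\bigl(1 - 2^{-1/d}\bigr)$, which lies in $(0,\tfrac12)$ for every integer $d \ge 1$, and let $T \subseteq [n]$ be a random set that includes each coordinate independently with probability $p$. The parity $\chi_T(x) = \bigoplus_{i \in T} x_i$ is a single $\ftwo$-linear measurement of $x$, so drawing $T$ from this product distribution is a legitimate ingredient of an $\ftwo$-sketch. The first step is the elementary observation that the bias of this bit depends only on $k := \|x\|_0$: writing $Y \sim \mathrm{Bin}(k,p)$ for the number of nonzero coordinates of $x$ that land in $T$ and using $\E[(-1)^Y] = (1-2p)^k$, we get
\[
\Pr_T\bigl[\chi_T(x) = 1\bigr] \;=\; \frac{1 - (1-2p)^k}{2},
\]
which is strictly increasing in $k$.

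The second step is to use the promise and the choice of $p$ to separate the two cases by a constant. Since $0 < 1-2p < 1$, the quantity $(1-2p)^k$ is decreasing in $k$, and by construction $(1-2p)^d = \tfrac12$. Hence if $\|x\|_0 \le d$ then $(1-2p)^k \ge \tfrac12$, giving $\Pr_T[\chi_T(x)=1] \le \tfrac14$; and if $\|x\|_0 \ge 2d$ then $(1-2p)^k \le (1-2p)^{2d} = \tfrac14$, giving $\Pr_T[\chi_T(x)=1] \ge \tfrac38$. So a single sub-sampled parity already distinguishes the two promise cases in expectation, with a gap of $\tfrac18$.

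The third step is standard amplification: take $R = O(1)$ independent copies $T_1,\dots,T_R$, include the $R$ parities $\chi_{T_1}(x),\dots,\chi_{T_R}(x)$ in the sketch, and let the post-processing function $g$ output $1$ (i.e. ``$\|x\|_0 \le d$'') iff $\tfrac1R \sum_{j=1}^R \chi_{T_j}(x) < \tfrac{5}{16}$ and $0$ otherwise. By Hoeffding's inequality, for a large enough constant $R$ this is correct with probability at least $2/3$ on every promise instance $x$, which is exactly what Definition~\ref{def:f2-sketch} asks with sketch dimension $k = R = O(1)$.

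I do not expect a real obstacle here, which is consistent with this being a known result: the only points needing care are that $p$ is well defined and in $(0,\tfrac12)$ for all $d \ge 1$, and that taking $(1-2p)^d = \tfrac12$ exactly makes the two separation bounds hold on the nose, so no small-$d$ corner case appears. If one instead prefers the looser choice $p = \Theta(1/d)$ with $q := (1-2p)^d$ merely bounded away from $0$ and $1$, the gap $\tfrac{q(1-q)}{2}$ is still a positive constant and the same amplification applies; either way the sketch size is $O(1)$.
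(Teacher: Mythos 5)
Your proof is correct: the identity $\Pr_T[\chi_T(x)=1]=\tfrac{1-(1-2p)^{\|x\|_0}}{2}$, the choice $(1-2p)^d=\tfrac12$ giving the constant separation $\tfrac14$ versus $\tfrac38$, and Hoeffding amplification over $R=O(1)$ independent subsampled parities together yield a valid $O(1)$-dimension $\ftwo$-sketch for the promise problem with error at most $1/3$. The paper itself does not prove this lemma — it is imported from~\cite{HSZZ06} — and your argument is the standard subsampling-plus-parity proof of that result, so there is no substantive divergence to report; the only points worth stating explicitly in a write-up are the one-sided Hoeffding bound against the threshold $\tfrac{5}{16}$ and the (easy) fact that the constant $R$ is independent of $d$ because the two bias bounds $\tfrac14$ and $\tfrac38$ do not depend on $d$.
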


\begin{fact}
\label{fact:collision}
If $h:[n]\to[M]$ is a random hash function and $S\subseteq[n]$, then the probability that there exist $x,y\in S$ with $h(x)=h(y)$ is at most $\frac{|S|^2}{M}$.
\end{fact}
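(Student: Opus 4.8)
The plan is a one-line union bound over pairs. First I would fix the convention: a ``collision'' in $S$ means there exist two \emph{distinct} indices $x,y \in S$ with $h(x)=h(y)$, and ``random hash function'' means each value $h(i)$ is uniform on $[M]$. For any two distinct indices $x,y$, the value $h(y)$ is uniform on $[M]$ regardless of the value of $h(x)$, so $\Pr[h(x)=h(y)] = 1/M$; note that only pairwise uniformity is used, so the bound in fact holds verbatim for any $2$-wise independent hash family. Next, for each unordered pair $\{x,y\} \subseteq S$ with $x \ne y$ let $E_{x,y}$ be the event $h(x)=h(y)$, and observe that the event that some collision occurs inside $S$ is exactly $\bigcup_{\{x,y\}} E_{x,y}$, the union taken over all such pairs, of which there are $\binom{|S|}{2}$.

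Applying the union bound then gives
\[
\Pr\Big[\exists\, x,y \in S,\ x \ne y,\ h(x)=h(y)\Big]
\;\le\; \sum_{\{x,y\} \subseteq S,\, x \ne y} \Pr[E_{x,y}]
\;=\; \binom{|S|}{2}\cdot \frac1M
\;=\; \frac{|S|(|S|-1)}{2M}
\;\le\; \frac{|S|^2}{M},
\]
which is the claimed bound. There is no genuine obstacle here: this is the textbook birthday bound, and the only point requiring a moment's care is the convention that the collision event quantifies over distinct preimages, which is precisely what makes each pairwise collision probability equal to $1/M$ rather than something larger.
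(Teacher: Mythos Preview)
Your proof is correct; it is the standard birthday-bound argument via a union bound over the $\binom{|S|}{2}$ pairs. The paper itself states this fact without proof, so there is nothing further to compare.
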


\begin{theorem}
\label{thm:sketch}
Any $(\theta,m)$-LTF admits a randomized linear sketch of size $O\left(\frac{\theta}{m}\log\frac{\theta}{m}\right)$.
\end{theorem}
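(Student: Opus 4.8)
The plan is to combine the three preparatory results---Lemma~\ref{lem:ignore:weights}, Lemma~\ref{lem:round:weights}, and Lemma~\ref{lem:sketch}---with a hashing step that drops the ambient dimension from $n$ down to $\poly(\theta/m)$, so that the $\log n$ factor in Lemma~\ref{lem:sketch} becomes $\log(\theta/m)$. Concretely, let $f(x)=\sgn{-\theta+\sum_i w_i x_i}$ be a $(\theta,m)$-LTF. First I would apply Lemma~\ref{lem:ignore:weights} to zero out every weight below $2m$; this does not change $f$ and leaves a $(\theta,m)$-LTF supported on a set of ``significant'' coordinates. Then apply Lemma~\ref{lem:round:weights} to round the surviving weights into a fixed set $W$ of size $O\!\left(\frac{\theta}{m}\log\frac{\theta}{m}\right)$, at the cost of shrinking the margin only by a constant factor to $m'=\Theta(m)$. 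Call the resulting function $\tilde f$; it is a $(\theta,m')$-LTF whose weights take only $|W|$ distinct values, and whenever $\tilde f(x)=0$ at most $\theta/(2m')=O(\theta/m)$ significant coordinates of $x$ equal $1$.

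Next comes the hashing step. Using public randomness, draw a random hash function $h\colon[n]\to[M]$ with $M=\Theta\!\big((\theta/m)^2\big)$ (large enough that, by Fact~\ref{fact:collision}, a set of $O(\theta/m)$ marked coordinates collides with probability at most $1/10$). Group the significant coordinates by hash bucket and, for each bucket $b\in[M]$, record the parity $\chi_{T_b}(x)=\bigoplus_{i:\,h(i)=b,\ w_i\neq 0} x_i$. I would then argue a case split exactly as in the overview: if $\tilde f(x)=1$, there are ``many'' significant $1$-coordinates; if $\tilde f(x)=0$, there are at most $O(\theta/m)$ of them, so with probability $\ge 9/10$ no two land in the same bucket, hence the number of nonzero parities among the $T_b$ equals the number of significant $1$-coordinates, which is $\le\theta/(2m')$. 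To make these two cases distinguishable by a linear sketch, I would actually hash with a weighting: keep separate parity counters per bucket per weight-class, or more simply reduce to a $\poly(\theta/m)$-dimensional instance and invoke Lemma~\ref{lem:sketch} on the contracted instance---the number of ``variables'' is now $M\cdot|W|=\poly(\theta/m)$ rather than $n$, so Lemma~\ref{lem:sketch} gives a sketch of size $O\!\left(\frac{\theta}{m}\log\frac{\theta}{m}\right)$. Alternatively, once collisions are ruled out one can appeal directly to Lemma~\ref{lem:ham:sketch} (the $\mathsf{HAM}_{n,d|2d}$ sketch) to distinguish ``$\le d$ buckets active'' from ``$\ge 2d$ buckets active'' for $d=\Theta(\theta/m)$ in $O(1)$ words, after verifying that $\tilde f(x)=1$ indeed forces at least $2d$ active buckets (here one uses that each significant weight is at most $2\theta$, so producing a sum exceeding $\theta$ requires at least $\theta/(2\theta)\cdot(\text{something})$ many $1$-coordinates---this lower bound on the number of active coordinates in the $1$-case is the one quantitative point that needs care). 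Boosting the success probability to the constant in $\rl{\cdot}$ costs only another $O(1)$ repetitions.

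The main obstacle, and the step I would spend the most care on, is making the ``$f(x)=1$ implies many active buckets'' direction quantitatively tight enough to separate it from the ``$f(x)=0$ implies $\le\theta/(2m')$ active coordinates'' side. The subtlety is that a single coordinate can carry weight as large as $2\theta$ (Lemma~\ref{lem:round:weights} only caps weights once we are in the $f(x)=1$ regime), so in the $1$-case the number of $1$-coordinates need not be large---it could be as few as one. The fix is to route the argument through \emph{bucketed weighted sums} rather than bucket occupancy counts: within each bucket record (a constant-size sketch of) the sum of weights of active coordinates, so that the quantity being thresholded in the contracted $\poly(\theta/m)$-dimensional instance is again $-\theta+\sum w_i x_i$ with the original margin $m'$ preserved under the (with high probability collision-free) hashing. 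Then the contracted instance is genuinely a $(\theta,\Theta(m))$-LTF on $\poly(\theta/m)$ variables, and Lemma~\ref{lem:sketch} applies verbatim. I would structure the final write-up around this reduction: (i) ignore small weights, (ii) round weights, (iii) hash to $\poly(\theta/m)$ buckets preserving the threshold structure whp, (iv) apply Lemma~\ref{lem:sketch}, (v) amplify.
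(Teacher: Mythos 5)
Your overall route is the same as the paper's: ignore weights below $2m$ (Lemma~\ref{lem:ignore:weights}), round the rest into a set $W$ of size $O\left(\frac{\theta}{m}\log\frac{\theta}{m}\right)$ with margin $\Theta(m)$ (Lemma~\ref{lem:round:weights}), hash coordinates into $M=\Theta\left((\theta/m)^2\right)$ buckets keeping one parity per (bucket, weight-class) pair, view the result as an LTF $g(y)$ on $M\cdot|W|$ variables, and finish with Lemma~\ref{lem:sketch}, whose $\log$ factor now depends only on $\theta/m$. This is exactly the paper's contraction via the variables $y_{w,j}=\sum_{i\in S_{w,j}}x_i$.

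However, your final plan (i)--(v) has a genuine gap in the case $f(x)=1$ with \emph{many} active significant coordinates. Your justification that the contracted instance ``is genuinely a $(\theta,\Theta(m))$-LTF'' rests on the hashing being collision-free with high probability, but Fact~\ref{fact:collision} only gives this when the number of active significant coordinates is $O(\theta/m)$ --- which is guaranteed when $f(x)=0$, not when $f(x)=1$. When many significant coordinates are $1$, collisions are unavoidable and the $\mathbb{F}_2$-parities $y_{w,j}$ undercount (two colliding ones cancel), so the contracted LTF can evaluate to $0$ on an input with $f(x)=1$; no argument in your write-up rules this out. The paper closes exactly this case with a preliminary gap test: by Lemma~\ref{lem:sketch}'s counting argument, $f(x)=0$ forces at most $\frac{\theta}{2m}$ active significant coordinates, so one first runs the $\mathsf{HAM}_{n,d|2d}$ sketch of Lemma~\ref{lem:ham:sketch} with $d=\Theta(\theta/m)$ \emph{on the significant coordinates of $x$}; if the Hamming weight is large one outputs $1$ directly (every surviving weight is at least $2m$, so the sum exceeds $\theta$), and only in the small-weight case does one hash and invoke the contracted LTF. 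You mention Lemma~\ref{lem:ham:sketch}, but only in the ``alternative'' role of counting active \emph{buckets} (which, as you yourself note, cannot work because a single heavy coordinate can force $f(x)=1$), and it then disappears from your final reduction; reinstating it as a pre-test on the coordinates, as the paper does, is the missing step that makes the ``whp collision-free'' claim legitimate on the inputs where you actually use it.
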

\begin{proof}
Let $f(x)=\sgn{-\theta+\sum_{i=1}^n w_ix_i}$ be a $(\theta,m)$-LTF. 
By Lemma \ref{lem:round:weights}, we can assume that $w_i\in W=\{2m(1+\eps)^i\}_{i=0}^t$ so that the new margin $m'=\frac{4}{5}m$ and $t=\left\lceil\log_{1+\eps}\frac{\theta}{m}\right\rceil$ for $\eps=\frac{\theta}{10m}$. 
Recall from Lemma \ref{lem:sketch}, $f(x)=0$ only if $x_i=1$ for at most $\frac{\theta}{2m}$ indices $i$ of $x$. 
From Lemma \ref{lem:ham:sketch}, we can detect the instances where at least $\frac{\theta}{2m}$ indices $i$ of $x$ satisfy $x_i=1$. 

On the other hand, if less than $\frac{\theta}{2m}$ indices $i$ of $x$ satisfy $x_i=1$, we can identify these indices and corresponding weights via hashing. 
Let $h:[n]\to[M]$, where $M=5\left(\frac{\theta}{m}\right)^2$, and $S$ be a set of indices of $x$, of size at most $\frac{\theta}{m}$. 
Then by Fact \ref{fact:collision}, the probability of a collision in $h$ under elements of $S$ is at most $\frac{1}{5}$.
We partition $[n]$ into sets $S_{w,j}$ where $w\in W$ and $j\in[M]$ so that $S_{w,j}=\{i|h(i)=j\wedge w_i=w\}$. 
Therefore with probability at least $\frac{4}{5}$, there are no collisions in $h$ under elements of $S$ and $|S_{w,j}|\le 1$ for all $w\in W$ and $j\in[M]$. 

Let $y_{w,j}=\sum_{i\in S_{w,j}}x_i$ and note that if there are no collisions in $h$ under elements of $S$, then
\[\sum_{i=1}^n w_ix_i=\sum_{(j,w)\in[M]\times W} w\left(\sum_{i\in S_{w,j}} x_i\right)=\sum_{(j,w)\in[M]\times W}w\cdot y_{w,j}.\]
Thus, $f(x)$ is equivalent to the function $g(y)=\sgn{-\theta+\sum_{w,j}w\cdot y_{w,j}}$.
Since $|W|=O\left(\frac{\theta}{m}\log\frac{\theta}{m}\right)$, $M=5\left(\frac{\theta}{m}\right)^2$ and $m'=\frac{4}{5}m$ is the margin for $g(y)$, then $g(y)$ depends on $O\left(\left(\frac{\theta}{m}\right)^3\log\frac{\theta}{m}\right)$ variables $y_{w,j}$. 
By Lemma \ref{lem:sketch}, there exists a randomized sketch for $g(y)$ of size $O\left(\frac{\theta}{m}\log\frac{\theta}{m}\right)$.
\end{proof}

\noindent
We can also show that Theorem \ref{thm:sketch} is tight by recalling the function 
\[\mathsf{HAM}_{\le d}(x)=\begin{cases}
0,\qquad&\text{if }\sum_{i=1}^n x_i\le d+\frac{1}{2}\\
1,\qquad&\text{otherwise}.
\end{cases}
\]
Since this function is a $\left(d+\frac{1}{2},\frac{1}{2}\right)$-LTF, it can be represented by a randomized linear sketch of size $O(d\log d)$. 
On the other hand, Dasgupta, Kumar and Sivakumar \cite{DKS12} notes that the one-way complexity of small set disjointness for two vectors $x$ and $y$ of weight $d$, which reduces to the function $\mathsf{HAM}_{\le d}(x\oplus y)$, is $\Omega(d\log d)$. 
Thus, $\mathsf{HAM}_{\le d}(x\oplus y)$ also requires a sketch of size $\Omega(d\log d)$.

\subsubsection{Linear Threshold of Disjunctions}
\label{sec:lt:disj}
In this section, we describe a randomized linear sketch for functions that can be represented as $2$-depth circuits where the top gate is a monotone linear threshold function with threshold $\theta$ and margin $m$, and the bottom gates are OR functions. 
Formally, if $\displaystyle g_S(x)=\bigvee_{i\in S} x_i$, $q$ is a linear threshold function, and $w_S\ge 0$, then $f(x)=q(\ldots,g_S(x),\ldots)=\sgn{-\theta+\sum_{S\in2^{[n]}}w_S\cdot g_S(x)}$.

\begin{lemma}
\label{lem:disjunction}
Let $f(x)=\sgn{-\theta+\sum_{i=1}^n w_ix_i}$ be a $(\theta,m)$-LTF where $w_i\in W$ for some set $W$. 
Let $h:[n]\to[M]$ be a random hash function where $M=\frac{50\theta^2}{m^2}$ and 
\[f_h(x)=\sgn{-\theta+\sum_{(j,w)\in[M]\times W} w\left(\bigvee_{\substack{i:h(i)=j\\w_i=w}}x_i\right)}.\] 
Then for all $x$, $\mathbf{Pr}\left[f_h(x)\neq f(x)\right]\le\frac{1}{50}$.
\end{lemma}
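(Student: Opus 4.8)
The plan is to show that hashing the "heavy" coordinates ($w_i \in W$, all at least $2m$) into $M = 50\theta^2/m^2$ buckets fails to preserve $f(x)$ only when two coordinates that are both set to $1$ and carry the same weight land in the same bucket, and to bound the probability of such a bad collision by $1/50$. First I would fix an arbitrary input $x$ and split into two cases according to the value of $f(x)$. If $f(x) = 0$, then by the analysis in Lemma~\ref{lem:sketch} (the weights are at least $2m$ and $f$ is a $(\theta,m)$-LTF) at most $\theta/(2m)$ coordinates $i$ have $x_i = 1$; call this set $S$, so $|S| \le \theta/(2m)$. If $f(x) = 1$, I would instead use Lemma~\ref{lem:round:weights}'s observation that any $w_i \ge 2\theta$ already forces $f(x) = 1$, and then note that for the purpose of the top threshold we only ever need to "see" enough $1$-coordinates to push the weighted sum above $\theta + m$ — concretely, a set $S$ of at most $\theta/(2m) + 1 = O(\theta/m)$ $1$-coordinates of $x$ already has $\sum_{i \in S} w_i > \theta + m$ (take a minimal such subset, using that each $w_i < 2\theta$ so the overshoot is controlled, or simply take all $1$-coordinates if there are few, and truncate to $\lceil \theta/m \rceil$ of them otherwise). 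Either way I produce a "witness set" $S$ with $|S| = O(\theta/m)$, and I just need $h$ to be injective on $S$.

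The key calculation is then Fact~\ref{fact:collision}: the probability that $h$ has a collision among the elements of $S$ is at most $|S|^2/M$. Plugging in $|S| \le \theta/m$ (taking the constants in the $O(\cdot)$ to match; if the witness set has size $\le \theta/m$ exactly in both cases, which one can arrange) and $M = 50\theta^2/m^2$ gives collision probability at most $(\theta/m)^2 / (50\theta^2/m^2) = 1/50$. Conditioned on $h$ being injective on $S$, I would verify that $f_h(x) = f(x)$: in the $f(x) = 0$ case, every bucket $j$ with weight label $w$ receives at most one $1$-coordinate from $S$ (and coordinates outside $S$ are $0$, so they contribute nothing to the OR), hence $\sum_{(j,w)} w \cdot (\bigvee_{i : h(i) = j, w_i = w} x_i) = \sum_i w_i x_i < \theta$, so $f_h(x) = 0$; in the $f(x) = 1$ case, the buckets hit by $S$ already contribute $\sum_{i \in S} w_i > \theta + m > \theta$ (no OR can decrease this, and all weights $w \ge 2m > 0$ so additional contributions only help), so $f_h(x) = 1$. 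This matches $f(x)$ in both cases, completing the argument.

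The main obstacle is the $f(x) = 1$ case: unlike $f(x) = 0$, collisions among $1$-coordinates are actually harmless there (an OR of several $1$'s is still $1$, and we only paid for one copy of the weight $w$, but that can only \emph{lower} the weighted sum, which for $f(x)=1$ could in principle drop it below $\theta$). So I cannot simply say "collisions break things" — I need the witness-set trick: identify a small ($O(\theta/m)$) subset $S$ of the $1$-coordinates whose weights alone already exceed $\theta$ by a margin, and demand injectivity only on $S$; coordinates outside $S$, colliding or not, can only add to the sum. Making the cardinality bound on this witness set match $|S| \le \theta/m$ cleanly (so that the final probability is exactly $1/50$ rather than some larger constant) requires a little care with the margin bookkeeping from Lemma~\ref{lem:round:weights}, but it is routine. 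Everything else is a direct application of Fact~\ref{fact:collision} together with the structural facts already established for $(\theta,m)$-LTFs.
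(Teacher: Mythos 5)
Your proposal is correct and follows essentially the same route as the paper: both arguments reduce to demanding that $h$ be collision-free on a set of at most $\theta/m$ relevant $1$-coordinates and then apply Fact~\ref{fact:collision} with $M=50\theta^2/m^2$. The only cosmetic difference is in the $f(x)=1$ case, where the paper splits on whether $|S|<\theta/m$ or $|S|\ge\theta/m$ and counts non-empty buckets (each contributing weight at least $2m$), while you handle both subcases at once with a truncated ``witness set'' whose weights already exceed $\theta$ --- the same idea with slightly different bookkeeping.
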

\begin{proof}
As by Lemma \ref{lem:round:weights}, we can assume without loss of generality that $w_i\ge 2m$ and $w\ge 2m$. 
Let $S=\{i|x_i=1\}$ so that if there are no collisions under $h$ in $S$, then
\[\sum_{(j,w)\in[M]\times W} w\left(\bigvee_{\substack{i:h(i)=j\\w_i=w}}x_i\right)=\sum_i w_ix_i.\]
If $f(x)=0$, then $|S|\le\frac{\theta}{2m}$ so that the probability there are collisions under $h$ in $S$ is at most $\frac{1}{200}$ by Fact \ref{fact:collision}. 
Thus if $f(x)=0$, then $f_h(x)=0$ with probability at least $1-\frac{1}{200}$.

If $f(x)=1$, then either $|S|<\frac{\theta}{m}$ or $|S|\ge\frac{\theta}{m}$. 
If $|S|<\frac{\theta}{m}$, then the probability there are collisions under $h$ in $S$ is at most $\frac{1}{50}$ by Fact \ref{fact:collision}, so then $f_h(x)=1$ with probability at least $1-\frac{1}{50}$. 
If $|S|\ge\frac{\theta}{m}$, with probability at least $1-\frac{1}{50}$, there exist $\frac{\theta}{m}$ values $j$ such that there exists $x_i=1$ and $h(i)=j$. 
Therefore, we set $f_h(x)=1$ whenever at least $\frac{\theta}{m}$ buckets of $h$ are non-empty.

In all cases, $f_h(x)=f(x)$ with probability at least $1-\frac{1}{50}$.
\end{proof}

\begin{theorem}
Let $g_S(x)=\bigvee_{i\in S} x_i$ with $w_S\ge 0$, $q$ be a $(\theta,m)$-LTF, and
\[f(x)=q(\ldots,g_S(x),\ldots)=\sgn{-\theta+\sum_{S\in2^{[n]}}w_S\cdot g_S(x)}.\]
Then there is a randomized linear sketch for $f$ of size $O\left(\left(\frac{\theta}{m}\right)^4\log^2\frac{\theta}{m}\right)$, where $m$ is the margin of $q$.
\end{theorem}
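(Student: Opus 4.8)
The plan is to reduce $f$, in two hashing steps, to a fixed Boolean function of only $\mathrm{poly}(\theta/m)$ disjunctions over $[n]$, and then sketch that function directly; here $\theta/m$ denotes the threshold-to-margin ratio of the top gate $q$. The argument runs the machinery of Lemmas~\ref{lem:ignore:weights}, \ref{lem:round:weights} and~\ref{lem:disjunction} ``one level up'', treating the disjunctions $g_S$ as the Boolean inputs fed to the $(\theta,m)$-LTF $q$. First, exactly as in Lemma~\ref{lem:ignore:weights}, every $g_S$ with $w_S<2m$ may be dropped, since it affects neither the value nor the margin of $q$; and, exactly as in Lemma~\ref{lem:round:weights}, the surviving weights may be rounded onto a geometric grid $W$ with $|W|=O(\frac{\theta}{m}\log\frac{\theta}{m})$ at the price of a constant factor in the margin. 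Afterwards all weights are at least $2m$, so whenever $f(x)=0$ fewer than $\theta/m$ of the disjunctions $g_S$ can evaluate to $1$.

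\textbf{Step 2 (reducing the number of disjunctions).} Now apply the hashing of Lemma~\ref{lem:disjunction} to the top gate $q$, hashing the family $\{S:w_S>0\}$ (whose size may be as large as $2^n$) into $M=O((\theta/m)^2)$ buckets via a random $h$. The observation that makes this work is that an OR of disjunctions is again a disjunction: for each bucket $j$ and weight $w\in W$, the term $G_{j,w}(x):=\bigvee\{x_i:i\in S,\ h(S)=j,\ w_S=w\}$ is a disjunction over a subset of $[n]$ that is determined by $h$, hence known to both players. Copying the analysis of Lemma~\ref{lem:disjunction}: if $f(x)=0$ then by Fact~\ref{fact:collision} with probability $\ge 1-1/100$ no two active disjunctions collide, so $\sum_{(j,w)}w\cdot G_{j,w}(x)=\sum_S w_S g_S(x)<\theta$; the same identity and conclusion hold if $f(x)=1$ with few active disjunctions; and if at least $\theta/m$ disjunctions are active --- which already forces $f(x)=1$ --- then with probability $\ge 1-1/100$ at least $\theta/m$ of the bucket indicators $\bigvee_w G_{j,w}(x)$ equal $1$, a condition detected by a $\mathsf{HAM}$ threshold test, and a $\mathsf{HAM}$ of disjunctions is itself a linear threshold of disjunctions. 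Consequently, up to a constant failure probability over $h$, $f$ agrees with a fixed function $F$ which is a linear threshold (together with one $\mathsf{HAM}$ term) of only $N=O((\theta/m)^3\log\frac{\theta}{m})$ disjunctions $G_1,\dots,G_N$ over $[n]$.

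\textbf{Step 3 (sketching $F$) and accounting.} To evaluate $F$ it suffices to recover the bits $(G_1(x),\dots,G_N(x))$, since $F$ is a fixed Boolean function of them. For each $a$ the sketch keeps $O(\log N)$ independent parities $\chi_R(x)$ where $R\subseteq T_a$ is a uniformly random subset and $G_a=\bigvee_{i\in T_a}x_i$: if $x|_{T_a}$ is identically zero then all of these parities vanish, and otherwise each is $1$ with probability exactly $1/2$, so ORing them recovers $G_a(x)$ with failure probability $O(1/N)$. A union bound over $a\in[N]$ recovers every $G_a(x)$, hence $F(x)$, with constant probability, using $O(N\log N)=O((\theta/m)^3\log^2\frac{\theta}{m})$ parities in all, comfortably within the claimed $O((\theta/m)^4\log^2\frac{\theta}{m})$. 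A final union bound over the failure events of the three steps completes the proof.

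\textbf{Main obstacle.} Step~2 is the crux. The top gate has up to $2^n$ inputs, so one cannot process them one at a time, and the hash must simultaneously keep the circuit structure intact --- which works precisely because an OR of ORs is an OR --- preserve a constant fraction of the margin so that the case analysis of Lemma~\ref{lem:disjunction} still goes through, and funnel the inputs of large total active weight into a $\mathsf{HAM}$ test. Steps~1 and~3 are routine adaptations of tools already developed in this section.
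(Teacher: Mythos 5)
Your proposal is correct and follows essentially the same route as the paper: prune and round the top gate's weights via Lemmas~\ref{lem:ignore:weights} and~\ref{lem:round:weights}, then hash the (possibly exponentially many) disjunctions into buckets as in Lemma~\ref{lem:disjunction}, using the fact that an OR of ORs is again an OR, to reduce to an LTF (plus a $\mathsf{HAM}$-type test) of only $\mathrm{poly}(\theta/m)$ disjunctions. Your Step~3, which recovers each bucket disjunction exactly with $O(\log N)$ uniformly random parities, simply makes explicit the final evaluation step that the paper's terse proof leaves implicit, and in fact yields the slightly sharper bound $O\left(\left(\frac{\theta}{m}\right)^3\log^2\frac{\theta}{m}\right)$, which of course still establishes the stated theorem.
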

\begin{proof}
We first apply Lemma \ref{lem:ignore:weights} and Lemma \ref{lem:round:weights} to $q$ so that weights $w_i$ can be rounded to elements of a set $W$ with $|W|=O\left(\frac{\theta}{m}\log\frac{\theta}{m}\right)$. 
For each $w_i\in W$, it again suffices to detect whether $\Theta(\frac{\theta}{m})$ disjunctions are nonzero. 
Hence to hash $O\left(\left(\frac{\theta}{m}\right)^2\log\frac{\theta}{m}\right)$ disjunctions, it suffices to use a hash function with $M=O\left(\left(\frac{\theta}{m}\right)^4\log^2\frac{\theta}{m}\right)$ buckets.
By Lemma \ref{lem:disjunction}, our resulting randomized linear sketch has size $O\left(\left(\frac{\theta}{m}\right)^4\log^2\frac{\theta}{m}\right)$. 

\end{proof}
\begin{proofof}{Theorem~\ref{thm:ham:sketch}}
Recall that $\mathsf{HAM}_{\le d}(x)$ is a $\left(d+\frac{1}{2},\frac{1}{2}\right)$-LTF. 
Furthermore, the set of weights $W$ for $\mathsf{HAM}_{\le d}(x)$ consists of a single element $\{1\}$, since the coefficient of each disjunction is one. 
Since $M=O(d^2\log d)$, we can construct a randomized linear sketch with size $O(d^2\log d)$ by Lemma \ref{lem:disjunction}.
\end{proofof}
We note that our approach can be easily generalized to the case where the disjunction include the negations of some variables as well.

\subsection{Communication Complexity of Lipschitz Submodular Functions}
We discuss the communication complexity of Lipschitz submodular functions in this section. 
We first show in Section~\ref{sec:lin-sketch-lipschitz} that there exists an $\Omega(n)$-Lipschitz submodular function $f$ that requires a randomized linear sketch of size $\Omega(n)$. 
We then show in Section~\ref{sec:one-way-lipschitz} that in the one-way communication complexity model for XOR functions, there exists an $\Omega(n)$-Lipschitz submodular function $f$ that has communication complexity $\Omega(n)$. 

\subsubsection{Approximate $\ftwo$-Sketching of Lipschitz Submodular Functions}\label{sec:lin-sketch-lipschitz}
\begin{theorem}\label{thm:lb-lipschitz-submodular}
	There exist constants $c_1, c_2, \eps \ge 0$ and a monotone non-negative $(\frac{c_1}{n})$-Lipschitz submodular function $f$ (a scaling of a matroid rank function) such that:
	$$\rla{\eps}(f) \ge c_2 n.$$
\end{theorem}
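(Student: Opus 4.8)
The plan is to use Yao's minimax principle together with a counting argument over a large family of matroid rank functions. The key ingredient, as announced in the overview, is the construction of~\cite{BH10} of a large parametric family $\{M_\sigma\}$ of matroids of rank $r = \Theta(n)$ on ground set $[n]$, whose rank functions are $(c_1/n)$-Lipschitz submodular after scaling. The family is large: $\log |\{M_\sigma\}| = 2^{\Omega(n)}$, and more importantly the rank functions are ``spread out'' in the sense that for two random members $M_\sigma, M_{\sigma'}$ of the family, $rank_{M_\sigma}$ and $rank_{M_{\sigma'}}$ disagree by a constant on a constant fraction of inputs of a prescribed weight (inputs whose support has size around $r$). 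I would first extract from~\cite{BH10} the precise anti-concentration statement: there is a distribution over $[n]$-sets (uniform over sets of some fixed cardinality $k \approx r$) such that for a uniformly random pair $\sigma \ne \sigma'$ from the family, $\Pr_{S}[|rank_{M_\sigma}(S) - rank_{M_{\sigma'}}(S)| \ge \Omega(1)] \ge \Omega(1)$.

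Next I would set up the linear-sketch lower bound via Yao. Fix $k = c_2 n$ for a small constant $c_2$. A deterministic linear sketch of dimension $k$ is specified by a $k$-dimensional subspace of parities $A \subseteq \ftwo^n$ together with a post-processing function $g \colon \ftwo^k \to \mathbb R$; the number of choices of $A$ is at most $2^{nk} = 2^{c_2 n^2}$, which is the dominant term (the dependence on $g$ can be folded in since $g$ is best chosen as the conditional expectation of $f$ given the sketch, hence determined by $A$ and the target function). The hard distribution is: pick $M_\sigma$ uniformly from the family, and then pick the input $x$ from the anti-concentration distribution above (supported on weight-$k'$ strings). I would show that for any fixed deterministic sketch of dimension $k$, the probability over the choice of $M_\sigma$ that this sketch achieves expected squared error $\le \eps$ on $rank_{M_\sigma}$ under the input distribution is at most $\exp(-\Omega(n^2))$ --- strictly smaller than $2^{-c_2 n^2}$ for suitable constants. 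The intuition: a sketch of dimension $k$ can only distinguish $x$ from $x + v$ for $v \notin A^\perp$; since $A^\perp$ has dimension $n - k = \Omega(n)$, for most pairs $\sigma, \sigma'$ the two rank functions, restricted to the fibers of the sketch, must agree with the post-processing output on almost all fibers to have small error, and two ``spread'' rank functions cannot both be $\eps$-approximated by the same $g \circ (\text{sketch})$. Making this quantitative requires a second moment / union bound: the number of matroids in the family that are simultaneously well-approximated by a fixed sketch is tiny.

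Concretely, the counting step runs as follows. Fix a sketch. For each $M_\sigma$ that it $\eps$-approximates, the post-processing $g$ (optimally, $g(s) = \E[rank_{M_\sigma}(x) \mid \text{sketch}(x) = s]$) essentially pins down the average of $rank_{M_\sigma}$ on each fiber. If two matroids $M_\sigma, M_{\sigma'}$ are both $\eps$-approximated by the \emph{same} sketch, then $\E_{x}[(rank_{M_\sigma}(x) - rank_{M_{\sigma'}}(x))^2 \mid \text{fiber}]$ is small on most fibers, hence $rank_{M_\sigma}$ and $rank_{M_{\sigma'}}$ are close in $L^2$ over the input distribution --- contradicting the anti-concentration property once we know the family members are pairwise far apart. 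Thus each sketch $\eps$-approximates at most, say, one ``cluster'' of mutually-close matroids, and since the family has $2^{\Omega(n)}$ pairwise-far members (in fact we need: a packing of the family of size exceeding the number of sketches times the maximum cluster size), a union bound over all $2^{O(n^2)}$ sketches shows that with positive probability a random $M_\sigma$ is not $\eps$-approximated by \emph{any} dimension-$k$ sketch, giving a fixed hard $f$. By Yao, $\rla{\eps}(f) \ge c_2 n$.

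The main obstacle I expect is the quantitative matching of constants: one must choose $r = \Theta(n)$, the input weight $k'$, the error $\eps$, and the sketch dimension $c_2 n$ so that (i) the~\cite{BH10} family genuinely has $2^{\omega(n^2 / \text{something})}$ --- actually we only have $2^{2^{\Omega(n)}}$ matroids but the relevant packing number in the $L^2(\text{input distribution})$ metric at scale $\sqrt\eps$ must exceed $2^{c_2 n^2} \cdot (\text{cluster size})$, which forces a careful accounting of how the anti-concentration degrades; and (ii) the matroids remain $(c_1/n)$-Lipschitz after the scaling that puts their range in $[0,1]$, which is automatic since rank functions are $1$-Lipschitz and dividing by $r = \Theta(n)$ gives the $c_1/n$ bound. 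The delicate part is (i): I would need to verify that~\cite{BH10}'s construction yields not merely many matroids but many that remain far apart when measured only on the restricted input distribution used in the hard instance, and that the ``effective description length'' of a sketch that fools a given matroid is small enough (roughly $O(nk) = O(n^2 c_2)$ bits) to be beaten by the packing number. This is exactly where the proof must invoke the specific combinatorial structure of the~\cite{BH10} matroids rather than treating them as a black box.
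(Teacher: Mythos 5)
Your overall skeleton (Yao's principle, the matroid family of~\cite{BH10}, a union bound over deterministic sketches, and the final $1$-Lipschitz-to-$O(1/n)$-Lipschitz rescaling) is the same as the paper's, but two of your key steps have genuine gaps. First, the hard input distribution. You propose inputs uniform over a Hamming-weight level $\approx r$ and posit an ``anti-concentration'' property: that two random members of the family disagree by a constant on a constant fraction of such inputs. That property is false for the~\cite{BH10} construction: the matroids $M_\cB$ in the family differ only on the planted collection $\cA$ of designated $D$-sets (where the rank can be $b$ or $D$ depending on membership in $\cB$), and $|\cA| = 2^{\Theta(n)}$ is an exponentially small fraction of the weight-$D$ slice $\binom{n}{D}$; away from $\cA$ the rank functions essentially coincide, and indeed Lipschitz submodular functions concentrate under product-like distributions, which is exactly why the paper emphasizes that the hard distribution must be non-product. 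The paper instead takes the input distribution to be uniform over $\cA$ itself, and the property it exploits is not pairwise distance but \emph{shattering}: by Theorem 8 of~\cite{BH10}, for every $\cB \subseteq \cA$ one gets a matroid with $rank_{M_\cB}(A) = b$ for $A \in \cB$ and $= D$ for $A \in \cA \setminus \cB$, i.e.\ the values on $\cA$ can be set freely and independently.

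Second, your packing/cluster argument is not sound as stated. You count only the $2^{nk}$ choices of the parity subspace, folding the post-processing $g$ into the target function; but then ``two matroids approximated by the same sketch'' means only that they are approximated by the same subspace with possibly \emph{different} post-processings, and that does not force them to be close in $L^2$ over the input distribution --- any two functions measurable with respect to the same $d$-dimensional subspace are exactly sketchable by it while being arbitrarily far apart. To repair your route you would have to union bound over pairs (subspace, discretized $g$), inflating the count to roughly $2^{nd}\cdot L^{2^d}$, and separately establish a packing bound for the family restricted to the hard distribution; this can be made to work against the doubly exponential family size, but it is precisely the accounting you defer. The paper sidesteps all of this: fix any sketch of dimension $d = n/2^{11}$; for a uniformly random $\cB$ the values $\{rank_{M_\cB}(A)\}_{A\in\cA}$ are i.i.d.\ fair coin flips over $\{b, D\}$, so by a Chernoff bound every large fiber of the sketch is at least $1/4$-balanced between the two values except with probability $\exp(-\Omega(2^{cn}))$; no post-processing whatsoever can then be accurate on that fiber, and the doubly-exponentially small failure probability survives a union bound over all $2^{(n+1)d}$ sketches and their $2^d$ fibers, producing a single fixed $\cB$ that defeats \emph{every} deterministic sketch on $U(\cA)$, after which Yao's principle gives the bound on $\rla{\eps}$. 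Your point that rescaling by $D = \Theta(n)$ yields the $(c_1/n)$-Lipschitz property is correct and matches the paper.
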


\begin{proof}
Our proof uses a construction of a large family of matroid rank functions given in~\cite{BH10}, Theorem 8.
The construction uses the following notion of lossless bipartite expanders:
\begin{definition}[Lossless bipartite expander]
Let $G = (U \cup V, E)$ be a bipartite graph. For $J \subseteq U$ let $\Gamma(J) = \{v | \exists u \in U \colon \{u,v\} \in E\}$.
Graph $G$ is a $(D,L,\eps)$-lossless expander if:
\begin{align*}
|\Gamma(\{u\})| = D \quad & \forall u \in U \\
|\Gamma(J)| \ge (1 - \eps) D |J| \quad & \forall J \subseteq U, |J| \le L.
\end{align*}
\end{definition}

\noindent
Here we need different parameters than in~\cite{BH10} so we restate their theorem as follows:
\begin{theorem}[\cite{BH10}]\label{thm:bh-matroid}
	
	Let $(U \cup V, E)$ be a $(D,L,\epsilon)$-lossless expander with $|U| = k$ and $|V| = n$ and let $b = 8 \log k$.
	If $D \ge b$, $L = 4D/b - 2$ and $\eps = \frac{b}{4D}$ then there exists a family of sets $\cA \subseteq 2^{[n]}$ and a family of matroids $\{M_\cB \colon \cB \subseteq \cA\}$ with the following properties:
	\begin{itemize}
		\item $|\mathcal A| = k$ and for every $A \in \mathcal A$ it holds that $|A| = D$.
		\item For every $\cB \subseteq \cA$ and every $A \in \cA$, we have:
		\begin{align*}
		rank_{M_\cB}(A) = 
		\begin{cases}
		b &\text {\quad \quad \quad if } A \in \cB\\
		D &\text {\quad \quad \quad if } A \in \cA \setminus \cB
		\end{cases}
		\end{align*}
	\end{itemize}
\end{theorem}

\noindent
We use the following construction of lossless expanders from~\cite{V12}, see also~\cite{BH10}.

\begin{theorem}[\cite{V12}]
	Let $k \ge 2$ and $\eps \ge 0$. For any $L \le k$, let $D \ge 2 \log k/\epsilon$ and $n \ge 6 DL/\eps$. Then a $(D,L,\eps)$-lossless expander exists.
\end{theorem}

\noindent
In the above theorem we can set parameters as follows:
$$D = \frac{n}{3 \cdot 2^7}, \quad L = 2^3, \quad \eps = 2^{-3}, \quad k = 2^{n/3 \cdot 2^{11}}, \quad b = \frac{n}{3 \cdot 2^8}. $$
Note that under this choice of parameters we have $6 DL/\epsilon = n$ and $\frac{2 \log k}{ \epsilon} = D$ and hence a $(D,L,\eps)$-lossless expander with parameters set above exists. 


Now consider the family of matroids $\cM$ given by Theorem~\ref{thm:bh-matroid} using the expander construction above.
The rest of the proof uses the probabilistic method. We will show non-constructively that there exists a matroid in this family whose rank function does not admit a sketch of dimension $d = o(n)$.
Let $\cD = U(\cA)$ be the uniform distribution over $\cA$.
By Yao's principle it suffices to show that there exists a matroid rank function for which any deterministic sketch fails with a constant probability over this distribution.
In the proof below we first show that any fixed deterministic sketch succeeds on a randomly chosen matroid from $\cM$ with only a very tiny probability, probability $2^{2^{-\Omega(n)}}$, and then take a union bound over all $2^{dn}$ sketches of dimension at most $d$. 


Indeed, fix any deterministic sketch $\cS$ of dimension $d = n/2^{11}$.
Let $\{b_1, \dots, b_{2^d}\}$ be the set of all possible binary vectors of length $d$ corresponding to the possible values of the sketch, so that each $b_i \in \{0,1\}^d$. 

Let $S_{b_i} = \{A \in \cA : \cS(A) = b_i\}$. Let $t = \frac{1}{4} 2^{n/2^{11}}$ and $G = \{b_i \in \{0,1\}^d | |S_{b_i}| \ge t\}$. The following proposition follows by a simple calculation.

\begin{proposition}\label{prop:large-sets}
If $t = \frac{1}{4} 2^{n/2^{11}}$ then $\frac{1}{k} \sum_{b_i \in G} |S_{b_i}| \ge \frac34$. 
\end{proposition}
\begin{proof}
We have:
$$\frac{1}{k} \sum_{b_i \in G} |S_{b_i}| \ge 1 - \frac{1}{k} \sum_{b_i \colon |S_{b_i}| < \frac{k}{4 \cdot 2^d}} |S_{b_i}| \ge 1 - \frac{1}{k} \cdot \frac{k}{4 \cdot 2^d} \cdot 2^d \ge \frac34.\qedhere$$
\end{proof}

Let $S_{b_i}^1 = \{A \in S_{b_i} \colon rank_{M_\cB}(A) = b\}$ and $S_{b_i}^2 = \{A \in S_{b_i} \colon rank_{M_\cB}(A) = D\}$. 
We require the following lemma.
\begin{lemma}\label{lem:union-bound}
Let $t = \frac{1}{4} 2^{n/2^{11}}$ and $d = n/2^{11}$. There exists a matroid $M_{\cB} \in \cM$ such that for all deterministic sketches $\cS$ of dimension $d$ and all $b_i \in G$: 
$$\min(|S_{b_i}^1|, |S_{b_i}^2|) \ge \frac14 |S_{b_i}|.$$
\end{lemma}
\begin{proof}
The proof uses the probabilistic method to show the existence of $\cB$ with desired properties.
Consider drawing a random matroid from the family $\cM$, i.e. pick $\cB$ to be a uniformly random subset of $\cA$ and consider $M_\cB$.
Fix any deterministic sketch $\cS$ and any $b_i \in G$.
Since $|S_{b_i}| \ge t$, by the Chernoff bound, it holds that:
$$\Pr_{\cB \subseteq \cA}\left[\left|S_{b_i}^1\right| > \left(\frac12 + \delta \right)|S_{b_i}| \right] \le e^{- c \delta^2 |S_{b_i}|} \le e^{-c\delta^2 t}.$$
Setting $\delta = 1/4$, we have that the above probability is at most $e^{-Ct}$ for some constant $C > 0$.
Applying the argument above to both $S_{b_i}^1$ and $S_{b_i}^2$, we have that: 
$$\Pr_{\cB \subseteq \cA}\left[\min(\left|S_{b_i}^1\right|,\left|S_{b_i}^2\right|) <\frac14 |S_{b_i}| \right] \le 2e^{- C t}.$$
Let $\cE$ denote the event that $\min(\left|S_{b_i}^1\right|,\left|S_{b_i}^2\right|) \ge \frac14 |S_{b_i}|$.

Note that the total number of deterministic sketches of dimension $d$ is at most $2^{dn}$, since each sketch is specified by a collection of $d$ linear functions over $\mathbb F_2^n$. 
Also note that for each sketch $|G| \le 2^d$. 
Taking a union bound over all sketches and all sets $G$ by the choice of $t$ and $d$ event $\cE$ holds for all $\cS$ and $b_i \in G$ with probability at least: 
$$1 - 2^{(n + 1)d + 1}e^{-Ct} \ge 1 - 2^{(n + 1)d + 1} 2^{-\frac{C}{4} 2^{n/2^{11}}}  = 1 - o(1).$$
Thus, there exists some set $\cB$ for which the statement of the lemma holds.
\end{proof}

Fix the set $\cB$ constructed in Lemma~\ref{lem:union-bound} and consider the function $rank_{M_\cB}$.
Consider distribution $\cD$ over the inputs. The probability that any deterministic sketch over this distribution makes error at least $D - b$ is at least:
\begin{align*} 
\frac1k \sum_{b_i \in \{0,1\}^n} \min(|S_{b_i}^1|, |S_{b_i}^2|) &\ge \frac1k \sum_{b_i \in G} \min(|S_{b_i}^1|, |S_{b_i}^2|) \\
& \ge \frac1k \sum_{b_i \in G} \frac 14 |S_{b_i}| && \text{(by Lemma~\ref{lem:union-bound})}\\
 &\ge \frac34 \times \frac 14 \ge \frac16.&& \text{(by Proposition~\ref{prop:large-sets})}\\
\end{align*}



Finally, the construction of~\cite{BH10} ensures that the function $rank_{M_\cB}$ takes integer values between $0$ and $D$.
Using this and the fact that matroid rank functions are $1$-Lipschitz, we can normalize it by dividing all values by $D$ and ensure that the resulting function is $O(1/n)$-Lipschitz and takes values in $[0,1]$, while the sketch makes error at least $(D - b)/D = \frac12$. \qedhere
\end{proof}

\subsubsection{One-Way Communication of Lipschitz Submodular Functions}
\label{sec:one-way-lipschitz}

In this section, we strengthen the lower bound shown above, extending it to the corresponding one-way communication problem. 
We use the same notation as in the previous section.
\begin{theorem}
\label{thm:xor:lb:lipschitz}
There exists a constant $c_1> 0$ and a $\frac{c_1}{n}$-Lipschitz submodular function such that $R^{\rightarrow}_{1/3} = \Omega(n)$.
\end{theorem}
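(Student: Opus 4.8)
The plan is to lift the linear sketching lower bound of Theorem~\ref{thm:lb-lipschitz-submodular} to the one-way communication model for the XOR-function $\fplus{}(x,y) = f(x+y)$, where $f$ is (a scaling of) the matroid rank function $rank_{M_\cB}$ produced by the probabilistic argument above. The key observation is that a one-way protocol is strictly weaker than a linear sketch only in the wrong direction: here we want a lower bound on communication, and a linear sketch of dimension $d$ yields a one-way protocol of cost $d$, so a priori the sketching lower bound does not immediately give a communication lower bound. Instead, I would reprove the obstruction directly in the communication setting, essentially reusing the counting/union-bound machinery of Lemma~\ref{lem:union-bound}, but now with Alice's message $M(x)$ playing the role of the ``sketch value.'' Concretely, I would have Alice hold $x$ with the convention that $x$ encodes a set $A \in \cA$ (via its characteristic vector, padded so that all of $[n]$ is covered as in the construction), and Bob hold $y = 0^n$, so that $\fplus{}(x,y) = rank_{M_\cB}(A)$; we then argue via Yao's principle under the uniform distribution $\cD = U(\cA)$.

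The main steps: First, fix the communication bound $d = c_2 n$ and observe that any deterministic one-way protocol of cost $d$ partitions the inputs $A \in \cA$ into at most $2^d$ classes $S_{b_i}$ according to Alice's message $b_i \in \{0,1\}^{\le d}$, exactly analogous to the sets $S_{b_i}$ in the proof of Theorem~\ref{thm:lb-lipschitz-submodular}; within a class, Bob's output is a fixed function of $b_i$ (since $y$ is fixed), so Bob must output a single value on all of $S_{b_i}$. Second, invoke Proposition~\ref{prop:large-sets} to say that a $3/4$ fraction of the mass of $\cA$ lies in ``large'' classes $S_{b_i}$ with $|S_{b_i}| \ge t = \frac14 2^{n/2^{11}}$. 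Third, rerun the probabilistic-method argument of Lemma~\ref{lem:union-bound}: for a uniformly random $\cB \subseteq \cA$, for each fixed protocol and each large class, Chernoff plus a union bound over all $\le 2^{(n+1)d+1}$ deterministic protocols of cost $d$ shows that with probability $1-o(1)$ there is a matroid $M_\cB$ such that every large class $S_{b_i}$ has $\min(|S_{b_i}^1|, |S_{b_i}^2|) \ge \frac14 |S_{b_i}|$, where $S_{b_i}^1, S_{b_i}^2$ are the subsets on which $rank_{M_\cB}$ equals $b$ and $D$ respectively. Fourth, conclude as before: on such $M_\cB$, any deterministic protocol errs with probability at least $\frac34 \cdot \frac14 = \frac{3}{16} > \frac13$-free constant over $\cD$, hence (normalizing $rank_{M_\cB}$ by $D$ to get an $O(1/n)$-Lipschitz submodular function in $[0,1]$, incurring additive error $(D-b)/D = \frac12$) we get $R^{\rightarrow}_{1/3}(\fplus{}) = \Omega(n)$ by Yao's principle, setting $c_1$ from the expander parameters.

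The main obstacle I anticipate is handling the role of Bob's input correctly: I must make sure that restricting Bob to $y = 0^n$ genuinely reduces $\fplus{}$ to $rank_{M_\cB}$ evaluated on Alice's set, and that the protocol's cost is the length of Alice's message on its worst-case input --- which is fine, since the counting argument only needs that there are at most $2^{d+1}$ distinct messages. A subtler point is that in a one-way \emph{communication} protocol Bob may use shared public randomness and need not be deterministic; Yao's principle lets me fix the randomness to obtain a deterministic protocol that is correct on at least a $1-\delta$ fraction of $\cD$, which is exactly the regime the counting argument rules out. One should double-check that the constant $\frac{3}{16}$ (or whatever comes out after bookkeeping) exceeds the error threshold $1/3$ after accounting for the two-sided nature of the bound; if it does not with the stated parameters, a mild adjustment of $t$ and $\delta$ in the Chernoff step (as already done in Lemma~\ref{lem:union-bound} with $\delta = 1/4$) fixes it. Everything else is a routine transcription of the sketching proof into the communication language.
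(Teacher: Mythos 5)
There is a genuine gap at the heart of your plan: the union bound over protocols. In Lemma~\ref{lem:union-bound} the quantity $2^{(n+1)d+1}$ counts $\ftwo$-\emph{sketches}: a deterministic sketch of dimension $d$ is specified by $d$ linear forms over $\ftwo^n$, so there are at most $2^{dn}$ of them, and this is what makes the probabilistic method viable there. A deterministic one-way protocol of cost $d$ is a completely different object: Alice's message map is an \emph{arbitrary} function from her input domain to $\{0,1\}^{d}$, so even restricted to $\cA$ there are about $2^{d|\cA|}=2^{d\cdot 2^{\Omega(n)}}$ of them (equivalently, the induced partitions of $\cA$ into at most $2^d$ classes are arbitrary). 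The Chernoff bound you invoke only gives failure probability $e^{-\Omega(t)}$ with $t\approx |\cA|/2^{d+2}$ per class, and $2^{d|\cA|}\cdot 2^{d}\cdot e^{-\Omega(|\cA|/2^{d})}$ is nowhere near $o(1)$; the union bound fails by a doubly-exponential margin. So the step you describe as ``a routine transcription of the sketching proof into the communication language'' is exactly the step that does not transcribe: the probabilistic method cannot produce one matroid $M_\cB$ that simultaneously defeats all one-way protocols this way. (Your peripheral points are fine: restricting Bob to $y=0^n$ is a legitimate reduction, and Yao's principle does let you pass to deterministic protocols under $U(\cA)$.)

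The paper's proof of Theorem~\ref{thm:xor:lb:lipschitz} takes a different route precisely to avoid counting protocols. Fix a deterministic protocol of cost at most $\frac{\alpha}{4}n$ and fix Bob's input; Alice's messages partition $\cA$ into at most $2^{\alpha n/4}$ classes, and all but $O(2^{\alpha n/4})$ points lie in classes of size at least six. Since Bob must output a single value on each class, and since by Theorem~\ref{thm:bh-matroid} the family $\{M_\cB\colon\cB\subseteq\cA\}$ realizes \emph{every} assignment of values $b$ or $D$ to the points of $\cA$, one can choose the labels within each large class adversarially (independently across classes) so that Bob errs on at least a $2/5$ fraction of each large class, hence on more than a $1/3$ fraction of $\cA$ overall. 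No Chernoff bound and no union bound over protocols is needed; the matroid is chosen after the protocol's message partition is fixed, using the full expressiveness of the family from Theorem~\ref{thm:lb-lipschitz-submodular}'s construction. If you want to keep your probabilistic framing, you would have to replace the union bound over protocols by this kind of protocol-dependent adversarial choice (or some other argument, e.g.\ information-theoretic); as written, your step~3 is not salvageable by ``mild adjustment of $t$ and $\delta$.''
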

\begin{proof}
Let $\alpha=\frac{1}{3\cdot2^{11}}$ and $|\mathcal{A}|=k=2^{\alpha n}$. 
Suppose Alice holds $x\in\mathcal{A}\subseteq\{0,1\}^n$ and Bob holds $y\in\{0,1\}^n$. 
Recall that in the one-way communication model for XOR functions, Alice must pass a message of minimal length to Bob, who must then compute $f(x\oplus y)$ with some probability, say $\frac{2}{3}$. 
Here, we let specifically let $f$ be a scaling of a matroid rank function, which is some monotone non-negative $\left(\frac{c_1}{n}\right)$-Lipschitz submodular function. 
By Yao's principle, it suffices to show that every deterministic one-way communication protocol using at most $\frac{\alpha}{4}n$ bits fails with probability greater than $\frac{1}{3}$ over $\mathcal{A}$. 
Suppose by way of contradiction, that Alice and Bob succeed through a deterministic one-way communication protocol, using at most $\frac{\alpha}{4}n$ bits. 
For the purpose of analysis, we furthermore suppose that Bob's input is fixed. 

We now claim that if Alice passes a message to Bob using at most $\frac{\alpha}{4}n$ bits, then there are at least $2^{\alpha n}-4\cdot 2^{\alpha n/4}$ points in $\mathcal{A}$ that are represented by the same message as at least five other points.
Note that Alice can partition the input space $\mathcal{A}$ into at most $2^{\alpha n/4}$ parts, each part with its own distinct representative message. 
The number of points \emph{not} in parts containing at least five other points is at most $4\cdot 2^{\alpha n/4}$. 
The remaining points, at least $2^{\alpha n}-4\cdot 2^{\alpha n/4}$ in quantity, are represented by the same message as at least five other points.

Let $S$ be the set of points in $\mathcal{A}$ represented by a given message from Alice. 
Hence, Alice assigns the same message to each of these points and passes the state of the protocol to Bob. 
Because Bob cannot distinguish between these points and must perform a deterministic protocol, then Bob must output the same result for each of these points. 
Recall that we consider Bob's input $y\in\{0,1\}^n$ as fixed. 
Consider the family of functions 
\[\mathcal{F}=\{f: f(x\oplus y)=b\text{ or } f(x\oplus y)=D\text{ for all }x\in\mathcal{A}\}.\]
Thus, if $S$ contains at least five points, there exists $f\in\mathcal{F}$ such that Bob errs on at least $\frac{2}{5}$ fraction of the points in $S$ by setting $f(x\oplus y)=b$ to at least $\left\lfloor\frac{|S|-1}{2}\right\rfloor$ of the points $x\in S$ and similarly for $f(x\oplus y)=D$. 
Moreover, since Alice partitions the points in $\mathcal{A}$, then there exists an $f\in\mathcal{F}$ such that Bob errs on at least $\frac{2}{5}$ fraction on \emph{all} points that are represented by the same message as at least five other points. 
Hence, the total number of inputs that Bob errs is at least $\frac{2}{5}\left(2^{\alpha n}-6\cdot 2^{\alpha n/4}\right)>\frac{1}{3}\cdot 2^{\alpha n}$ for sufficiently large values of $n$. 
This contradicts the assumption that the communication protocol, using at most $\frac{\alpha}{4}n$ bits, succeeds with probability $\frac{2}{3}$.
\end{proof}
By restricting the $n$-dimensional elements to $r$ coordinates and observing that the construction outputs matroids of rank $b$ or $D$ that are separated by a constant gap, we obtain the following result using the same proof:
\begin{corollary}
\label{cor:lb:rank}
There exists $c=\Omega(1)$ such that a $c$-approximation of matroid rank functions has randomized one-way communication complexity $R^{\rightarrow}_{1/3} = \Omega(r)$ where $r$ is the rank of the underlying matroid.
\end{corollary}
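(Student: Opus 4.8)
\medskip
\noindent\textbf{Proof proposal.} The plan is to re-run the lower-bound construction behind Theorem~\ref{thm:xor:lb:lipschitz} with the ambient dimension shrunk from $n$ to $n' = \Theta(r)$ (this is what it means to ``restrict the $n$-dimensional elements to $r$ coordinates''), and then to observe that the two rank values the construction produces differ by a fixed multiplicative factor, so that a sufficiently good constant-factor approximation is still forced to distinguish them. Since multiplicative approximation is scale-invariant there is no need for the $O(1/n)$-rescaling, and I will work with the unscaled $rank_{M_\cB}$. Concretely, instantiate Theorem~\ref{thm:bh-matroid} and the expander of~\cite{V12} with the same parameter block as in the proof of Theorem~\ref{thm:lb-lipschitz-submodular} but in dimension $n'$, i.e.
\[
|V| = n',\quad D = \frac{n'}{3\cdot 2^7},\quad L = 2^3,\quad \eps = 2^{-3},\quad k = 2^{\,n'/(3\cdot 2^{11})},\quad b = \frac{n'}{3\cdot 2^8} = \frac D2 .
\]
All constraints of Theorem~\ref{thm:bh-matroid} and of expander existence are homogeneous of degree one in the dimension, so they hold here exactly as in that proof; we obtain a family $\{M_\cB : \cB \subseteq \cA\}$ with $|\cA| = k = 2^{\Theta(r)}$, with $|A| = D$ for every $A \in \cA$, and with $rank_{M_\cB}(A) \in \{b,D\} = \{D/2, D\}$, the value depending only on whether $A \in \cB$. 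Padding the ground set with loops makes each $M_\cB$ a matroid on $[n]$; its rank is $\Theta(r)$ (at least $D$, since $\cA \setminus \cB$ contains independent sets of size $D$, and at most $n' = \Theta(r)$), and after reparametrizing $n'$ we may call it $r$; note $D/b = 2$ no matter what.

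Next, reduce $c$-approximation to the exact two-valued problem. Fix any constant $c$ with $1 < c < \sqrt{2}$ (or $1 < c < 2$ under the one-sided convention for multiplicative approximation). On inputs with Alice holding $x \in \cA$ and Bob holding $y = 0^n$ we have $rank_{M_\cB}(x \oplus y) = rank_{M_\cB}(x) \in \{D/2, D\}$, and no single real number is a valid $c$-approximation of both $D/2$ and $D$. Hence a one-way protocol that $c$-approximates $(rank_{M_\cB})^+$ with probability $\tfrac23$ in particular lets Bob decide, with probability $\tfrac23$, whether $rank_{M_\cB}(x) = D/2$ or $=D$ --- equivalently, whether $x \in \cB$. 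That is exactly the task ruled out in the proof of Theorem~\ref{thm:xor:lb:lipschitz}: a message of at most $\tfrac{\alpha}{4}n'$ bits forces all but an $o(1)$ fraction of $\cA$ into message-classes of size at least $5$, and, choosing $\cB$ as in Lemma~\ref{lem:union-bound} so that on each such class both rank values occur on a constant fraction of the class, Bob must err on at least a $\tfrac25$ fraction of each such class and hence on more than a $\tfrac13$ fraction of $\cA$ overall --- contradicting the success probability $\tfrac23$. Therefore $R^{\rightarrow}_{1/3}\bigl((rank_{M_\cB})^+\bigr) = \Omega(n') = \Omega(r)$, which is the assertion with $c = \Omega(1)$.

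The routine checks are: that the rescaled parameter block still satisfies $6DL/\eps \le n'$ and $2\log k/\eps \le D$ (it does, by degree-one homogeneity); that padding with loops leaves the matroid rank equal to $\Theta(r)$ so that the statement ``rank $r$'' is meaningful; and that the constant gap $D/b = 2$ survives, which it does by construction. The only genuine subtlety --- inherited verbatim from Theorem~\ref{thm:xor:lb:lipschitz} --- is the need to pin down a \emph{single} matroid $M_\cB$ that defeats \emph{every} low-communication protocol at once, which is handled by the probabilistic method over $\cB \subseteq \cA$ exactly as in Lemma~\ref{lem:union-bound} and introduces no new idea. I expect the only place needing a fresh line of argument to be the observation that a $c$-approximation with $c < \sqrt{2}$ already solves the $\{b,D\}$ distinguishing problem; everything else is a re-instantiation of the earlier proof in a smaller dimension.
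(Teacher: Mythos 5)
Your proposal is correct and follows essentially the same route as the paper, which proves the corollary by restricting the construction of Theorem~\ref{thm:xor:lb:lipschitz} to $\Theta(r)$ coordinates and noting that the two rank values $b$ and $D$ differ by a constant factor, so a constant-factor approximation must distinguish them and the same counting argument applies. You simply make explicit the details the paper leaves implicit (the rescaled parameter block, padding with loops, the threshold $c<\sqrt{2}$, and fixing a single $M_{\cB}$ via the probabilistic method), which is a faithful elaboration rather than a different approach.
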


\section*{Acknowledgments}
We would like to thank Swagato Sanyal for multiple discussions leading to this paper, including the proof of Theorem~\ref{thm:approx-f2-sketch-uniform} and Nikolai Karpov for his contributions to Section~\ref{sec:rank-2}. 
We would also like to thank Amit Chakrabarti, Qin Zhang and anonymous reviewers for their comments.

\bibliographystyle{alpha}
\bibliography{linsketch}
\appendix
\section{Background}
\subsection{Fourier Analysis}\label{app:fourier}
We consider functions\footnote{\label{fourierrange}
	In all Fourier-analytic arguments Boolean functions are treated as functions of the form $f : \ftwo^n \to \oo$ where $0$ is mapped to $1$ and $1$ is mapped to $-1$. Otherwise we use these two notations interchangeably.} from $\ftwo^n$ to 
$\R$.
For any fixed $n \ge 1$, the space of these functions forms an inner product space
with the inner product
$\left<f, g\right> = \E_{x \in \ftwo^n}[ f(x) g(x) ] = \frac1{2^n} \sum_{x \in \ftwo^n} f(x)g(x)$.
The $\ell_2$ norm of $f : \ftwo^n \to \R$ is
$\| f \|_2 = \sqrt{ \left< f, f \right>} = \sqrt{\E_x[ f(x)^2 ]}$
and the $\ell_2$ distance between two functions $f, g : \ftwo^n \to \R$ is 
the $\ell_2$ norm of the function $f - g$.
In other words, $\|f - g \|_2 = \sqrt{\left< f-g, f-g \right>} 
=  \sqrt{\frac1{2^n}\sum_{x \in \ftwo^n} (f(x) - g(x))^2}$.

For $x,y\in \ftwo^n$ we denote the inner product as $x \cdot y = \sum_{i=1}^n x_i y_i$.
For $\alpha \in \ftwo^n$, the \emph{character} 
$\chi_\alpha : \ftwo^n \to \oo$ is the function defined by
$
\chi_\alpha(x) = (-1)^{\alpha \cdot x}.
$
Characters form an orthonormal basis as $\langle \chi_\alpha, \chi_\beta \rangle = \delta_{\alpha\beta}$ where $\delta$ is the Kronecker symbol.
The \emph{Fourier coefficient} of $f : \ftwo^n \to \R$ corresponding to $\alpha$ is
$
\hat{f}(\alpha) = \E_x[ f(x) \chi_\alpha(x)].
$
The \emph{Fourier transform} of $f$ is the function $\hat{f} : \ftwo^n \to \R$
that returns the value of each Fourier coefficient of $f$. 
The Fourier $\ell_1$ norm, or the \emph{spectral norm} of $f$, is defined as $\|\hat{f}\|_1 := \sum_{\alpha \in \ftwo^n} |\hat{f}(\alpha)|$.

\begin{fact}[Parseval's identity]
	\label{parseval}
	For any $f : \ftwo^n \to \R$ it holds that
	$
	\|f \|_2 = \| \hat{f} \|_2 
	= \sqrt{ \sum_{\alpha \in \ftwo^n} \hat{f}(\alpha)^2 }.
	$
	Moreover, if $f : \ftwo^n \to \oo$ then $\|f\|_2 = \|\hat f\|_2 = 1$.
\end{fact}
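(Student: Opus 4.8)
The plan is to derive Parseval from Fourier inversion and the orthonormality of characters, both of which are essentially recorded already in the excerpt. The space of functions $f \colon \ftwo^n \to \R$ is a $2^n$-dimensional real inner product space under $\langle f,g\rangle = \E_x[f(x)g(x)]$, and the $2^n$ characters $\{\chi_\alpha\}_{\alpha \in \ftwo^n}$ satisfy $\langle \chi_\alpha,\chi_\beta\rangle = \delta_{\alpha\beta}$ (this follows from $\chi_\alpha\chi_\beta = \chi_{\alpha+\beta}$ and the fact that $\E_x[\chi_\gamma(x)]$ is $1$ when $\gamma = 0$ and $0$ otherwise). Hence the characters form an orthonormal basis, and every $f$ admits the Fourier expansion $f = \sum_{\alpha \in \ftwo^n} \hat f(\alpha)\chi_\alpha$, where the coefficient of $\chi_\alpha$ in this expansion is forced to equal $\langle f,\chi_\alpha\rangle = \E_x[f(x)\chi_\alpha(x)] = \hat f(\alpha)$ by orthonormality.

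Given this, the first identity is a one-line computation. I would substitute the Fourier expansion into both slots of $\|f\|_2^2 = \langle f,f\rangle$ and use bilinearity of the inner product:
\[
\|f\|_2^2 = \langle f,f\rangle = \sum_{\alpha,\beta \in \ftwo^n} \hat f(\alpha)\hat f(\beta)\,\langle \chi_\alpha,\chi_\beta\rangle = \sum_{\alpha \in \ftwo^n} \hat f(\alpha)^2,
\]
where all off-diagonal terms vanish by orthonormality. Taking square roots gives $\|f\|_2 = \|\hat f\|_2 = \sqrt{\sum_{\alpha} \hat f(\alpha)^2}$, which is exactly the claimed equality.

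For the ``moreover'' clause, I would observe that if $f \colon \ftwo^n \to \oo$ then $f(x)^2 = 1$ for every $x \in \ftwo^n$, so $\|f\|_2^2 = \E_x[f(x)^2] = 1$, hence $\|f\|_2 = 1$; combining this with the first part gives $\|\hat f\|_2 = \|f\|_2 = 1$. There is no real obstacle here: the statement is routine linear algebra once orthonormality of the characters is in hand, and the only step that needs even a brief justification — that orthonormality — is already asserted in the excerpt, so the write-up can simply cite it and carry out the substitution above.
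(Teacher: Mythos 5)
Your proof is correct and is exactly the standard argument: expand $f$ in the orthonormal character basis and compute $\langle f,f\rangle$ by bilinearity, with the ``moreover'' clause following from $f(x)^2=1$ pointwise. The paper states this as a folklore fact without proof (deferring to the standard text of O'Donnell), and your write-up is precisely the textbook justification it implicitly relies on, so there is nothing to reconcile.
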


\subsubsection{Fourier $\ell_1$-Sampling}\label{sec:fourier-sampling}

The following Fourier $\ell_1$-sampling primitive is based on the work of Bruck and Smolensky~\cite{B92} (see also~\cite{G97,MO09}). 
Here we need to analyze its properties for approximating real-valued functions instead of computing Boolean functions as in~\cite{G97,MO09}.

\begin{proposition}[Fourier $\ell_1$-sampling]\label{prop:l1-sampling}
	For any $f \colon \ftwo^n \rightarrow \mathbb R$ it holds that $\rla{\eps}(f) = O(\|\hat f\|_1^2/\eps)$. 
\end{proposition}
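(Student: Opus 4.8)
The plan is to mimic the classical Bruck--Smolensky $\ell_1$-sampling argument, but with an error analysis in expected squared distance rather than in probability of a Boolean error. First I would write $f = \sum_{\alpha} \hat f(\alpha) \chi_\alpha$ and define the probability distribution $p(\alpha) = |\hat f(\alpha)| / \|\hat f\|_1$ over $\ftwo^n$. Sampling $\alpha$ according to $p$ and setting $Z = \|\hat f\|_1 \cdot \mathrm{sgn}(\hat f(\alpha)) \cdot \chi_\alpha(x)$ gives, for each fixed $x$, an unbiased single-sample estimator of $f(x)$: indeed $\E_\alpha[Z] = \sum_\alpha p(\alpha)\, \|\hat f\|_1\, \mathrm{sgn}(\hat f(\alpha))\, \chi_\alpha(x) = \sum_\alpha \hat f(\alpha)\chi_\alpha(x) = f(x)$. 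Its second moment is bounded by $\E_\alpha[Z^2] = \|\hat f\|_1^2 \sum_\alpha p(\alpha) \chi_\alpha(x)^2 = \|\hat f\|_1^2$, since $\chi_\alpha(x) \in \{+1,-1\}$; hence $\Var(Z) \le \|\hat f\|_1^2$.

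Next I would average $k = \lceil \|\hat f\|_1^2 / \eps \rceil$ independent copies $Z_1, \dots, Z_k$, each obtained from an independently sampled $\alpha_j \sim p$, and let $g$ output $\bar Z = \frac1k \sum_{j=1}^k Z_j$. By independence and unbiasedness, $\E[(\bar Z - f(x))^2] = \Var(\bar Z) = \Var(Z)/k \le \|\hat f\|_1^2 / k \le \eps$ for every fixed $x$. This matches exactly the guarantee of Definition~\ref{def:rand-f2-sketch}.

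It remains to check that this estimator is implementable as an $\ftwo$-sketch of dimension $k$, i.e. that $g$ depends on $x$ only through $k$ parities. This is the step I expect to require the most care, though it is not hard: the randomness is entirely in the choice of $\alpha_1, \dots, \alpha_k$ (shared between the sketching matrix and the post-processing function), and once these are fixed, the sketch computes the $k$ values $\chi_{\alpha_1}(x), \dots, \chi_{\alpha_k}(x)$ — equivalently the bits $\alpha_j \cdot x$ — which are exactly $k$ linear functions over $\ftwo$; in the notation of the paper, $\mathbf S_j = \{i : (\alpha_j)_i = 1\}$. The post-processing function $g \colon \ftwo^k \to \R$ takes the vector of parity bits $(b_1,\dots,b_k)$ and outputs $\frac{\|\hat f\|_1}{k}\sum_{j=1}^k \mathrm{sgn}(\hat f(\alpha_j)) (-1)^{b_j}$, which recovers $\bar Z$. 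Thus $\rla{\eps}(f) \le k = O(\|\hat f\|_1^2/\eps)$, as claimed. A minor point to flag is that if $\|\hat f\|_1 = 0$ then $f \equiv 0$ and a zero-dimensional sketch suffices, so we may assume $\|\hat f\|_1 > 0$; and we should note $\|\hat f\|_1$ is finite since the domain is finite.
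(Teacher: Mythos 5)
Your proposal is correct and follows essentially the same route as the paper: the same $\ell_1$-sampling distribution $p(\alpha)=|\hat f(\alpha)|/\|\hat f\|_1$, the same unbiased estimator $Z=\|\hat f\|_1\,\mathrm{sgn}(\hat f(\alpha))\chi_\alpha(x)$ with variance at most $\|\hat f\|_1^2$, and averaging $O(\|\hat f\|_1^2/\eps)$ independent samples. The only difference is that you explicitly verify the estimator is realizable as a $k$-dimensional $\ftwo$-sketch with a post-processing map on parity bits, a detail the paper leaves implicit; this is a welcome but not substantive addition.
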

\begin{proof}
	Sample $\mathbf{S} \in \{0,1\}^n$ from the following distribution: $\Pr[\mathbf{S} = S] = \frac{|\hat f(S)|}{\|\hat f\|_1}$.
	Let $Z = sgn(\hat f(\mathbf{S}))\chi_{\mathbf{S}}(x) \|\hat f\|_1$.
	Then: 
	\begin{align*}
	&&\E[Z] =& \E_{\mathbf{S}}[sgn(\hat f(\mathbf{S}))\chi_{\mathbf{S}}(x) \|\hat f\|_1] \\
	&&=& \sum_{S \in \{0,1\}^n} sgn(\hat f(S)) \frac{|\hat f(S)|}{\|\hat f\|_1} \chi_S(x) \|\hat f\|_1 \\
	& &=& \sum_{S \in \{0,1\}^n} \hat f(S) \chi_S(x) \\
	 &&=& f(x).
	\end{align*}
	Variance of $Z$ is:
	\begin{align*}
	&&Var[Z] =& \E_{\mathbf{S}}\left[\left(sgn(\hat f(\mathbf{S}))\chi_{\mathbf{S}}(x) \|\hat f\|_1 - f(x)\right)^2\right] \\ 
	&&=& \|\hat f\|_1^2 + f(x)^2 - 2 \|\hat f\|_1 f(x) \E_{\mathbf{S}}[sgn(\hat f(\mathbf{S})) \chi_{\mathbf{S}}(x)] \\
	&&=& \|\hat f\|_1^2 - f(x)^2 \\
	&&\le& \|\hat f\|_1^2.
	\end{align*}
	
	Thus averaging $Z$ over $\frac{\|\hat f\|_1^2}{\epsilon}$ repetitions reduces variance to at most $\epsilon$ as desired.
\end{proof}

It follows from Proposition~\ref{prop:l1-sampling} that additive and coverage functions admit small approximate $\ftwo$-sketches.
\begin{corollary}\label{cor:additive}
Let $\ell_w(x) \colon \{0,1\}^n \rightarrow \mathbb R$ be an additive function $\ell_w(x) = \sum_{i = 1}^n w_i x_i$. Then $$\rla{\eps}(\ell_w) = O(\min(\|w\|_1^2/\eps, n)).$$
\end{corollary}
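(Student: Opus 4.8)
The plan is to apply Proposition~\ref{prop:l1-sampling} together with the observation that any real-valued function on $\ftwo^n$ trivially admits an exact (hence $\eps$-approximate) $\ftwo$-sketch of dimension $n$. First I would compute the spectral norm of an additive function $\ell_w(x) = \sum_{i=1}^n w_i x_i$. Writing each coordinate $x_i \in \{0,1\}$ as $x_i = \frac{1 - \chi_{e_i}(x)}{2}$ (using the $\oo$-valued convention for characters), we get $\ell_w(x) = \frac12 \sum_i w_i - \frac12 \sum_i w_i \chi_{e_i}(x)$. So the Fourier expansion of $\ell_w$ is supported on the empty set and the singletons $\{e_i\}$, with $\hat\ell_w(\emptyset) = \frac12 \sum_i w_i$ and $\hat\ell_w(e_i) = -\frac12 w_i$. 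Hence $\|\hat\ell_w\|_1 = \frac12 \left(\left|\sum_i w_i\right| + \sum_i |w_i|\right) \le \sum_i |w_i| = \|w\|_1$.

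Then Proposition~\ref{prop:l1-sampling} gives $\rla{\eps}(\ell_w) = O(\|\hat\ell_w\|_1^2 / \eps) = O(\|w\|_1^2/\eps)$, which is one of the two terms in the min. For the other term, note that $\ell_w$ depends on all $n$ input bits, so reading off the $n$ parities $\chi_{e_1}(x), \dots, \chi_{e_n}(x)$ (i.e.\ $x$ itself) and applying $g = \ell_w$ as the post-processing function yields an exact sketch of dimension $n$, so $\rla{\eps}(\ell_w) \le \rl{0}(\ell_w) \le n$ for every $\eps \ge 0$. Taking the better of the two bounds gives $\rla{\eps}(\ell_w) = O(\min(\|w\|_1^2/\eps, n))$.

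I do not anticipate any real obstacle here — the statement is a direct corollary. The only mild care needed is the conversion between the $\{0,1\}$-valued input convention used in the definition of additive functions and the $\oo$-valued character convention used for the Fourier transform (footnote~\ref{fourierrange}), and the resulting factor-of-$\frac12$ bookkeeping in the spectral norm, none of which affects the asymptotics. One could also remark that the $n$ bound is just the trivial "store all of $x$" sketch, so nothing beyond Proposition~\ref{prop:l1-sampling} is actually required.
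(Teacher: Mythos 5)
Your proposal is correct and follows essentially the same route as the paper: the paper's proof is just the observation that $\|\hat\ell_w\|_1 = O(\|w\|_1)$ combined with Proposition~\ref{prop:l1-sampling}, with the trivial dimension-$n$ sketch giving the other term of the min. Your explicit Fourier computation and the factor-of-$\frac12$ bookkeeping simply fill in the details the paper leaves implicit.
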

\begin{proof}
Note that $\|\hat \ell_w\|_1 = O(\|w\|_1)$ and hence the bound follows.
\end{proof}

\begin{corollary}\label{cor:coverage}
If $f \colon \ftwo^n \to [0,1]$ is a coverage function then $\rla{\eps}(f) = O(1/\eps)$.
\end{corollary}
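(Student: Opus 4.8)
\textbf{Proof proposal for Corollary~\ref{cor:coverage}.}
The plan is to bound the spectral norm $\|\hat f\|_1$ of a coverage function $f \colon \ftwo^n \to [0,1]$ by $O(1)$ and then invoke Proposition~\ref{prop:l1-sampling} (Fourier $\ell_1$-sampling), which gives $\rla{\eps}(f) = O(\|\hat f\|_1^2/\eps) = O(1/\eps)$. So the entire task reduces to showing that every $[0,1]$-valued coverage function has spectral norm at most some absolute constant (in fact at most $1$, or perhaps $2$).

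First I would write $f$ in its defining form: there is a universe $U$ with weights $w_i \ge 0$ and sets $A_1, \dots, A_n \subseteq U$ with $f(S) = \sum_{i \in \cup_{j \in S} A_j} w_i$. The key structural observation is that $f$ decomposes as a non-negative combination of ``elementary'' coverage functions: for each element $u \in U$, let $T_u = \{ j \in [n] \colon u \in A_j \}$ be the set of coordinates whose sets contain $u$, and let $g_u(x) = \bigvee_{j \in T_u} x_j$ be the corresponding disjunction (equivalently $g_u(S) = 1$ iff $S \cap T_u \ne \emptyset$). Then $f(x) = \sum_{u \in U} w_u\, g_u(x)$. By the triangle inequality for the Fourier $\ell_1$ norm, $\|\hat f\|_1 \le \sum_{u \in U} w_u \|\hat g_u\|_1$. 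Next I would compute the spectrum of a single ``OR'' function $g_T(x) = \bigvee_{j \in T} x_j$: using the $\{+1,-1\}$ encoding (footnote~\ref{fourierrange}), $g_T$ as a real function equals $\frac12 + \frac12 \prod_{j \in T}\chi_{\{j\}}$ up to sign conventions, i.e. it has exactly two nonzero Fourier coefficients, the empty-set coefficient and the coefficient on $T$, each of magnitude $\frac12$ (in the $0/1$ encoding the empty coefficient is slightly different but the $\ell_1$ norm is still $\Theta(1)$); hence $\|\hat g_T\|_1 = 1$. Therefore $\|\hat f\|_1 \le \sum_{u \in U} w_u$. Finally, the normalization $f \colon \ftwo^n \to [0,1]$ controls $\sum_u w_u$: evaluating $f$ at $S = [n]$ (which covers every element $u$ with $T_u \ne \emptyset$, and elements with $T_u = \emptyset$ contribute nothing) gives $f([n]) = \sum_{u : T_u \ne \emptyset} w_u \le 1$, so $\|\hat f\|_1 \le 1$ (treating the degenerate elements as contributing zero, or absorbing them into a harmless additive constant). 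Plugging $\|\hat f\|_1 = O(1)$ into Proposition~\ref{prop:l1-sampling} completes the proof.

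The only real subtlety — and thus the step I would be most careful about — is the Fourier encoding bookkeeping for the disjunction $g_T$ and the additive-constant issue: in the $0/1$ output convention $g_T$ has a constant term plus one parity term, and one must make sure the constant term's $\ell_1$ contribution is still $O(1)$ when summed over all $u \in U$ with the weights (it is, since $\sum_u w_u \le 1$). Equivalently one can subtract off the constant and argue that $f$ minus a scalar has spectral norm $O(1)$ and that shifting by a scalar costs only an additive $O(1)$, which the post-processing function $g$ absorbs for free. Everything else is an immediate application of Proposition~\ref{prop:l1-sampling}.
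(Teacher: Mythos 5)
Your overall route is the same as the paper's: bound the spectral norm of a $[0,1]$-valued coverage function by an absolute constant and then invoke Proposition~\ref{prop:l1-sampling}. The paper simply cites Lemma~3.1 of~\cite{FK14} for $\|\hat f\|_1 \le 2$, whereas you reprove that bound via the decomposition $f = \sum_{u \in U} w_u\, g_{T_u}$ into weighted disjunctions together with $f([n]) = \sum_{u : T_u \neq \emptyset} w_u \le 1$; that decomposition and normalization step are exactly the content of the cited lemma, so in spirit the two arguments coincide.

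However, your computation of the spectrum of a single disjunction is wrong as stated, and this is the one step that would fail. The function $\frac12 \pm \frac12 \prod_{j \in T} \chi_{\{j\}}$ is the \emph{parity} (XOR) of the coordinates in $T$, not the OR. The disjunction $g_T(x) = \bigvee_{j \in T} x_j$ in the $0/1$ output convention expands as
\begin{align*}
g_T(x) \;=\; 1 - \prod_{j \in T} \frac{1 + \chi_{\{j\}}(x)}{2} \;=\; \bigl(1 - 2^{-|T|}\bigr) \;-\; 2^{-|T|} \sum_{\emptyset \neq S \subseteq T} \chi_S(x),
\end{align*}
so it has $2^{|T|}$ nonzero Fourier coefficients, not two, and
$\|\hat g_T\|_1 = \bigl(1 - 2^{-|T|}\bigr) + \bigl(2^{|T|} - 1\bigr)2^{-|T|} = 2\bigl(1 - 2^{-|T|}\bigr) \le 2$,
not $1$. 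Fortunately the conclusion you need survives: by the triangle inequality $\|\hat f\|_1 \le \sum_u w_u \|\hat g_{T_u}\|_1 \le 2 \sum_u w_u \le 2$, which matches the constant in~\cite{FK14}, and Proposition~\ref{prop:l1-sampling} then gives $\rla{\eps}(f) = O(4/\eps) = O(1/\eps)$. So the proof is repairable with this one-line correction, but as written the key Fourier claim (``exactly two nonzero coefficients of magnitude $\frac12$'') describes the wrong function. Your handling of the degenerate elements is fine (if $T_u = \emptyset$ the element contributes zero to every $f(S)$, so no additive constant needs to be absorbed at all), and the monotonicity-based bound $f([n]) \le 1$ on the total weight is correct.
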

\begin{proof}
It is known (see Lemma 3.1 in~\cite{FK14}) that for such coverage functions $\|\hat f\|_1 \le 2$ and hence the desired bound follows from Proposition~\ref{prop:l1-sampling}.
\end{proof}

However, direct Fourier $\ell_1$-sampling can fail even in some fairly basic situations, e.g. even for budget-additive functions.
Consider, for example, the ``hockey stick'' function: $hs_{\frac12}(x) = \min\left(\frac{1}{2}, \frac1n \sum_{i = 1}^n x_i\right)$.
Fourier spectrum of this function is well-understood (see. e.g.~\cite{FV15}) and in particular $\|\hat f\|_1 = 2^{\Omega(n)}$.
Nevertheless small sketches for budget-additive functions can be constructed using the following composition theorem.

A function $f \colon \mathbb R^n \to \mathbb R$ is $\alpha$-Lipschitz if $|f(x) - f(y)| \le \alpha \|x-y\|_2$ for any $x,y \in \mathbb R^n$ and some constant $\alpha > 0$\footnote{Note that this definition is slightly different from the corresponding definition for functions over the Boolean hypercube}. 
\begin{proposition}[Composition theorem]\label{prop:composition}
	If $h \colon \mathbb R^t \to \mathbb R$ is an $\alpha$-Lipschitz function then for any functions $f_1, \dots, f_t$ where $f_i \colon \ftwo^n \to \mathbb R$ it holds that:
	$$\rla{\eps}(h(f_1,\dots, f_t)) \le \sum_{i = 1}^t \rla{\eps / \alpha^2 t}(f_i).$$
\end{proposition}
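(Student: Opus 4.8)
The plan is to build a sketch for $h(f_1,\dots,f_t)$ by concatenating near-optimal sketches for each $f_i$ and post-processing with $h$. First I would fix, for each $i \in [t]$, a distribution over $k_i := \rla{\eps/(\alpha^2 t)}(f_i)$ parities $\chi_{\mathbf S^{(i)}_1},\dots,\chi_{\mathbf S^{(i)}_{k_i}}$ together with a post-processing function $g_i \colon \ftwo^{k_i} \to \mathbb R$ achieving expected squared error at most $\eps/(\alpha^2 t)$ on every input $x$. Drawing all these samples independently, I obtain a combined sketch of dimension $\sum_i k_i$ whose output on input $x$ is the tuple $\bigl(g_1(\cdot),\dots,g_t(\cdot)\bigr)$; I then define the post-processing function for the composed sketch to be $G(x) := h\bigl(g_1(\cdot),\dots,g_t(\cdot)\bigr)$, i.e. apply $h$ to the $t$ individual estimators.

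The core of the argument is then a short calculation bounding $\E[(G(x)-h(f_1(x),\dots,f_t(x)))^2]$. Writing $Y_i$ for the (random) value $g_i(\chi_{\mathbf S^{(i)}_1}(x),\dots)$ and $a_i := f_i(x)$, the $\alpha$-Lipschitz property of $h$ gives pointwise $|h(Y_1,\dots,Y_t) - h(a_1,\dots,a_t)| \le \alpha \,\|Y - a\|_2 = \alpha \sqrt{\sum_i (Y_i - a_i)^2}$, hence $(G(x) - h(a))^2 \le \alpha^2 \sum_{i=1}^t (Y_i - a_i)^2$. Taking expectations over the (independent) sketch randomness and using linearity of expectation, $\E[(G(x)-h(a))^2] \le \alpha^2 \sum_{i=1}^t \E[(Y_i - a_i)^2] \le \alpha^2 \cdot t \cdot \frac{\eps}{\alpha^2 t} = \eps$, which is exactly the required bound, and it holds for every fixed $x$. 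This shows $\rla{\eps}(h(f_1,\dots,f_t)) \le \sum_{i=1}^t k_i$, as claimed.

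There is essentially no hard obstacle here; the only points requiring a little care are bookkeeping ones. One should note that the individual sketches use independent randomness (shared randomness is free in this model), so the expectations decouple cleanly; and one should use the correct notion of $\alpha$-Lipschitz for $h$ — namely the $\ell_2$-metric version $|h(x)-h(y)| \le \alpha\|x-y\|_2$ defined just above the proposition, rather than the coordinatewise hypercube notion — since that is precisely what makes the $\sqrt{\sum_i(\cdot)^2}$ bound and the subsequent cancellation of $\alpha^2$ work. It is also worth remarking that the bound is not required to be uniform in any auxiliary parameter: the guarantee of Definition~\ref{def:rand-f2-sketch} is worst-case over $x$, and since our per-$x$ bound is $\eps$ for \emph{every} $x$, the composed sketch satisfies the definition. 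Finally, if one wants the distributional version (Definition~\ref{def:rand-dist-f2-sketch}) the identical argument works by additionally taking an expectation over $x \sim \D$ at the end.
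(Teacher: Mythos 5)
Your proposal is correct and follows essentially the same route as the paper: concatenate the per-coordinate sketches with error $\eps/(\alpha^2 t)$ each, apply $h$ to the estimates, and use the $\ell_2$-Lipschitz bound $(h(Y)-h(a))^2 \le \alpha^2\|Y-a\|_2^2$ together with linearity of expectation to conclude the total error is at most $\eps$. The paper's proof is just a terser version of this same calculation.
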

\begin{proof}
	Let $f'_1, \dots, f'_t$ be the sketches of $f_1, \dots, f_t$ respectively. Applying $h$ to their values we have:
	\begin{align*}
	\E[(h(f'_1, \dots, f'_t) - h(f_1, \dots, f_t))^2] \le \E[\alpha^2 \|f' - f\|_2^2] 
	= \alpha^2 \sum_{i = 1}^t \E[(f'_i(x) - f_i(x))^2] 
	\le \eps. \qedhere
	\end{align*}
\end{proof}
From Corollary~\ref{cor:additive} and Proposition~\ref{prop:composition} the following bound on approximate $\ftwo$-sketch complexity of budget-additive functions follows immediately. 
\begin{corollary}\label{cor:budget-additive}
	For any budget additive function $f(x) = \min(b, \sum_{i = 1}^n w_i x_i)$ it holds that:
	$$\rla{\eps}(f) = O(\min(\|w\|_1^2/\eps, n)).$$ 
\end{corollary}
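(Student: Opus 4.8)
The plan is to recognize $f$ as a $1$-Lipschitz post-composition of a \emph{single} additive function and then invoke the composition theorem together with the spectral-norm bound for additive functions. Concretely, write $f = h \circ \ell_w$, where $\ell_w(x) = \sum_{i=1}^n w_i x_i$ is additive and $h \colon \mathbb R \to \mathbb R$ is the univariate function $h(z) = \min(b, z)$. The first step is to observe that $h$ is $1$-Lipschitz on $\mathbb R$: for any $z, z' \in \mathbb R$ we have $|\min(b,z) - \min(b,z')| \le |z - z'|$, which is exactly the $\alpha = 1$, $t = 1$ instance of the hypothesis of Proposition~\ref{prop:composition} (recall that the Lipschitz notion there refers to the Euclidean metric on $\mathbb R^t$, which for $t = 1$ is just the absolute value).

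The second step is to apply Proposition~\ref{prop:composition} with $t = 1$ and $\alpha = 1$, which yields $\rla{\eps}(f) = \rla{\eps}(h(\ell_w)) \le \rla{\eps / (\alpha^2 \cdot 1)}(\ell_w) = \rla{\eps}(\ell_w)$. The third step is then to plug in Corollary~\ref{cor:additive}, namely $\rla{\eps}(\ell_w) = O(\min(\|w\|_1^2/\eps, n))$, which combined with the previous inequality gives the claimed bound $\rla{\eps}(f) = O(\min(\|w\|_1^2/\eps, n))$. (As a sanity check, the $O(n)$ side of the minimum also has an elementary direct justification: the sketch consisting of the $n$ singleton parities $\chi_{\{1\}}, \dots, \chi_{\{n\}}$ recovers $x$, hence $f(x)$, exactly — but this is already subsumed by Corollary~\ref{cor:additive}.)

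I do not anticipate a genuine obstacle here; the argument is a one-line application of the two cited results. The only point that requires a moment of care is matching conventions: one must use the Euclidean-metric version of the Lipschitz property in Proposition~\ref{prop:composition} (not the hypercube version), but since we compose with a function of a single real variable this reduces to the trivial inequality for $h$ stated above, so no quantitative loss is incurred and the error parameter $\eps$ passes through unchanged.
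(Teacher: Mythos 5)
Your proof is correct and follows essentially the same route as the paper: decompose $f$ as $h \circ \ell_w$ with $h(z) = \min(b,z)$ being $1$-Lipschitz, apply Proposition~\ref{prop:composition} with $\alpha = t = 1$, and conclude via Corollary~\ref{cor:additive}. The remark about matching the Lipschitz convention is a fine (and harmless) extra precaution.
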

\begin{proof}
In the composition theorem above, we set $h = \min(b, z)$ and hence $h$ is a $1$-Lipschitz function of $z$. Hence we can set $\alpha = 1$ and $t = 1$ by treating $\sum_{i = 1}^n w_i x_i$ as $f_1$ and the proof follows.
\end{proof}

\subsubsection{Optimality of Fourier $\ell_1$-Sampling}\label{sec:fourier-l1-optimal}

Let $\ell_w(x) \colon \{0,1\}^n \rightarrow \mathbb R$ be an additive function $\ell_w(x) = \sum_{i = 1}^n w_i x_i$ parametrized by $w \in \mathbb R^n$ with non-negative weights $w_1, w_2 \dots, w_n$. 
The corresponding XOR-function $\ell_w^+(x,y)$ gives weighted Hamming distance between vectors $x$ and $y$. The following result can be seen as a generalization of the unweighted Gap Hamming lower bound due to Jayram, Kumar and Sivakumar~\cite{JKS08} (see also~\cite{IW03,CR12}).

\begin{theorem}\label{thm:weighted-linear}
	For any additive function $\ell_w$ of the form $\ell_w(x) = \sum_{i = 1}^n w_i x_i$ where $w_i \ge 0$  it holds that for any $\epsilon \ge \|w\|_2^2$:
		$$\rca{\epsilon}(\ell_w^+) = \Omega\left(\frac{\|w\|_1^2}{\epsilon} \right).$$
\end{theorem}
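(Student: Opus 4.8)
The plan is to reduce from the (randomized one-way) communication complexity of $\Index$, or more precisely to directly adapt the Jayram--Kumar--Sivakumar Gap Hamming lower bound to the weighted setting. Recall that in the unweighted Gap Hamming problem, Alice holds $x \in \ftwo^n$, Bob holds $y \in \ftwo^n$, and the task is to decide whether the Hamming distance $\distHam(x,y)$ is at least $n/2 + \sqrt n$ or at most $n/2 - \sqrt n$; this requires $\Omega(n)$ bits of one-way communication. Since $\ell_w^+(x,y) = \sum_i w_i (x_i \oplus y_i)$ is exactly the $w$-weighted Hamming distance, an $\rca{\epsilon}$-protocol for approximating $\ell_w^+$ with expected squared error $\epsilon$ would let Bob estimate this weighted distance to within additive error $O(\sqrt\epsilon)$ with constant probability (by Markov's inequality on the squared error). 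So the strategy is: embed a hard unweighted Gap Hamming instance of the right dimension into the ``heavy'' coordinates of $\ell_w$, so that resolving the gap there requires distinguishing weighted-distance values that differ by more than $c\sqrt\epsilon$.

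First I would reduce to the case where all weights are equal: by a standard splitting/rounding argument, group coordinates into buckets of roughly equal weight, or simply observe that the worst case for the bound $\|w\|_1^2/\epsilon$ (given the constraint $\epsilon \ge \|w\|_2^2$) is $w = (s, s, \dots, s, 0, \dots, 0)$ with $m$ nonzero entries, where $\|w\|_1 = ms$, $\|w\|_2^2 = m s^2$, and $\|w\|_1^2/\epsilon = m^2 s^2/\epsilon$. The constraint $\epsilon \ge \|w\|_2^2 = m s^2$ gives $s^2 \le \epsilon/m$, so $\|w\|_1^2/\epsilon = m^2 s^2/\epsilon \le m$; conversely choosing $s^2 = \epsilon/m$ makes $\|w\|_1^2/\epsilon = m$. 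So the target lower bound is $\Omega(m)$ for the uniform-weight function on $m$ coordinates with weight $s = \sqrt{\epsilon/m}$. Now embed a Gap Hamming instance of dimension $m$: Alice and Bob's inputs on these $m$ coordinates give $\ell_w^+ = s \cdot \distHam(x,y)$, and the two cases of Gap Hamming correspond to $\ell_w^+$ values $s(m/2 + \sqrt m)$ versus $s(m/2 - \sqrt m)$, a gap of $2 s\sqrt m = 2\sqrt{\epsilon}$. An estimator with expected squared error $\epsilon/C$ (for a suitable constant $C$) distinguishes these with constant probability, so it would solve Gap Hamming on $m$ bits, forcing $\Omega(m)$ communication. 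Rescaling $\epsilon$ by the constant $C$ only affects constants in the $\Omega(\cdot)$.

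The main obstacle is making the reduction from an expected-squared-error estimator to a Boolean Gap Hamming decision procedure fully rigorous: a protocol guaranteeing $\E[(g - \ell_w^+)^2] \le \epsilon$ pointwise allows, via Markov, $|g - \ell_w^+| \le \sqrt{3\epsilon}$ with probability $\ge 2/3$, which is just barely too weak to beat a gap of $2\sqrt\epsilon$. The fix is to first drive the error down by a constant factor: run the given protocol a constant number of times with fresh public randomness and take the median (or mean) of Bob's outputs — averaging reduces the variance, so with $O(1)$ repetitions we get expected squared error at most $\epsilon/100$, hence $|g - \ell_w^+| < \sqrt{\epsilon}/5 < s\sqrt m$ with probability $\ge 2/3$, which does separate the two Gap Hamming cases. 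Since this costs only a constant factor in communication, the $\Omega(m) = \Omega(\|w\|_1^2/\epsilon)$ bound survives. One must also check that the distribution on inputs witnessing hardness of Gap Hamming (a product-ish distribution on $\ftwo^m$) can be padded out to inputs in $\ftwo^n$ by fixing the low-weight coordinates arbitrarily, which is immediate since those coordinates contribute a fixed known amount to $\ell_w^+$ that Bob can subtract off.
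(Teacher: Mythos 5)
Your overall plan (a Gap-Hamming-style reduction) is in the same spirit as the paper, which reduces directly from $\Index$ via the Jayram--Kumar--Sivakumar random-sign construction; but as written your argument has two genuine gaps. First, the reduction to uniform weights is not justified. The theorem is universally quantified over $w$: you must lower-bound the complexity of the particular function $\ell_w^+$ for every nonnegative $w$, so observing that $w=(s,\dots,s,0,\dots,0)$ is the ``worst case'' for the expression $\|w\|_1^2/\eps$ subject to $\eps\ge\|w\|_2^2$ proves nothing about, say, geometrically decaying weights. The bucketing alternative is the right instinct but is not a routine step you can wave at: individual weights can be as large as $\|w\|_2\le\sqrt\eps$, which can exceed the natural bucket size $\approx\eps/\|w\|_1$, so heavy coordinates have to be argued away (e.g.\ via $\sum_{i\in H}w_i\le\sqrt{|H|}\,\|w\|_2$ together with $\|w\|_2^2\le\eps$), and even then the buckets are only equal up to constant factors, so what you embed is still a \emph{weighted} (near-uniform) Gap Hamming instance to which the off-the-shelf unweighted one-way bound does not apply verbatim. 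You would end up needing a concentration argument for weighted sums of the biased coordinates anyway --- which is exactly how the paper proceeds: it runs the $\Index$ reduction for arbitrary $w$ in one shot and controls the weighted Hamming distance with a martingale/Azuma bound, choosing the number of $\Index$ bits $k=\Theta(\|w\|_1^2/\eps)$.

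Second, the amplification step you use to fix the gap-versus-error mismatch is incorrect. The expected squared error decomposes as variance plus squared bias, and repeating the protocol with fresh public randomness and averaging (or taking a median) reduces only the variance term: a protocol that deterministically outputs $\ell_w^+(x,y)+\sqrt{\eps}$ has expected squared error exactly $\eps$ and is unaffected by any number of repetitions, so the claim that $O(1)$ repetitions yield error $\eps/100$ is false in general. Your diagnosis of the problem was correct (Markov only gives accuracy $\sqrt{3\eps}$ with probability $2/3$, which does not resolve a gap of half-width $\sqrt{\eps}$), but the fix must go the other way, as in the paper: enlarge the separation in the construction rather than shrink the error, i.e.\ arrange the two cases of the embedded instance to differ by more than $4\sqrt{\eps}$ (a constant-factor larger gap, costing only a constant factor in the $\Omega(\|w\|_1^2/\eps)$ bound), so that an estimate within $2\sqrt{\eps}$ --- which Markov does provide with probability $3/4$ --- lands on the correct side.
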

\begin{proof}
	We use reduction from the standard communication problem $\Index$. 
	In this problem Alice is given $a \in \{0,1\}^n$ and Bob is given $t \in [n]$. Alice needs to send one message to Bob so that he can compute $a_t$. It is well-known that this requires linear communication:
	\begin{theorem}[\cite{KNR99}]
	$\rc{1/3}(\Index) = \Omega(n).$
    \end{theorem}
	
	Let $n$ be odd and $k$ be a parameter to be chosen later.
	Consider an instance of indexing where Alice has an input $a \in \{-1,1\}^k$ and Bob has an index $t \in [k]$.
	Draw $n$ random vectors $r_1, \dots, r_n$ where each $r_i$ is uniform over $\{-1,1\}^k$.
	Construct vectors $x,y \in \{-1,1\}^n$ as follows:
	\begin{align*}
	x_i = sign(\langle a, r_i\rangle), \quad\quad\quad y_i = sign(r_{i,t}), 
	\end{align*}
	where we define $sign(z) = -1$ if $z \le 0$ and $sign(z) = 1$ if $z > 0$.
	
	Note that if $a_t = 1$ then $\Pr[x_i = y_i] \ge \frac12 + \frac{c}{\sqrt{k}}$, otherwise $\Pr[x_i = y_i] \le \frac12 - \frac{c}{\sqrt{k}}$ for some absolute constant $c > 0$.
	Now consider the function $\ell_w^+(x,y) = \sum_{i = 1}^n w_i (x_i + y_i)$.
	We will show that for a suitable choice of $k$ with a large constant probability $\ell_w^+(x,y) > \frac12 \|w\|_1 + 2 \sqrt{\eps}$ if $a_t = 1$ and $\ell_w^+(x,y) < \frac12 \|w\|_1 - 2  \sqrt{\eps} = $ if $a_t = -1$.
    By Markov's inequality, a communication protocol for $\ell^+_w$ with expected squared error $\eps$ has squared error at most $4 \eps$ (and hence absolute error at most $2\sqrt{\eps}$) with probability at least $3/4$.
    Hence, such a protocol can distinguish these two cases with probability $3/4 - \xi$ where $\xi$ is the error probability introduced by the reduction.  If $\xi < 1/12$ then it can solve indexing on strings of length $k$ with probability at least $2/3$ and so a lower bound of $\Omega(k)$ follows.
	
	Indeed, consider the case $a_t = -1$, as the case $a_t = 1$ is symmetric.
	Let $Z_i$ be a random variable defined as $Z_i = w_i I[x_i = y_i]$.
	We have $\E[Z_i] \le w_i \left(\frac12 - \frac{c}{\sqrt{k}}\right)$.
	Let $Z = \sum_{i = 1}^n Z_i$, then:

	\begin{align*}
	\E[Z] \le \sum_{i = 1}^n w_i \left(\frac12 - \frac{c}{\sqrt{k}}\right) = \|w\|_1\left(\frac{1}{2} - \frac{c}{\sqrt{k}}\right). 
	\end{align*}
	
	Let $X_i = Z^{\le i} - \E[Z^{\le i}]$ where $Z^{\le i} = \sum_{j = 1}^i Z_j$.
	We have 
	\begin{align*}
	\E[X_{i + 1} | X_1, \dots,  X_i] &= \E[Z^{\le i + 1} - \E[Z^{\le i + 1}] | X_1, \dots, X_i] \\
	&= \E[Z_{i + 1} - \E[Z_{i + 1}] + X_i | X_1, \dots, X_i] \\
	&= \E[Z_{i + 1} - \E[Z_{i + 1}]] + X_i \\
	& =X_i,
	\end{align*}
	and hence $X_i$ is a martingale. Furthermore, for every $i$ it holds that:
	 $$|X_i - X_{i - 1}| =  |Z^{\le i} - \E[Z^{\le i}] - Z^{\le i - 1} + \E[Z^{\le i - 1}]| = |Z_i - \E[Z_i]| < |w_i|.$$
	We can now use the following form of Azuma's inequality:
	\begin{theorem}[Azuma's inequality]
	If $X_i$ for $i = 0, 1, \dots $ is a martingale such that $X_0 = 0$ and $|X_i - X_{i - 1}| < c_i$ almost surely then for every integer $m$ and positive real $\theta$ it holds that:
	$$\Pr[X_m  \ge \theta] \le e^{- \frac{\theta^2}{2 \sum_{i = 1}^m c_i^2}}.$$
	\end{theorem}

	Applying Azuma's inequality we have: $\Pr[X_n \ge \theta] \le e^{ - \frac{\theta^2}{2 \|w\|_2^2}}.$ Recall that $\E[Z^{\le n}] \le \|w\|_1\left(\frac{1}{2} - \frac{c}{\sqrt{k}}\right)$ and hence:
	$$\Pr\left[Z \ge \|w\|_1\left(\frac{1}{2} - \frac{c}{\sqrt{k}}\right) + \theta\right] \le e^{- \theta^2/2 \|w\|_2^2}.$$
    Setting $\theta = \frac{c\|w\|_1}{2\sqrt{k}}$ we have $\Pr\left[Z \ge \frac{\|w\|_1}{2} (1 - c/\sqrt{k})\right] \le e^{-\frac{c^2 \|w\|_1^2}{8k\|w\|_2^2}}$. If $k = \frac{c^2 \|w\|_1^2}{36\|w\|_2^2}$ then: 
    $$\Pr\left[Z \ge \frac{\|w\|_1}{2} -  3\|w\|_2\right] \le e^{-4}.$$

Using similar analysis for the case $a_t = 1$ we conclude that with probability at least $1 - 2 e^{-4} > 1 - 1/12$ in this case  it holds that $\Pr\left[Z \le \frac{\|w\|_1}{2} + 3\|w\|_2\right] \le e^{-4}$ and hence error probability $\xi$ introduced by the reduction is at most $1/12$. 
Thus using this reduction we obtain a protocol for solving indexing on strings of length $k$ with probability at least $2/3$ and the lower bound of $\Omega(k) = \Omega(\|w\|_1^2/\|w\|^2_2) = \Omega(\|w\|_1^2/ \epsilon)$ follows where we used the fact that $\epsilon \ge \|w\|_2^2$.
\end{proof}

\subsection{Information Theory} \label{app:information-theory}
Let $X$ be a random variable supported on a finite set $\{x_1, \ldots, x_s\}$. Let $\mathcal{E}$ be any event in the same probability space. Let $\mathbb{P}[\cdot]$ denote the probability of any event. The \emph{conditional entropy} $H(X \mid \mathcal{E})$ of $X$ conditioned on $\mathcal{E}$ is defined as follows.
\begin{definition}[Conditional entropy]
	\[H(X \mid \mathcal{E}):=\sum_{i=1}^s \mathbb{P}[X=x_i \mid \mathcal{E}]\log_2 \frac{1}{\mathbb{P}[X=x_i \mid \mathcal{E}]}\]
\end{definition}
An important special case is when $\mathcal{E}$ is the entire sample space. In that case the above conditional entropy is referred to as the \emph{Shannon entropy} $H(X)$ of $X$.
\begin{definition}[Entropy]
	\[H(X):=\sum_{i=1}^s\mathbb{P}[X=x_i] \log_2 \frac{1}{\mathbb{P}[X=x_i]}\]
\end{definition}
Let $Y$ be another random variable in the same probability space as $X$, taking values from a finite set $\{y_1, \ldots, y_t\}$. Then the conditional entropy of $X$ conditioned on $Y$, $H(X \mid Y)$, is defined as follows.
\begin{definition}
	\[H(X \mid Y)=\sum_{i=1}^t \mathbb{P}[Y=y_i] \cdot H(X \mid Y=y_i)\]
\end{definition}
We next define the binary entropy function $H_b(\cdot)$.
\begin{definition}[Binary entropy]
	For $p \in (0,1)$, the binary entropy of $p$, $H_b(p)$, is defined to be the Shannon entropy of a random variable taking two distinct values with probabilities $p$ and $1-p$.
	\[H_b(p):=p \log_2 \frac{1}{p} + (1-p) \log \frac{1}{1-p}.\]
\end{definition}
The following properties of entropy and conditional entropy will be useful.
\begin{fact}
	\label{fct0}
	\begin{enumerate}
		\item[\emph{(1)}] \label{support} Let $X$ be a random variable supported on a finite set $\mathcal{A}$, and let $Y$ be another random variable in the same probability space. Then $0 \leq H(X \mid Y) \leq H(X)  \leq \log_2 |\mathcal{A}|$.
		\item[\emph{(2)}] \label{subadditivity}\emph{(Sub-additivity of conditional entropy)}.   Let $X_1, \ldots, X_n$ be $n$ jointly distributed random variables in some probability space, and let $Y$ be another random variable in the same probability space, all taking values in finite domains. Then,
		\[H(X_1, \ldots, X_n \mid Y) \leq \sum_{i=1}^n H(X_i \mid  Y).\]
		\item[\emph{(3)}] Let $X_1, \ldots, X_n$ are independent random variables taking values in finite domains. Then,
		\[H(X_1, \ldots, X_n) = \sum_{i=1}^n H(X_i).\]
		\item[\emph{(4)}]\emph{(Taylor expansion of binary entropy in the neighborhood of $\frac{1}{2}$).} \[H_b(p)=1-\frac{1}{2 \log_e 2}\sum_{n=1}^\infty \frac{(1-2p)^{2n}}{n(2n-1)} \]
	\end{enumerate}
\end{fact}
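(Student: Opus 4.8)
The plan is to verify the four items by standard information-theoretic manipulations. The only substantive ingredient is the non-negativity of mutual information (equivalently, ``conditioning does not increase entropy''), which I would obtain from Jensen's inequality applied to the concave function $\log$; everything else is bookkeeping from the definitions.

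\emph{Item (1).} Each summand $p\log_2\frac1p$ in the definition of $H(X\mid\mathcal E)$ is non-negative for $p\in[0,1]$ under the convention $0\log_2\frac10=0$, so $H(X\mid Y)$, being a convex combination of the quantities $H(X\mid Y=y_j)\ge 0$, is itself $\ge 0$. For $H(X)\le\log_2|\mathcal A|$ I would apply Jensen to $\log_2$: $H(X)=\sum_i \mathbb P[X=x_i]\log_2\frac{1}{\mathbb P[X=x_i]}\le\log_2\!\big(\sum_i\mathbb P[X=x_i]\cdot\frac{1}{\mathbb P[X=x_i]}\big)=\log_2|\{i:\mathbb P[X=x_i]>0\}|\le\log_2|\mathcal A|$. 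Finally, $H(X)-H(X\mid Y)$ expands to $\sum_{i,j}\mathbb P[X=x_i,Y=y_j]\log_2\frac{\mathbb P[X=x_i,Y=y_j]}{\mathbb P[X=x_i]\,\mathbb P[Y=y_j]}$, which is a KL divergence between the joint law of $(X,Y)$ and the product of its marginals and hence $\ge 0$ by the log-sum inequality; this gives $H(X\mid Y)\le H(X)$.

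\emph{Items (2) and (3).} I would first record the chain rule: integrating the telescoping identity $\log\frac{1}{\mathbb P[x_1,\dots,x_n\mid y]}=\sum_{i=1}^n\log\frac{1}{\mathbb P[x_i\mid x_1,\dots,x_{i-1},y]}$ against the joint distribution of $(X_1,\dots,X_n,Y)$ yields
\[
H(X_1,\dots,X_n\mid Y)=\sum_{i=1}^n H(X_i\mid X_1,\dots,X_{i-1},Y).
\]
For item (2), apply item (1) inside each event $Y=y_j$ to get $H(X_i\mid X_1,\dots,X_{i-1},Y=y_j)\le H(X_i\mid Y=y_j)$, average over $j$ to obtain $H(X_i\mid X_1,\dots,X_{i-1},Y)\le H(X_i\mid Y)$, and sum over $i$. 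For item (3), take $Y$ to be trivial in the chain rule; independence of the $X_i$ gives $\mathbb P[X_i=x_i\mid X_1,\dots,X_{i-1}]=\mathbb P[X_i=x_i]$, so each term equals $H(X_i)$.

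\emph{Item (4).} Put $q:=1-2p$, so $p=\frac{1-q}{2}$, $1-p=\frac{1+q}{2}$, with $|q|<1$ for $p\in(0,1)$. Writing $\log_2 z=\ln z/\ln 2$ and using $\log_2 p=\log_2(1-q)-1$, $\log_2(1-p)=\log_2(1+q)-1$, the definition of $H_b$ collapses to $1-\frac{1}{2\ln 2}\big[(1-q)\ln(1-q)+(1+q)\ln(1+q)\big]$. I would then expand via the Mercator series $\ln(1+q)=\sum_{k\ge1}\frac{(-1)^{k+1}q^k}{k}$: a short computation gives $(1+q)\ln(1+q)=q+\sum_{k\ge2}\frac{(-1)^k q^k}{k(k-1)}$ and, replacing $q$ by $-q$, $(1-q)\ln(1-q)=-q+\sum_{k\ge2}\frac{q^k}{k(k-1)}$. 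Adding, the linear terms and all odd-degree terms cancel, leaving $(1-q)\ln(1-q)+(1+q)\ln(1+q)=2\sum_{n\ge1}\frac{q^{2n}}{2n(2n-1)}$. Substituting back and simplifying the constant produces exactly $H_b(p)=1-\frac{1}{2\log_e 2}\sum_{n\ge1}\frac{(1-2p)^{2n}}{n(2n-1)}$; every series involved converges absolutely for $|q|<1$, so the rearrangements are legitimate.

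The only ``obstacle'' is expository: handling the $0\log 0=0$ convention and zero-probability atoms in the Jensen steps, and carrying the factor $\tfrac{1}{2\log_e 2}$ correctly through the entropy expansion. There is no genuine mathematical difficulty.
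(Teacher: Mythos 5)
Your proof is correct: the Jensen/KL argument for item (1), the chain rule plus ``conditioning does not increase entropy'' for items (2)--(3), and the Mercator-series computation for item (4) all check out, including the constant $\tfrac{1}{2\log_e 2}$. The paper states Fact~\ref{fct0} as a standard fact without proof, and your argument is exactly the standard textbook verification it implicitly relies on, so there is nothing further to compare.
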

\begin{definition}[Mutual information]
	Let $X$ and $Y$ be two random variables in the same probability space, taking values from finite sets. The mutual information between $X$ and $Y$, $I(X;Y)$, is defined as follows.
	\[I(X;Y):=H(X)-H(X \mid Y).\]
	It can be shown that $I(X;Y)$ is symmetric in $X$ and $Y$, i.e.  $I(X;Y)=I(Y;X)=H(Y)-H(Y \mid X)$.
\end{definition}
The following observation follows immediately from the first inequality of Fact~\ref{fct0} ($1$).
\begin{fact}
	\label{I>=0}
	For any two random variables $X$ and $Y$, $I(X;Y) \leq H(X)$.
\end{fact}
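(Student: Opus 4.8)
The plan is to unwind the definition of mutual information and reduce the claim to the nonnegativity of conditional entropy. Recall that $I(X;Y)$ is \emph{defined} as $H(X) - H(X \mid Y)$, so the asserted inequality $I(X;Y) \le H(X)$ is equivalent to $H(X \mid Y) \ge 0$ --- which is exactly the left half of the chain $0 \le H(X \mid Y) \le H(X)$ recorded in Fact~\ref{fct0}(1). Thus one acceptable route is simply to cite Fact~\ref{fct0}(1); but to keep the argument self-contained I would also spell out the one-line reason that $H(X \mid Y) \ge 0$.

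For that, I would expand $H(X \mid Y) = \sum_i \mathbb{P}[Y = y_i] \, H(X \mid Y = y_i)$, a combination of the terms $H(X \mid Y = y_i) = \sum_j \mathbb{P}[X = x_j \mid Y = y_i] \log_2 \frac{1}{\mathbb{P}[X = x_j \mid Y = y_i]}$ with nonnegative weights $\mathbb{P}[Y = y_i]$. Every summand inside has the form $p \log_2(1/p)$ with $p \in [0,1]$ (using the standard convention $0 \log_2(1/0) = 0$), and $p \log_2(1/p) \ge 0$ on that whole range, so each $H(X \mid Y = y_i)$ is nonnegative and hence so is the weighted average $H(X \mid Y)$. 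Substituting back gives $I(X;Y) = H(X) - H(X \mid Y) \le H(X)$.

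I do not expect any genuine obstacle here: the result is immediate once mutual information is written via conditional entropy. The only points that warrant a moment's care are the $0 \log(1/0) = 0$ convention, and the fact that we need \emph{only} nonnegativity of $H(X \mid Y)$ rather than the stronger ``conditioning cannot increase entropy'' statement $H(X \mid Y) \le H(X)$ (that stronger fact would instead yield $I(X;Y) \ge 0$). For completeness one could note that the symmetry $I(X;Y) = I(Y;X) = H(Y) - H(Y \mid X)$ gives the companion bound $I(X;Y) \le H(Y)$ by the identical argument, though it is not required for the stated fact.
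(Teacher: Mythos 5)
Your proposal is correct and matches the paper's argument: the paper likewise derives the fact immediately from the inequality $0 \le H(X \mid Y)$ in Fact~\ref{fct0}(1) together with the definition $I(X;Y) = H(X) - H(X \mid Y)$. Your additional one-line verification that $H(X \mid Y) \ge 0$ (each term $p\log_2(1/p) \ge 0$) is a harmless elaboration of the same route.
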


\section{Missing Proofs}
\label{app:missing}

\subsection{Proof of Theorem~\ref{thm:approx-f2-sketch-uniform}}
\label{app:thm:approx-f2-sketch-uniform}
\begin{proofof}{Theorem~\ref{thm:approx-f2-sketch-uniform}}
\begin{proof}
	The proof is largely based on a similar proof in~\cite{KMSY18} except that here we work with real-valued functions with unbounded norm.
	In the next two lemmas, we look into the structure of a one-way communication protocol for $f^+$, and analyze its performance when the inputs are uniformly distributed. We give a lower bound on the number of bits of information that any correct randomized one-way protocol reveals about Alice's input\footnote{We thus prove an \emph{information complexity} lower bound. See, for example, \cite{DBLP:conf/pods/Jayram10} for an introduction to information complexity.}, in terms of the linear sketching complexity of $f$ for uniform distribution.
	
	The next lemma bounds the probability of error of a one-way protocol from below in terms of the Fourier coefficients of $f$, and the conditional distributions of different parities of Alice's input conditioned on Alice's random message.
	\begin{lemma}
		\label{lem:fourier-error}
		Let $\epsilon \in [0, \frac{1}{2})$. Let $\Pi$ be a deterministic one-way protocol for $f^+$ such that  
		$\E_{x,y \sim U(\ftwo^n)}[\Pi(x,y) - f^+(x,y)]^2 \leq \epsilon$. 
		Let $M$ denote the distribution of the random message sent by Alice to Bob in $\Pi$.   For any fixed message $m$ sent by Alice, let $\D_m$ denote the distribution of Alice's input $x$ conditioned on the event that $M=m$. Then,
		
		\[\epsilon \geq \sum_{\alpha \in \ftwo^n} \widehat{f}(\alpha)^2\cdot \left(1-\E_{m \sim M}\left(  \E_{x \sim \D_m} [\rchi_\alpha(x)]\right)^2\right).\]
	\end{lemma}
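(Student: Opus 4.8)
The plan is to lower bound $\epsilon$ by the residual uncertainty Bob has about $f(x+y)$ once he has seen his message and his input, and then to evaluate that residual uncertainty through the Fourier expansion of $f$. Since $\Pi$ is deterministic, Alice's message is a function $m=M(x)$ of her input and Bob's output is a function $g(m,y)$ of the message and his input. The first step is the standard observation that, for any fixed message $m$ and any fixed $y$, the quantity $\E_{x\sim\D_m}[(g(m,y)-f(x+y))^2]$ is minimized over constants by the conditional mean, hence is at least $\Var_{x\sim\D_m}[f(x+y)]$ (the bias--variance decomposition applied conditionally on $(m,y)$). Averaging over the uniform input pair, where the conditional law of $x$ given $M=m$ is exactly $\D_m$ and $y$ stays independent and uniform, this gives
\[
\epsilon \;\ge\; \E_{m\sim M}\,\E_{y\sim U(\ftwo^n)}\,\Var_{x\sim\D_m}\bigl[f(x+y)\bigr].
\]

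Next I would expand $f$ in the Fourier basis and use $\chi_\alpha(x+y)=\chi_\alpha(x)\chi_\alpha(y)$ together with $\chi_\alpha\chi_\beta=\chi_{\alpha+\beta}$. For a fixed $m$ this yields closed forms for $\E_{x\sim\D_m}[f(x+y)]$ and $\E_{x\sim\D_m}[f(x+y)^2]$ as sums over characters $\chi_\alpha(y)$ (respectively double sums over $\chi_\alpha(y)\chi_\beta(y)$) weighted by the biased moments $\E_{x\sim\D_m}[\chi_\alpha(x)]$. Averaging over the uniform $y$, orthogonality of characters, $\E_y[\chi_\alpha(y)\chi_\beta(y)]=\mathbf{1}[\alpha=\beta]$, kills every off-diagonal term, and using $\chi_0\equiv 1$ one gets
\[
\E_{y}\,\Var_{x\sim\D_m}\bigl[f(x+y)\bigr] \;=\; \sum_{\alpha\in\ftwo^n} \widehat f(\alpha)^2\Bigl(1-\bigl(\E_{x\sim\D_m}[\chi_\alpha(x)]\bigr)^2\Bigr).
\]
Substituting this into the previous display and pushing the expectation over $m\sim M$ inside the sum produces exactly the claimed inequality; note that the hypothesis $\epsilon<\tfrac12$ plays no role in this particular step.

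This is a routine second-moment computation, so I do not expect a genuine obstacle; the only points needing a little care are (i) justifying the conditional MMSE inequality in the first step even though Bob's response $g$ is an arbitrary real-valued function and not literally the conditional mean, and (ii) the fact that here $f$ is real-valued with possibly unbounded $\ell_2$ norm, unlike the Boolean case of \cite{KMSY18} --- but since every quantity above is written directly in terms of the squared Fourier coefficients $\widehat f(\alpha)^2$ and the characters remain $\pm1$-valued, no normalization is available nor needed and the argument goes through verbatim. The substantive work of the theorem lies in the subsequent lemma, which converts this Fourier inequality into an information-complexity lower bound in terms of $\dim_{\xi}(f)$.
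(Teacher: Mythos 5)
Your proposal is correct and follows essentially the same route as the paper's proof: lower bound the error for each fixed $(m,y)$ by the conditional variance $\mathrm{Var}_{x\sim\D_m}[f^+(x,y)]$ via the bias--variance decomposition, then expand in the Fourier basis and use orthogonality of characters over the uniform $y$ (the paper phrases the last step through $\|f\|_2^2$ and Parseval, which is the same computation). Your side remarks are also consistent with the paper: the hypothesis $\epsilon<\tfrac12$ is not used in this lemma, and the argument needs no boundedness of $f$.
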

	\begin{proof}
		For any fixed input $y$ of Bob, define
		$\epsilon_m^{(y)}:=\E_{x \sim \D_m}(\Pi(x,y) - f^+(x,y))^2$. Thus,
		\begin{align}
		\label{error1}
		\epsilon \geq \E_{m \sim M} \E_{y \sim U(\ftwo^n)} [\epsilon_m^{(y)}].
		\end{align}
		Note that the output of the protocol is determined by Alice's message and $y$. Hence for a fixed message and Bob's input, if the restricted function has high variance, the protocol is forced to commit error with high probability. Formally, let $a_m^{(y)}$ be the output of the protocol when Alice's message is $m$ and Bob's input is $y$. Also, define $\mu_m^{(y)} :=\E_{x \sim \D_m} [f^+ (x,y)]$. Then, 
		\begin{align} 
		\label{error2}\epsilon_m^{(y)} &=\E_{x \sim \D_m} \left[(a_m^{(y)} - f^+(x,y))^2\right] \nonumber \\
		&= \E_{x \sim \D_m}\left[((\mu_m^{(y)} - f^+(x,y))+(a_m^{(y)}- \mu_m^{(y)}))^2\right] \nonumber \\
		&= \E_{x \sim \D_m}\left[((\mu_m^{(y)} - f^+(x,y))^2+(a_m^{(y)} - \mu_m^{(y)})^2)\right] + 2 (a_m^{(y)}-\mu_m^{(y)}) \E_{x \sim \D_m} \left[(\mu_m^{(y)} - f^+(x,y))\right] \nonumber \\
		&\geq  \E_{x \sim \D_m}\left[(\mu_m^{(y)} - f^+(x,y))^2\right] \nonumber \\
		&=\mathsf{Var}_{x \sim \D_m} \left[f^+(x,y)\right].
		\end{align}
		
		Now,
		\begin{align*}
		& \mathrm{Var}_{x \sim \D_m}[f^+(x,y)] = \E_{x \sim \D_m} [f^+(x,y)^2] - \left( \E_{x \sim \D_m} [f^+(x,y)]\right)^2 \\
		&=\E_{x \sim \D_m} [f^+(x,y)^2]-\left(\sum_{\alpha \in \ftwo^n} \widehat{f}(\alpha)\rchi_\alpha(y) \E_{x \sim \D_m} [\rchi_\alpha(x)]\right)^2
		\\
		&=\E_{x \sim \D_m} [f^+(x,y)^2]-\left(  \sum_{\alpha \in \ftwo^n}  \widehat{f} (\alpha)^2 \left(  \E_{x \sim \D_m} [\rchi_\alpha(x)]\right)^2 \right. \\
		&\qquad \qquad \left. + \sum_{(\alpha_1, \alpha_2) \in \ftwo^n \times \ftwo^n: \alpha_1 \neq \alpha_2} \widehat{f}(\alpha_1) \widehat{f}(\alpha_2) \rchi_{\alpha_1 + \alpha_2}(y)\E_{x \sim \D_m}[\rchi_{\alpha_1}(x)]\E_{x \sim \D_m}[\rchi_{\alpha_2}(x)]\right).
		\end{align*}
		Taking expectation over $y$ we have:
		\begin{align*}
		\E_{y \sim U(\ftwo^n)} \left[\mathrm{Var}_{x \sim \D_m}[f^+(x,y)] \right]
		&=\E_{y \sim U(\ftwo^n)}\E_{x \sim \D_m} [f^+(x,y)^2]-\sum_{\alpha \in \ftwo^n} \widehat{f}(\alpha)^2 \left(  \E_{x \sim \D_m} [\rchi_\alpha(x)]\right)^2 \\
		&=\E_{x \sim \D_m} \E_{y \sim U(\ftwo^n)} [f^+(x,y)^2]-\sum_{\alpha \in \ftwo^n} \widehat{f}(\alpha)^2 \left(  \E_{x \sim \D_m} [\rchi_\alpha(x)]\right)^2 \\
		&= \|f\|_2^2 -\sum_{\alpha \in \ftwo^n} \widehat{f}(\alpha)^2 \left(  \E_{x \sim \D_m} [\rchi_\alpha(x)]\right)^2,
		\end{align*}
		where in the last step we used the fact that for any fixed $x$ we have $\E_{y \sim U(\ftwo^n)} [f^+(x,y)^2] = \E_{z \sim U(\ftwo^n)}[f^2(z)] = \|f\|_2^2$.
		Taking expectation over messages it follows using (\ref{error1}), (\ref{error2}) that,
		\begin{align}
		\epsilon & \ge \|f\|_2^2 - \sum_{\alpha \in \ftwo^n} \widehat{f}(\alpha)^2 \cdot \E_{m \sim M} \left(  \E_{x \sim \D_m} [\rchi_\alpha(x)]\right)^2  \nonumber \\
		&=\sum_{\alpha \in \ftwo^n} \widehat{f}(\alpha)^2\cdot\left(1- \E_{m \sim M}\left(  \E_{x \sim \D_m} [\rchi_\alpha(x)]\right)^2\right). \nonumber \\
		\end{align}
		The second equality above follows from Parseval's identity. The lemma follows.
	\end{proof}
	Let $\Pi$ be a deterministic protocol such that $\mathbb E_{x,y \sim U(\ftwo^n)}[(\Pi(x,y) - f^+(x,y))^2] \leq \epsilon$, with optimal cost $c_\Pi:=\mathcal{D}^{\rightarrow,U}_{\epsilon}(f^+)$. To prove our theorem, we use the protocol $\Pi$ to come up with a subspace of $\ftwo^n$. Next, in Lemma~\ref{prop:cl1} (a) we prove, using  Lemma~\ref{lem:fourier-error}, that $f$ is $\xi$-concentrated on that subspace where $\xi = \|f\|_2^2 - \eps(1 + 2 \delta)$.  In Lemma~\ref{prop:cl1} (b) we upper bound the dimension of that subspace in terms of $c_\Pi$.
	
	Let $\mathcal{A}_\delta:=\{\alpha \in \ftwo^n:  \E_{m \sim M}\left(  \E_{x \sim \D_m}\rchi_\alpha(x)\right)^2 \geq \delta \} \subseteq \ftwo^n$.
	\begin{lemma}\label{prop:cl1}
		Let $\delta \in [0,1/2]$ and $\xi = \|f\|_2^2 - \epsilon (1 + 2 \delta)$, then
		$\sum_{\alpha \notin \mathsf{span}(\mathcal{A}_\delta)}\widehat{f}(\alpha)^2 \leq \|f\|_2^2 - \xi.$
	\end{lemma}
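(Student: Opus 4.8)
The plan is to combine Lemma~\ref{lem:fourier-error} with two observations: first, that the Fourier mass sitting on characters $\alpha \notin \mathcal{A}_\delta$ contributes a large "error penalty'' in the bound of Lemma~\ref{lem:fourier-error}; and second, that passing from $\mathcal{A}_\delta$ to its span $\mathsf{span}(\mathcal{A}_\delta)$ can only move Fourier mass \emph{in}, never out, so it suffices to control $\sum_{\alpha \notin \mathcal{A}_\delta}\widehat f(\alpha)^2$. Concretely, I would start from the inequality of Lemma~\ref{lem:fourier-error},
\[
\eps \;\ge\; \sum_{\alpha \in \ftwo^n} \widehat f(\alpha)^2\Bigl(1 - \E_{m \sim M}\bigl(\E_{x \sim \D_m}[\rchi_\alpha(x)]\bigr)^2\Bigr),
\]
and simply restrict the sum to those $\alpha \notin \mathcal{A}_\delta$. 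By definition of $\mathcal{A}_\delta$, for every such $\alpha$ we have $\E_{m \sim M}\bigl(\E_{x \sim \D_m}[\rchi_\alpha(x)]\bigr)^2 < \delta$, so each surviving term has its parenthesised factor at least $1-\delta$, giving
\[
\eps \;\ge\; (1-\delta)\sum_{\alpha \notin \mathcal{A}_\delta} \widehat f(\alpha)^2,
\qquad\text{hence}\qquad
\sum_{\alpha \notin \mathcal{A}_\delta} \widehat f(\alpha)^2 \;\le\; \frac{\eps}{1-\delta}.
\]

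Next I would bound $\tfrac{\eps}{1-\delta}$ by the target quantity $\|f\|_2^2 - \xi = \eps(1+2\delta)$. Since $\delta \in [0,1/2]$, we have $\tfrac{1}{1-\delta} \le 1 + 2\delta$ (indeed $1 \le (1-\delta)(1+2\delta) = 1 + \delta - 2\delta^2$ for $\delta \in [0,1/2]$), so $\tfrac{\eps}{1-\delta} \le \eps(1+2\delta) = \|f\|_2^2 - \xi$. Finally, because $\mathcal{A}_\delta \subseteq \mathsf{span}(\mathcal{A}_\delta)$, we have $\{\alpha : \alpha \notin \mathsf{span}(\mathcal{A}_\delta)\} \subseteq \{\alpha : \alpha \notin \mathcal{A}_\delta\}$, and therefore
\[
\sum_{\alpha \notin \mathsf{span}(\mathcal{A}_\delta)} \widehat f(\alpha)^2 \;\le\; \sum_{\alpha \notin \mathcal{A}_\delta} \widehat f(\alpha)^2 \;\le\; \|f\|_2^2 - \xi,
\]
which is exactly the claim.

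The argument is essentially bookkeeping once Lemma~\ref{lem:fourier-error} is in hand, so there is no serious obstacle here; the one point that needs a moment of care is the elementary inequality $\tfrac{1}{1-\delta} \le 1+2\delta$ on $[0,1/2]$, which is where the factor $(1+2\delta)$ in the definition of $\xi$ comes from — this is precisely tuned so that the bound closes. (In the companion Lemma~\ref{prop:cl1}(b), not proved here, one would additionally argue that $\dim \mathsf{span}(\mathcal{A}_\delta)$ is at most roughly $c_\Pi / \delta$, using that Alice's message has at most $2^{c_\Pi}$ values and an averaging/counting argument over $\mathcal{A}_\delta$; combining (a) and (b) with $\delta$ chosen appropriately then yields Theorem~\ref{thm:approx-f2-sketch-uniform}.)
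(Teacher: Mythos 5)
Your proposal is correct and follows essentially the same route as the paper: both apply Lemma~\ref{lem:fourier-error}, use that every $\alpha$ outside $\mathcal{A}_\delta$ (in particular every $\alpha \notin \mathsf{span}(\mathcal{A}_\delta)$, since $\mathcal{A}_\delta \subseteq \mathsf{span}(\mathcal{A}_\delta)$) contributes a factor of at least $1-\delta$, and then bound $\frac{\eps}{1-\delta} \le \eps(1+2\delta) = \|f\|_2^2 - \xi$ for $\delta \in [0,1/2]$. The only cosmetic difference is that the paper restricts directly to $\alpha \notin \mathsf{span}(\mathcal{A}_\delta)$, whereas you first bound the larger sum over $\alpha \notin \mathcal{A}_\delta$ and then pass to the subset.
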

	\begin{proof}
		We show that $f$ is $\xi$-concentrated on $\mathsf{span(\mathcal{A}_\delta)}$. By Lemma~\ref{lem:fourier-error} we have that
		\begin{align*}
		\epsilon &\geq \sum_{\alpha \in \mathsf{span}(\mathcal{A}_\delta)}  \widehat{f}(\alpha)^2 \cdot\left(1- \E_{m \sim M}\left(  \E_{x \sim \D_m} \rchi_\alpha(x)\right)^2\right) \nonumber + \sum_{\alpha \notin \mathsf{span}(\mathcal{A}_\delta)} \widehat{f}(\alpha)^2 \cdot\left(1- \E_{m \sim M}\left(  \E_{x \sim \D_m} \rchi_\alpha(x)\right)^2\right)  \\
		& > (1-\delta) \cdot \sum_{\alpha \notin \mathsf{span}(\mathcal{A}_\delta) } \widehat{f}(\alpha)^2 .
		\end{align*}
		Thus $\sum_{\alpha \notin \mathsf{span}(\mathcal{A}_\delta)}\widehat{f}(\alpha)^2 < \frac{\epsilon}{1-\delta} \leq \epsilon \cdot (1+2\delta) = \|f\|_2^2 - \xi$ (since $\delta \leq 1/2$).
	\end{proof}
	
	Now we are ready to complete the proof of Theorem~\ref{thm:approx-f2-sketch-uniform}.	 
	Let $\ell = dim(\mathsf{span}(\mathcal{A}_\delta))$. Then it suffices to show that $\ell\leq \frac{2c_\Pi}{\delta}.$	
	Note that $\rchi_\alpha(x)$ is a unbiased random variable taking values in $\{1, -1\}$. For each $\alpha$ in the set $\mathcal{A}_\delta$ in Proposition~\ref{prop:cl1}, the value of  $\E_{m \sim M}\left(\E_{x \sim \D_m} \rchi_\alpha(x)\right)^2$ is bounded away from $0$. This suggests that for a typical message $m$ drawn from $M$, the distribution of $\rchi_\alpha(x)$ conditioned on the event $M=m$ is significantly biased. Fact~\ref{entropy} enables us to conclude that Alice's message reveals $\Omega(1)$ bit of information about $\rchi_\alpha(x)$. However, since the total information content of Alice's message is at most $c_\Pi$, there can be at most $O(c_\Pi)$ independent vectors in $\mathcal{A}_\delta$. Now we formalize this intuition.
	
	In the derivation below we use several standard facts about properties of entropy and mutual information which can be found in Appendix~\ref{app:information-theory}.
	We will need the following fact about entropy of a binary random variable. The proof can be found in Appendix A of~\cite{KMSY18}.
	\begin{fact}
		\label{entropy}
		For any random variable $X$ supported on $\{1, -1\}$, $H(X) \leq 1-\frac12(\mathbb{E} X)^2$.
	\end{fact}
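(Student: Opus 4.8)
The plan is to reduce the statement to a one–variable inequality for the binary entropy function and then read it off from the Taylor expansion recorded in Fact~\ref{fct0}(4).

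First I would parametrize: set $p = \Pr[X = 1]$, so that $\Pr[X = -1] = 1 - p$ and $\mathbb{E}X = p - (1-p) = 2p - 1$, hence $(\mathbb{E}X)^2 = (1-2p)^2$. Since $X$ is supported on the two-point set $\{1,-1\}$, its Shannon entropy equals $H_b(p)$ by definition of the binary entropy. Thus the claim is equivalent to the purely analytic statement $H_b(p) \le 1 - \tfrac12 (1-2p)^2$ for all $p \in [0,1]$.

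Next I would invoke Fact~\ref{fct0}(4), which gives
$$H_b(p) = 1 - \frac{1}{2\log_e 2} \sum_{n=1}^\infty \frac{(1-2p)^{2n}}{n(2n-1)}.$$
Every summand is nonnegative, since the numerator $(1-2p)^{2n}$ is an even power and the denominator $n(2n-1)$ is positive for $n \ge 1$; the series also converges for $p \in [0,1]$ because $|1-2p| \le 1$. Discarding all terms with $n \ge 2$ and keeping only the $n=1$ term, which equals $(1-2p)^2$, yields
$$H_b(p) \le 1 - \frac{(1-2p)^2}{2\log_e 2}.$$
Finally, since $\log_e 2 < 1$ we have $\tfrac{1}{2\log_e 2} > \tfrac12$, and therefore
$$H_b(p) \le 1 - \frac{(1-2p)^2}{2\log_e 2} \le 1 - \frac12 (1-2p)^2 = 1 - \frac12 (\mathbb{E}X)^2,$$
which is the desired bound.

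There is no substantive obstacle here: the only points requiring (routine) care are the nonnegativity and convergence of the tail of the series, both immediate, and the elementary numerical fact $\log_e 2 \approx 0.693 < 1$ used in the last step. If one preferred to avoid quoting the series expansion of $H_b$, an alternative would be to show directly that $\phi(p) := 1 - \tfrac12(1-2p)^2 - H_b(p) \ge 0$ by checking $\phi(1/2) = 0$ and analyzing the sign of $\phi'$ via the derivative $H_b'(p) = \log_2\frac{1-p}{p}$, but the series argument above is shorter and self-contained given Fact~\ref{fct0}.
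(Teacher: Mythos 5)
Your proof is correct. The paper does not reprove this fact itself (it defers to Appendix~A of~\cite{KMSY18}), but your argument — writing $(\mathbb{E}X)^2=(1-2p)^2$, invoking the Taylor expansion of $H_b$ around $\tfrac12$ from Fact~\ref{fct0}(4), dropping the nonnegative terms with $n\ge 2$, and then using $\log_e 2<1$ to pass from $1-\frac{(1-2p)^2}{2\log_e 2}$ to $1-\frac12(1-2p)^2$ — is exactly the standard route the recorded expansion is there to support, so it matches the intended approach.
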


	Let $\mathcal{T}=\{\alpha_1, \ldots, \alpha_\ell\}$ be a basis of $\mathsf{span}(\mathcal{A}_\delta)$. Then,
	\begin{align*}
	c_\Pi &\geq H(M)\qquad \qquad \qquad \qquad \qquad \qquad\\
	& \geq I(M; \rchi_{\alpha_1}(x), \ldots, \rchi_{\alpha_\ell}(x)) \qquad \qquad\\
	&=H(\rchi_{\alpha_1}(x), \ldots, \rchi_{\alpha_{\ell}}(x))-H(\rchi_{\alpha_1}(x), \ldots, \rchi_{\alpha_{\ell}}(x) \mid M)  \\
	& = \ell -  H(\rchi_{\alpha_1}(x), \ldots, \rchi_{\alpha_{\ell}}(x) \mid M)\qquad\\
	& \ge \ell- \sum_{i=1}^\ell H(\rchi_{\alpha_i}(x) \mid M) 
	\qquad \qquad \qquad \\
	& \ge\ell - \sum_{i = 1}^\ell (1 - \frac{1}{2} \left(\E[\rchi_{\alpha_i}(x) | M])^2\right) \qquad \qquad \qquad\mbox{\ \ \ \ \ \ \ \  (by Fact~\ref{entropy})}\\
	& \geq \ell -\ell \left(1-\delta\cdot\frac{1}{2}\right) \nonumber \\
	&=\frac{\ell\delta}{2}.
	\end{align*}
	Thus $\ell \leq \frac{2c_\Pi}{\delta}$.
\end{proof}
\end{proofof}

\section{Subadditive Functions}\label{app:xs-functions}
\begin{definition}
A function $f:2^{[n]}\to\mathbb{R}^+$ is \emph{subadditive} if $f(A\cup B)\le f(A)+f(B)$ for all $A,B\subseteq[n]$.
\end{definition}
The class of XS functions introduced in~\cite{LLN06} are subadditive functions that correspond to unit demand functions $f(S) = \max_{i \in S} w_i$.  
Similarly, a subadditive function $f$ is XOS if $f$ can be expressed as the maximum of a number of linear combinations of valuations, where each weight in the linear combination is positive, $f(S)=\max_{1\le i\le k}w_i^\top\chi(S)$, where $w_{i,j}\ge 0$ for all $i\in[n]$. 
By flipping the roles of the MAX and SUM operators, we obtain a strict subclass of XOS valuations, called OXS functions. 

It is known that OXS functions is a strict subset of submodular functions, which is a strict subset of XOS functions, which is a strict subset of subadditive functions~\cite{LLN06}. 
For more details on subadditive functions, see \cite{BCIW12}. 

\subsection{Lower Bound for XS Functions}
\begin{theorem}
If $f$ is an XS function corresponding to a collection of distinct weights then $\rc{1/3}(f^+) = \Omega(n)$.
\end{theorem}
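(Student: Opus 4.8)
The plan is to reduce from \emph{Augmented Indexing}, a standard strengthening of $\Index$: Alice holds $a\in\{0,1\}^n$, Bob holds an index $t\in[n]$ \emph{together with} the prefix $a_1,\dots,a_{t-1}$, and after a single message from Alice, Bob must output $a_t$. It is well known that its one-way randomized complexity is still $\Omega(n)$ (the prefix on Bob's side does not help Alice compress), whereas plain $\Index$ does not obviously suffice here: Alice does not know $t$, so she cannot highlight the target coordinate, and Bob does not know $x$, so he cannot cancel the coordinates lying above it --- and knowing the prefix $a_{<t}$ is exactly what will let Bob perform that cancellation.

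First I would normalize the instance. Since the weights are distinct, relabel the ground set so that $w_1<w_2<\cdots<w_n$, and record the key observation that $f^+(x,y)=f(x\oplus y)=\max\{w_i : x_i\neq y_i\}$, with the convention $f(\emptyset)=0$. In words, $f^+$ reports the weight of the \emph{highest-indexed} coordinate on which $x$ and $y$ disagree; this is why distinctness of the weights matters --- the value of $f^+$ pins down that coordinate exactly.

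Next I would set up the reduction. Given an instance $(a,t)$ of Augmented Indexing, Alice sets $x_{n-i+1}=a_i$ for every $i\in[n]$ and runs a (hypothetical) one-way protocol for $f^+$ on her input, forwarding its message to Bob. Bob, knowing $t$ and $a_1,\dots,a_{t-1}$ --- equivalently the coordinates $x_n,x_{n-1},\dots,x_{n-t+2}$ --- sets $y_j=x_j$ for $j\in\{n-t+2,\dots,n\}$, sets $y_{n-t+1}=0$, and sets $y_j=0$ for $j\in\{1,\dots,n-t\}$. Then $x$ and $y$ agree on every coordinate above $n-t+1$; they disagree on coordinate $n-t+1$ if and only if $a_t=1$; and every remaining disagreement lies among coordinates $\{1,\dots,n-t\}$, which all have weight strictly below $w_{n-t+1}$. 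Hence $f^+(x,y)=w_{n-t+1}$ when $a_t=1$, and $f^+(x,y)<w_{n-t+1}$ (it is $\le w_{n-t}$, or $f(\emptyset)=0$ if there are no disagreements at all) when $a_t=0$. Bob therefore outputs $a_t$ by testing whether the protocol's output equals $w_{n-t+1}$, so any one-way protocol for $f^+$ with error $1/3$ yields one for Augmented Indexing with error $1/3$, and $\rc{1/3}(f^+)=\Omega(n)$ follows.

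The only genuine difficulty is conceptual: realizing that the naive $\Index$ reduction fails and that Augmented Indexing is the right source problem, since Bob's knowledge of the prefix translates precisely into knowledge of the high-weight coordinates he must neutralize. The remaining points are routine --- one should just check the degenerate case where Bob's choice makes $x=y$ (harmless, as $f(\emptyset)=0\neq w_{n-t+1}$ unless $w_{n-t+1}=0$, which can occur for at most one value of $t$ and is absorbed into the $\Omega(n)$ bound); alternatively, if one wishes to invoke only the plain $\Index$ bound, one can run essentially this same argument directly against a correlated hard input distribution, which amounts to re-deriving the Augmented Indexing lower bound inline.
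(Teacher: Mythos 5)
Your proposal is correct and follows essentially the same route as the paper: a reduction from Augmented Indexing in which Bob uses his known prefix to zero out the higher-weight coordinates, so that distinctness of the weights lets him read off $a_t$ from whether the protocol's output equals the target weight. The only differences are cosmetic (your ascending relabeling versus the paper's descending ordering, and your explicit handling of the degenerate $x=y$ case).
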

\begin{proof}
Let $w_1 > w_2 > \dots > w_n$.
We use a reduction from a standard communication problem called Augmented Indexing, denoted $\mathsf{AI}(x,i)$.
In this problem Alice's input is $x \in \ftwo^n$ and Bob's input is $i \in[n]$ and the bits $x_1, \dots, x_{i - 1}$.

\begin{theorem}[\cite{MNSW98,BYJKK04}] 
	$\rc{1/3}(\mathsf{AI}) = \Omega(n)$.
\end{theorem}

In order to solve $\mathsf{AI}(x,i)$ using a protocol for $f^+$ set $x' = x$ and $y' = (x_1, \dots, x_{i - 1}, 0, \dots, 0)$. 
If $\mathsf{AI}(x,i) = 1$ then $f^+(x' + y') = w_i$, otherwise $f^+(x' + y') \le w_{i + 1}$. Hence an $\Omega(n)$ lower bound follows.

\end{proof}

\section{Communication Under the Uniform Distribution}
\label{sec:uniform:hs:cc}
In this section, we switch to lower bounds for the uniform distribution and show the following result for the ``hockey stick'' function:
\begin{theorem}\label{thm:lb-hockey-stick-uniform}
	For any odd $n$, constant $c > 0$ and $\alpha = c \sqrt{n}$ there exists a constant $\epsilon > 0$ such that for the``hockey stick'' function $hs_{\alpha}(x) = \min(\alpha, \frac{2\alpha}n \sum_{i = 1}^n x_i)$ it holds that:
	$$\distcu{\epsilon}(hs_{\alpha}^+) = \Omega(n)$$
\end{theorem}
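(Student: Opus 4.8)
The plan is to use the characterization of communication complexity under the uniform distribution via approximate Fourier dimension, namely Theorem~\ref{thm:approx-f2-sketch-uniform}, which says $\distcu{\epsilon}(f^+) \ge \frac{\delta}{2}\dim_{\xi}(f)$ for $\xi = \|f\|_2^2 - \epsilon(1+2\delta)$. Thus it suffices to show that for $f = hs_\alpha$ with $\alpha = c\sqrt n$, there is a constant $\delta$ (say $\delta = 1/2$) and a constant $\epsilon > 0$ such that $\dim_{\xi}(hs_\alpha) = \Omega(n)$; that is, no linear subspace $A$ of $\ftwo^n$ of dimension $o(n)$ captures a $\xi$-fraction of the Fourier mass, where $\xi$ is chosen just slightly below $\|hs_\alpha\|_2^2$. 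Equivalently, for every subspace $A$ of dimension $o(n)$, we need $\sum_{S \notin A}\widehat{hs_\alpha}(S)^2 \ge \|hs_\alpha\|_2^2 - \xi = \Omega(1)$ (after absorbing the $\epsilon(1+2\delta)$ term).

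The key step, which the excerpt explicitly defers to Appendix~\ref{app:hs}, is to understand the Fourier spectrum of $hs_\alpha$ well enough to conclude that its Fourier mass is "spread out" over high levels and cannot be concentrated on any low-dimensional subspace. First I would note that $hs_\alpha$ is a symmetric function of the Hamming weight $|x| = \sum_i x_i$: it equals $\frac{2\alpha}{n}|x|$ for $|x| \le n/2$ and $\alpha$ for $|x| > n/2$. Since $\alpha = c\sqrt n$, the linear piece $\frac{2\alpha}{n}|x| = \frac{2c}{\sqrt n}|x|$ has fluctuations of order $\sqrt n \cdot \frac{1}{\sqrt n} = \Theta(1)$ around its mean under the uniform distribution, so $hs_\alpha$ has $\Theta(1)$ variance and its "non-constant" Fourier mass is $\Theta(1)$. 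The core technical claim is that a constant fraction of this mass lies on Fourier levels that are "generic" — in particular, the level-$1$ Fourier coefficients $\widehat{hs_\alpha}(\{i\})$ are all equal by symmetry and each is $\Theta(1/\sqrt n)$, contributing total level-$1$ mass $\Theta(1)$, and symmetry forces this mass to be distributed identically over all $n$ singletons. Any subspace $A$ of dimension $d$ can contain at most $d$ of the $n$ singletons $\{i\}$ (more precisely, the singletons lying in $A$ form a subset whose span has dimension $\le d$, so at most $d$ of them), hence $A$ misses at least $n - d$ of the level-$1$ characters, each carrying mass $\Theta(1/n)$, so $\sum_{S \notin A, |S|=1}\widehat{hs_\alpha}(S)^2 \ge (n-d)\cdot\Theta(1/n) = \Omega(1)$ as long as $d \le (1-\eta)n$ for a small constant $\eta$. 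This already shows $\dim_{\xi}(hs_\alpha) = \Omega(n)$ for $\xi$ slightly below $\|hs_\alpha\|_2^2$, hence the theorem with an appropriate constant $\epsilon$.

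The main obstacle is the bookkeeping of constants: one must verify that the level-$1$ Fourier mass of $hs_\alpha$ is indeed bounded below by an absolute constant (independent of $n$ for large $n$), which requires a careful computation of $\widehat{hs_\alpha}(\{i\}) = \E_x[hs_\alpha(x)(-1)^{x_i}]$ — this is where the explicit form of $hs_\alpha$ and a binomial-tail / Stirling estimate enters, and is the content deferred to Appendix~\ref{app:hs}. One then has to chain the bound $\epsilon(1+2\delta) < \|hs_\alpha\|_2^2 - (\text{constant lower bound on missed mass})$ to pin down a valid constant $\epsilon$; choosing $\delta = 1/2$ gives $\xi = \|hs_\alpha\|_2^2 - 2\epsilon$, so it suffices that $2\epsilon$ is smaller than the $\Omega(1)$ level-$1$ mass lower bound, and then $\dim_{\xi}(hs_\alpha) = \Omega(n)$ yields $\distcu{\epsilon}(hs_\alpha^+) \ge \frac14\dim_{\xi}(hs_\alpha) = \Omega(n)$. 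A secondary subtlety is the parity constraint "$n$ odd", which is presumably there only to ensure $\sum x_i$ never equals exactly $n/2$ so that the two pieces of $hs_\alpha$ meet cleanly and the symmetric-function analysis is clean; I would simply carry that hypothesis through.
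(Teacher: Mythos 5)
Your proposal is correct, and it reaches the conclusion by a somewhat different route than the paper. Both arguments reduce the theorem to showing $\dim_{\xi}(hs_\alpha)=\Omega(n)$ for $\xi$ slightly below $\|hs_\alpha\|_2^2$ and then invoke Theorem~\ref{thm:approx-f2-sketch-uniform}; the difference is in how the "no low-dimensional subspace captures almost all the mass" step is established. The paper proves Lemma~\ref{lem:hs-spectrum} in Appendix~\ref{app:hs}, a second-moment computation showing $\sum_{S\neq\emptyset,[n]}\widehat{hs_\alpha}(S)^2=\Theta(\alpha^2/n)=\Omega(1)$, and then cites Theorem 4.6 of~\cite{KMSY18}, a general statement that any \emph{symmetric} function with $\Omega(1)$ non-trivial Fourier weight has approximate Fourier dimension $\Omega(n)$. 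You instead isolate the level-one mass: each $\widehat{hs_\alpha}(\{i\})=-\tfrac12\E_{y\sim \mathrm{Bin}(n-1,1/2)}\bigl[g(y+1)-g(y)\bigr]\approx-\alpha/(2n)$ (with $g(k)=\min(\alpha,\tfrac{2\alpha}{n}k)$), so the total level-one weight is about $c^2/4=\Omega(1)$, and a subspace of dimension $d$ contains at most $d$ of the linearly independent singleton characters, hence misses $(n-d)\cdot\Theta(1/n)=\Omega(1)$ mass whenever $d\le(1-\eta)n$. Your route buys self-containedness (no appeal to the external symmetric-function dimension theorem) and a simpler Fourier estimate (a first-moment/influence calculation rather than the binomial second-moment sums of Appendix~\ref{app:hs}), at the cost of having to carry out that level-one estimate explicitly, which you correctly flag as the deferred work and which indeed goes through. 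One small correction of emphasis: Appendix~\ref{app:hs} as written computes the total non-trivial weight (and shows $\widehat{hs_\alpha}([n])=0$), not the singleton coefficient, so your argument would need its own short computation rather than a citation of that lemma; also, the hypothesis that $n$ is odd is not essential to your version, whereas the paper's appendix uses it (via the relation to Majority) — your reading of it as a cleanliness condition is harmless.
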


The proof relies on the characterization of communication complexity using approximate Fourier dimension (Theorem~\ref{thm:approx-f2-sketch-uniform}).
We also require a structural result, whose proof we defer to Section~\ref{app:hs}, about the Fourier spectrum of the hockey stick function.
\begin{lemma}\label{lem:hs-spectrum}
Let $n$ be odd and let  $hs_\alpha(x) = min(\alpha, \frac{2\alpha} n \sum_{i = 1}^n x_i)$ then:
$$\|\widehat{hs_\alpha}\|_2^2 - \widehat{hs_\alpha} (\emptyset)^2 - \widehat{hs_\alpha} ([n])^2 =  \Theta\left(\frac{\alpha^2}{n}\right),$$
where $[n]$ denotes the set $\{1, \dots, n\}$.
\end{lemma}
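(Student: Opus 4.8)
By homogeneity it suffices to treat $\alpha = 1$: writing $f = hs_1$ we have $hs_\alpha = \alpha f$, $\widehat{hs_\alpha} = \alpha \widehat f$, and the quantity in the lemma scales by $\alpha^2$. Observe that $f$ is symmetric, depending only on the Hamming weight $k = \sum_i x_i$, with $f(x) = \min(1, 2k/n)$. Two observations reduce the claim to a variance computation. First, by Parseval's identity $\|\widehat f\|_2^2 = \|f\|_2^2 = \E_x[f(x)^2]$ and $\widehat f(\emptyset) = \E_x[f(x)]$, so $\|\widehat f\|_2^2 - \widehat f(\emptyset)^2 = \Var_x[f]$. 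Second, I will show $\widehat f([n]) = 0$, so that the left-hand side of the lemma equals exactly $\Var_x[f]$; it then remains to prove $\Var_x[hs_\alpha] = \Theta(\alpha^2/n)$.

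For the vanishing of $\widehat f([n])$, decompose $f$ into its symmetric and antisymmetric parts under complementation, $f(x) = \tfrac12(f(x) + f(\bar x)) + \tfrac12(f(x) - f(\bar x))$. Since $\chi_S(\bar x) = (-1)^{|S|}\chi_S(x)$, the Fourier coefficients of $f$ on odd-size sets come entirely from the antisymmetric part $f_{\mathrm{odd}} := \tfrac12(f(x) - f(\bar x))$. Now, using that $n$ is odd — so $2k/n$ is never equal to $1$ — a short case analysis on whether $k \le (n-1)/2$ or $k \ge (n+1)/2$ shows $f(x) - f(\bar x) = \tfrac2n \sum_i x_i - 1 = -\tfrac1n \sum_i \chi_{\{i\}}(x)$ for every $x$. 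Hence $f_{\mathrm{odd}}$ is supported on singletons only, and since $|[n]| = n$ is odd and $n \ge 3$, we conclude $\widehat f([n]) = \widehat{f_{\mathrm{odd}}}([n]) = 0$.

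For the variance, let $X \sim \mathrm{Bin}(n, 1/2)$ be the Hamming weight of a uniform input and set $Y = X - n/2$, so that $\E[Y] = 0$, $\E[Y^2] = n/4$, $Y$ is symmetrically distributed, and (again using $n$ odd) $Y$ is never $0$. Then $f = 1 - h(X)$ where $h(X) = \max(0,\, 1 - 2X/n) = \tfrac2n \max(0, -Y)$, so $\Var_x[f] = \tfrac{4}{n^2}\Var[\max(0, -Y)] = \tfrac{4}{n^2}\Var[\max(0, Y)]$ by the symmetry of $Y$. Using symmetry once more, $\E[\max(0,Y)^2] = \tfrac12\E[Y^2] = n/8$ and $\E[\max(0,Y)] = \tfrac12\E[|Y|]$, hence $\Var[\max(0,Y)] = \tfrac n8 - \tfrac14\E[|Y|]^2$. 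This is at most $n/8$ trivially, and at least $n/8 - \tfrac14\E[Y^2] = n/8 - n/16 = n/16$ by Cauchy--Schwarz. Therefore $\Var_x[f] \in [\tfrac1{4n}, \tfrac1{2n}]$, and scaling back $\Var_x[hs_\alpha] = \alpha^2\Var_x[f] = \Theta(\alpha^2/n)$, which completes the proof.

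The argument has no genuine obstacle. The only points requiring care are the two uses of the hypothesis that $n$ is odd — it makes $f(x) - f(\bar x)$ exactly affine in $x$, and it makes $[n]$ an odd-size set, which is exactly what forces $\widehat f([n]) = 0$ — together with the elementary variance estimate, where the positive-part identities $\E[\max(0,Y)^2] = \tfrac12\E[Y^2]$ and $\E[\max(0,Y)] = \tfrac12\E[|Y|]$ and Cauchy--Schwarz pin down the $\Theta(n)$ without any need for the precise mean absolute deviation of the binomial.
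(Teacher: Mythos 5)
Your proof is correct, and it takes a genuinely different route from the paper's. For the main estimate, the paper computes $\|\widehat{hs_\alpha}\|_2^2-\widehat{hs_\alpha}(\emptyset)^2$ head-on: it expands both quantities as binomial sums, reduces them via the identities $i\binom{n}{i}=n\binom{n-1}{i-1}$ and $i^2\binom{n}{i}=n\binom{n-1}{i-1}+n(i-1)\binom{n-1}{i-1}$, and then estimates the central binomial coefficients asymptotically (dropping lower-order terms) to extract $\Theta(\alpha^2/n)$. Your observation that this quantity is simply $\Var_x[hs_\alpha]$, combined with writing $hs_\alpha=\alpha(1-\tfrac2n\max(0,-Y))$ for the centered Hamming weight $Y$ and using the positive-part identities $\E[\max(0,Y)^2]=\tfrac12\E[Y^2]$, $\E[\max(0,Y)]=\tfrac12\E[|Y|]$ together with Cauchy--Schwarz, replaces all of that bookkeeping with a short, fully non-asymptotic argument that even pins down explicit constants ($\Var_x[hs_\alpha]\in[\alpha^2/4n,\,\alpha^2/2n]$). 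For the vanishing of $\widehat{hs_\alpha}([n])$, the paper instead invokes the known relation $\widehat{hs_\alpha}(S)=\alpha\,\widehat{Maj}(S\setminus\{i\})/n$ for $|S|\ge 2$ and the fact that Majority, being odd, has zero Fourier weight on even-size sets; your complementation-symmetry argument (for odd $n$ the antisymmetric part $\tfrac12(f(x)-f(\bar x))$ is exactly the affine function $-\tfrac1{2n}\sum_i\chi_{\{i\}}$, hence all odd-level coefficients beyond singletons vanish) is self-contained and avoids citing the Majority spectrum. The only caveat, shared implicitly by the paper, is that the statement is asymptotic: your argument needs $n\ge 3$ (for $n=1$ the left-hand side is $0$), which you correctly flag. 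Both approaches use oddness of $n$ in the same two places, and your version is, if anything, tighter and more elementary.
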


\noindent
We are now ready to prove Theorem~\ref{thm:lb-hockey-stick-uniform}.
\begin{proofof}{Theorem~\ref{thm:lb-hockey-stick-uniform}}
	By Lemma~\ref{lem:hs-spectrum} it follows that $\sum_{S \neq \emptyset, S \neq [n]} \widehat{hs_\alpha}(S)^2 = \Omega\left(\frac{\alpha^2}{n}\right) = \Omega(1)$, as $\alpha=c\sqrt{n}$.
	Since $hs_\alpha$ is a symmetric function and hence its Fourier coefficients for all sets of the same size are the same, one can show that it is not $\|hs_\alpha\|_2^2-\eps$-concentrated on $o(n)$-dimensional subspaces.
	Formally, this is proved in Theorem 4.6 in~\cite{KMSY18} which shows that there exists $\epsilon > 0$ such that $\dim_{\|f\|_2^2 - \eps}(f) = \Omega(n)$ for any symmetric function which satisfies the condition $\sum_{S \neq \emptyset, S \neq [n]} \hat f(S)^2 = \Omega(1)$.
\end{proofof}

\subsection{Fourier Spectrum of the ``Hockey Stick'' Function}\label{app:hs}
In this section, we bound the fourier spectrum of the ``hockey stick'' function.
\begin{proofof}{Lemma~\ref{lem:hs-spectrum}}
We have:
\begin{align*}
\|hs_\alpha\|_2^2 = 2^{-n} \sum_{x \in \{0,1\}^n} hs_\alpha (x)^2 &= 2^{-n} \left(\alpha^2 2^{n - 1}  + \frac{4\alpha^2}{n^2}\sum_{i = 0}^{\lfloor n/2 \rfloor} i^2 \binom{n}{i} \right)
\end{align*}
We also have:
\begin{align*}
\widehat{hs_\alpha} (\emptyset)^2 = \left(2^{-n} \sum_{x \in \{0,1\}^n} hs_\alpha(x) \right)^2 = 2^{-2n} \left(\alpha 2^{n-1} + \frac{2\alpha}{n}\sum_{i = 0}^{\lfloor n/2 \rfloor} i\binom{n}{i} \right)^2.
\end{align*}
Hence:
\begin{align*}
\|\widehat{hs_\alpha}\|_2^2 - \widehat{hs_\alpha} (\emptyset)^2 = 4 \alpha^2 \left(\frac{1}{16} + \frac{1}{n^2 2^n}  \sum_{i = 0}^{\lfloor n/2 \rfloor} i^2 \binom{n}{i} - \frac{1}{n 2^{n + 1}} \sum_{i = 0}^{\lfloor n/2 \rfloor } i\binom{n}{i} - \frac{1}{n^2 2^{2n}} \left(\sum_{i = 0}^{\lfloor n/2\rfloor} i\binom{n}{i}\right)^2\right)
\end{align*}

\noindent
For $i \ge 1$ we have $i \binom{n}{i} = i \frac{n!}{i! (n - i)!} = n \frac{(n -1)!}{(i - 1)!(n-i)!} = n \binom{n - 1}{i - 1}$.
Hence: 
$$\sum_{i = 0}^{\lfloor n/2 \rfloor} i \binom{n}{i} = n\sum_{i = 0}^{\lfloor n/2 \rfloor - 1} \binom{n - 1}{i} = n (2^{n - 2} - \binom{n - 1}{(n - 1)/2} / 2) \approx n2^{n - 2} (1 - \frac{\sqrt{2} }{\sqrt{\pi n}}).$$
where the approximation omits lower order terms. 
Similarly we have $i^2 \binom{n}{i} = ni \binom{n - 1}{i - 1} = n \binom{n - 1}{i - 1} + n (i - 1) \binom{n - 1}{i -1}$.
Hence: 

\begin{align*}\sum_{i = 0}^{\lfloor n/2 \rfloor} i^2\binom{n}{i}&=  n \sum_{i = 0}^{\lfloor n/2 \rfloor - 1}\binom{n - 1}{i} + n \sum_{i = 0}^{\lfloor n/2\rfloor - 1} i \binom{n - 1}{i} \\
&= n \left(2^{n - 1} - \binom{n - 1}{(n - 1)/2} / 2\right)  + n (n - 1) \sum_{i = 0}^{\lfloor n/2\rfloor - 2} \binom{n - 2}{i} \\
&= n \left(2^{n - 1} - \binom{n - 1}{(n - 1)/2} / 2\right) + n (n - 1)\left( 2^{n - 3} - \binom{n - 2}{\lfloor n - 2\rfloor} \right)\\
& \approx n \left(2^{n - 1} - \frac{\sqrt{2}2^{n - 2}}{\sqrt{\pi n}}\right) + n (n - 1) \left(2^{n - 3} - \frac{\sqrt{2} 2^{n - 2}}{\sqrt{\pi n}}\right)\\
& = n^2 2^{n - 3} - \frac{\sqrt{2} n^{3/2}2^{n - 2}}{{\sqrt{\pi}}} + 3n 2^{n - 3} + o(2^n n),
\end{align*}
where the approximation again omits lower order terms.

\noindent
Thus we have:
\begin{align*}
\|\widehat{hs_\alpha}\|_2^2 - \widehat{hs_\alpha} (\emptyset)^2 = \Theta\left(\frac{\alpha^2 }{n}\right)
\end{align*}

To complete the proof, we will show that $\widehat{hs_{\alpha}}([n])^2 = 0$. It is well-known (see e.g.~\cite{FV15}) that for all $S \subseteq [n]$ such that $|S| \ge 2$ and $i\in S$, it holds that $\widehat{hs_\alpha}(S) = \alpha \frac{\widehat{Maj}(S \setminus \{i\})}{n}$.
Using the fact that majority is an odd function, its Fourier coefficients on sets of even size are $0$.
\end{proofof}

\section{Sketches Under Uniform Distribution}\label{app:uniform-sketches}
Recall that a function $f:\{0,1\}^n\to\{0,1\}$ is $\eps$-approximated by a function $g:\{0,1\}^n\to\{0,1\}$ if
\[\mathbf{Pr}_{x\in\{0,1\}^n}[f(x)\neq g(x)]\le\eps.\]
\begin{theorem}
\label{thm:junta:approx}
\cite{LovettZ18}
Every DNF with width $w$ can be $\eps$-approximated by a $\left(\frac{1}{\eps}\right)^{O(w)}$-junta.
\end{theorem}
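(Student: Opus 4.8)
The plan is to deduce this from a switching-lemma-style statement: I would show that every width-$w$ DNF $f$ is $\eps$-close to a \emph{shallow decision tree}, and then use the elementary fact that a decision tree of depth $D$ is a junta on at most $2^D-1$ variables. Taking $D = O(w\log(1/\eps))$ then yields a junta on $2^{O(w\log(1/\eps))} = (1/\eps)^{O(w)}$ variables, which is exactly the claimed bound.

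Concretely, I would first recall the \emph{canonical decision tree} $T_f$ of $f = T_1 \vee \dots \vee T_m$: at the root, query the (at most $w$) variables of $T_1$; on the branch where $T_1$ is satisfied, output $1$; on every branch where $T_1$ is falsified, recurse on the DNF obtained by deleting $T_1$ and substituting the queried values into the remaining terms. This $T_f$ computes $f$ exactly. The heart of the proof is the following switching-lemma-type estimate, which I would isolate as a lemma: for a uniformly random $x\in\ftwo^n$, the probability that the root-to-leaf path of $x$ in $T_f$ contains more than $D$ queries is at most $2^{-\Omega(D/w)}$. Granting this, set $D = C w\log(1/\eps)$, truncate $T_f$ at depth $D$, and let each truncated node become a leaf labelled $0$ (or, for a better constant, the majority value of $f$ on the corresponding subcube). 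The resulting depth-$D$ tree computes a function $g$ with $\Pr_x[f(x)\ne g(x)] \le \Pr_x[\text{path of }x\text{ exceeds }D] \le 2^{-\Omega(D/w)} \le \eps$ for a large enough absolute constant $C$, and $g$ depends only on the $\le 2^{D}-1$ variables occurring at internal nodes of the truncated tree, i.e.\ on $(1/\eps)^{O(w)}$ variables. This gives the theorem.

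The main obstacle is the path-length lemma. I would prove it by the encoding (adversary) argument behind Razborov's proof of H\aa stad's switching lemma, adapted to the canonical tree: if $x$ follows a path of length $>D$, that path can be cut into $\Omega(D/w)$ consecutive runs, each of which queries (most of) a single width-$\le w$ term that had a free surviving literal; recording which literal survived inside each such term saves a constant fraction of a bit per run relative to the trivial encoding of the path, and summing these savings over the $\Omega(D/w)$ runs produces the $2^{-\Omega(D/w)}$ bound. Equivalently, the lemma follows from H\aa stad's switching lemma applied with restriction probability $p = \Theta(1/w)$, together with the observation that the canonical-tree path of a uniform input factors through a $p$-random restriction. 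I would also note, for context, that a self-contained induction on $w$ is possible — split on whether $f$ contains $\Omega(2^w\log(1/\eps))$ pairwise variable-disjoint terms (in which case $f$ is $\eps$-close to the constant $1$) or fewer (in which case a set of $O(w2^w\log(1/\eps))$ variables hits every term, and after peeling off one literal from each term the residual DNFs have width $w-1$) — but the repeated $2^{\Theta(w)}$ branching there inflates the final exponent to $\mathrm{poly}(w)$, so the decision-tree route is what delivers the clean $(1/\eps)^{O(w)}$ bound.
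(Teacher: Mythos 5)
The paper itself offers no proof of this statement (it is imported verbatim from~\cite{LovettZ18}), so the comparison is with the cited literature; but your argument has a genuine gap: the path-length lemma at its heart is false. Take $f = T_1 \vee \dots \vee T_m$ with pairwise variable-disjoint terms, each of width $w$, and $m = \lfloor 2^w \ln 2\rfloor$, so that $f$ is nearly unbiased and in particular not close to a constant. On a uniform input the canonical tree just works through the terms in order, and the path survives past the first $k$ terms (depth $kw$) exactly when all of them are falsified, which happens with probability $(1-2^{-w})^{k} = e^{-\Theta(k/2^w)}$; for every $k \ll 2^w$ this is $1-o(1)$, whereas your lemma would make it $2^{-\Omega(k)}$. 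Consequently truncating at depth $D = O(w\log(1/\eps))$ cannot work no matter how the cut nodes are labelled: conditioned on reaching depth $D$, the residual DNF still contains $m - O(D/w)$ untouched disjoint terms, so $f$ restricted to that subcube remains nearly unbiased and every labelling errs with constant probability. The appeal to H\aa stad's switching lemma does not repair this: with $p=\Theta(1/w)$ it bounds the decision-tree depth of $f|_\rho$ after a restriction that fixes a $(1-p)$-fraction of the coordinates at random, and a single uniformly random point does not ``factor through'' such a restriction --- there is no free part left, and the quantity the switching lemma controls is unrelated to the length of the canonical-tree path of a random input. Note also that your own fallback dichotomy only declares $f$ close to the constant $1$ once it has $\Omega(2^w\log(1/\eps))$ disjoint terms; the counterexample lives strictly below that threshold.

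The theorem genuinely cannot be reached by shallow-decision-tree truncation in this way. The known proofs --- Gopalan--Meka--Reingold and the cited improvement~\cite{LovettZ18} --- go through DNF \emph{sparsification}: one first shows that a width-$w$ DNF is $\eps$-close to a width-$w$ DNF with at most $(1/\eps)^{O(w)}$ terms (via sunflower/quasi-sunflower or regularity arguments), and then observes that any such DNF is automatically a $w\cdot(1/\eps)^{O(w)} = (1/\eps)^{O(w)}$-junta. That term-counting step is the missing idea that your truncation argument cannot replace.
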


\begin{theorem}
Let $f$ be a $(\theta,m)$-LTF of width $w$ DNFs then $\mathcal{D}^{lin,U}_{1-\eps}(f)\le u\left(\frac{1}{\eps},\frac{\theta}{m},w\right)$ for some function $u$.
\end{theorem}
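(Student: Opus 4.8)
The plan is to follow the treatment of linear thresholds of disjunctions from Section~\ref{sec:lt:disj}, adding one ingredient: after peeling off the top threshold gate, each bottom DNF is replaced by a junta via Theorem~\ref{thm:junta:approx}. Write $f(x)=\sgn{-\theta+\sum_S w_S\,g_S(x)}$ with $q$ a $(\theta,m)$-LTF, each $g_S$ a width-$w$ DNF and $w_S\ge 0$. First I would apply Lemma~\ref{lem:ignore:weights} and Lemma~\ref{lem:round:weights} to $q$ so that, without loss of generality, every $w_S$ lies in a fixed set $W$ with $|W|=O\!\left(\frac{\theta}{m}\log\frac{\theta}{m}\right)$ and the margin is still $\Theta(m)$; as in Lemma~\ref{lem:sketch}, $f(x)=0$ forces at most $\theta/(2m)$ of the DNFs to evaluate to $1$. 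For each weight class $w\in W$ I then hash the DNFs of that weight into $M=\mathsf{poly}(\theta/m,1/\eps)$ buckets with a random $h$ and, within each bucket, replace the hashed DNFs by their disjunction; an OR of width-$w$ DNFs is again a width-$w$ DNF, so this produces an LTF-of-DNFs $f_h$ over only $N_1:=|W|\cdot M=\mathsf{poly}(\theta/m,1/\eps)$ width-$w$ DNFs. The analysis of Lemma~\ref{lem:disjunction} --- taking $M$ large enough that Fact~\ref{fact:collision} bounds the collision probability among the $\le\theta/(2m)$ active DNFs by $\eps$, and thresholding the number of nonempty buckets in the ``many active DNFs'' case --- gives $\Pr_h[f_h(x)\neq f(x)]\le O(\eps)$ for every fixed $x$, hence also under $x\sim U$.

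Next, each of the $N_1$ width-$w$ DNFs appearing in $f_h$ is, by Theorem~\ref{thm:junta:approx}, $\eps'$-approximated under the uniform distribution by a $(1/\eps')^{O(w)}$-junta; setting $\eps'=\eps/N_1$ and taking a union bound over the $N_1$ DNFs, replacing each by its junta changes the value on at most an $O(\eps)$ fraction of $\{0,1\}^n$ (since $f_h$ is a fixed function of the DNF values). The resulting function $\tilde f$ therefore agrees with $f$ with probability $1-O(\eps)$ over the hash choice and $x\sim U$, and depends on at most $N:=N_1\cdot(N_1/\eps)^{O(w)}$ coordinates, a quantity independent of $n$.

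Finally, any function of $N$ coordinates has an exact $\ftwo$-sketch of dimension $N$ (query those $N$ parities $x_i$), so fixing $h$ to a choice that succeeds on $x\sim U$ with probability $1-O(\eps)$ gives $\mathcal{D}^{lin,U}_{1-O(\eps)}(f)\le N$; rescaling $\eps$ yields the claim with $u(1/\eps,\theta/m,w):=N$. The main difficulty is juggling the two error budgets --- the hashing error and the per-DNF junta error both have to be driven below $\eps$ --- while the junta arity blows up like $(1/\eps')^{O(w)}$ and hence dominates $N$; this dictates the order in which parameters are fixed (first $M=\mathsf{poly}(\theta/m,1/\eps)$, which pins down $N_1$, then $\eps'=\eps/N_1$), and is the reason $u$ is exponential in $w$.
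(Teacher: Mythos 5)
Your proposal is correct and follows essentially the same route as the paper: reduce the top $(\theta,m)$-LTF via weight rounding and hashing (the Lemma~\ref{lem:disjunction} construction, noting an OR of width-$w$ DNFs is again a width-$w$ DNF) to a threshold over a number of width-$w$ DNFs independent of $n$, then replace each DNF by a junta via Theorem~\ref{thm:junta:approx} with the error budget split by a union bound, and read off the sketch from the junta coordinates. The only cosmetic difference is that you enlarge the hash range to make the hashing error $\eps$ directly, while the paper keeps constant-error hashing and invokes standard amplification, which costs an extra $\log(1/\eps)$ factor in the number of DNFs.
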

\begin{proof}
Observe that by Lemma~\ref{lem:disjunction} and standard probability amplification techniques, there exists an $\frac{\eps}{2}$-approximation of $f$, denoted $f_h$, that is a threshold function of $q=O\left(\left(\frac{\theta}{m}\right)^3\log^2\frac{\theta}{m}\log\frac{2}{\eps}\right)$ width $w$ disjunctions. 
Thus, it suffices to take a $\frac{\eps}{2q}$-approximation for each of the $q$ width $w$ disjunctions, using Theorem~\ref{thm:junta:approx}. 
By a simple union bound, the resulting linear sketch differs from $f$ on at most $\eps$ fraction of the inputs.
The resulting sketch complexity is $\left(\log\frac{1}{\eps}\right)^{O(w)}$ for each of the $O\left(\left(\frac{\theta}{m}\right)^3\log^2\frac{\theta}{m}\right)$ disjunctions.
\end{proof}

\begin{corollary}
If $f$ can be represented as a (monotone) linear threshold function of $(\theta_i,m_i)$-linear threshold functions, then $\mathcal{D}_{1-\eps}^{lin,U}(f)\le u\left(\frac{1}{\eps},\frac{\theta}{m}\right)$, where $\frac{\theta}{m}=\max_i\frac{\theta_i}{m_i}$.
\end{corollary}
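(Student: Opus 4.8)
The plan is to reduce to the theorem immediately above (which handles a $(\theta,m)$-LTF whose inputs are width-$w$ DNFs) by showing that each inner $(\theta_i,m_i)$-LTF is, under the uniform distribution, $\eps'$-approximated by a DNF of width bounded purely in terms of $\theta_i/m_i$ and $\log(1/\eps')$.

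First I would treat a single inner $(\theta_i,m_i)$-LTF $g_i$. By Lemma~\ref{lem:ignore:weights} and Lemma~\ref{lem:round:weights} we may assume its weights come from a set of size $O\!\left(\frac{\theta_i}{m_i}\log\frac{\theta_i}{m_i}\right)$ with margin $\Theta(m_i)$. Lemma~\ref{lem:disjunction}, followed by standard amplification over $O\!\left(\log\frac1{\eps'}\right)$ independent hashes (with the hashes fixed afterwards by an averaging argument, since $\mathcal D^{lin,U}$ is a deterministic-over-$U$ quantity), replaces $g_i$ by a function that is a fixed combination of thresholds over $q_i=\poly\!\left(\frac{\theta_i}{m_i},\log\frac1{\eps'}\right)$ disjunctions of the original input variables and that differs from $g_i$ on at most an $\eps'/2$ fraction of inputs under $U$. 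Each such disjunction is a width-$1$ DNF, so by Theorem~\ref{thm:junta:approx} it is $\frac{\eps'}{2q_i}$-approximated by a $\poly\!\left(\frac{q_i}{\eps'}\right)$-junta; a union bound over the $q_i$ disjunctions then shows that $g_i$ is $\eps'$-approximated under $U$ by a function depending on only $v=\poly\!\left(\frac{\theta_i}{m_i},\log\frac1{\eps'}\right)$ coordinates, i.e.\ by a width-$v$ DNF.

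Next I would compose and invoke the theorem above. Write $f=q(g_1,\dots,g_k)$ with $q$ a $(\theta_0,m_0)$-LTF and the $g_i$ being $(\theta_i,m_i)$-LTFs, put $\frac{\theta}{m}=\max_{0\le i\le k}\frac{\theta_i}{m_i}$, and take $\eps'=\frac{\eps}{2k}$. Replacing each $g_i$ by its width-$v$ DNF approximation and taking a union bound over the $k$ blocks yields $\tilde f$ with $\Pr_{x\sim U}[\tilde f(x)\neq f(x)]\le\eps/2$, and $\tilde f$ is a $(\theta_0,m_0)$-LTF of width-$w$ DNFs with $w=\poly\!\left(\frac{\theta}{m},\log\frac1\eps,\log k\right)$. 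Applying the theorem above to $\tilde f$ with error parameter $\eps/2$ gives a linear sketch under $U$ of size $u'\!\left(\frac2\eps,\frac{\theta_0}{m_0},w\right)$, and combining the $\eps/2$ from the DNF replacements with the $\eps/2$ sketching error yields $\mathcal{D}_{1-\eps}^{lin,U}(f)\le u\!\left(\frac1\eps,\frac{\theta}{m}\right)$ for an appropriate function $u$.

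The main point to be careful about — essentially the only obstacle — is the parameter bookkeeping so that the bound depends on $\frac1\eps$ and $\frac{\theta}{m}$ alone: one relies on the theorem above already permitting an unbounded number of width-$w$ DNF inputs (via the hashing in Lemma~\ref{lem:disjunction}), so no separate argument about the number $k$ of inner thresholds is needed to control the \emph{sketch size}; the only residue of $k$ is a $\log k$ occurring inside the $\poly(\cdot)$ bounding $v$ and $w$, which is absorbed into the implicit bound (the statement being in the regime where $k$ is treated as part of the data). If one wishes to bypass the theorem above entirely, a cruder argument also works: $\tilde f$ depends on at most $kv$ coordinates and is therefore computable from that many parities, already giving a bound of the stated shape. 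All remaining steps are a routine assembly of Lemmas~\ref{lem:ignore:weights}, \ref{lem:round:weights}, \ref{lem:disjunction} and Theorem~\ref{thm:junta:approx}.
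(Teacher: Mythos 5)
There is a genuine gap, and it is exactly at the point you flagged as "the only obstacle." The corollary asserts a bound of the form $u\left(\frac{1}{\eps},\frac{\theta}{m}\right)$, i.e.\ independent of $n$ and of the number $k$ of inner threshold functions. Your route approximates each inner $(\theta_i,m_i)$-LTF distributionally with error $\eps'=\frac{\eps}{2k}$, so the junta/DNF width you feed into the preceding theorem is $\poly\left(\frac{\theta}{m},\log\frac1\eps,\log k\right)$, and the resulting bound inherits a $\log k$ dependence (which sits in the exponent of the theorem's bound, since the junta size is of the form $\left(\frac1{\eps}\right)^{O(w)}$). Since $k$ is not controlled by $\frac{\theta}{m}$ or $\frac1\eps$ --- e.g.\ $f$ can be a threshold of $\binom{n}{2}$ distinct ORs, each a $(\theta_i,m_i)$-LTF with $\frac{\theta_i}{m_i}=O(1)$, giving $\log k=\Theta(\log n)$ --- your bound is of the form $u\left(\frac1\eps,\frac{\theta}{m},\log n\right)$, which is precisely the kind of $\log n$ dependence this whole section (cf.\ Question~\ref{q:MO}) is designed to eliminate. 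Declaring that the $\log k$ is "absorbed because $k$ is part of the data" is not a valid reading of the statement, and the fallback argument ("$\tilde f$ depends on $kv$ coordinates, use that many parities") is even further from the claimed shape.

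The missing idea is that no approximation of the inner LTFs is needed at all: a monotone $(\theta_i,m_i)$-LTF is \emph{exactly} a DNF of width $O\left(\frac{\theta_i}{m_i}\right)$. After discarding weights below $2m_i$ (Lemma~\ref{lem:ignore:weights}), every minimal $1$-certificate $S$ satisfies $(|S|-1)\cdot 2m_i\le\sum_{j\in S\setminus\{j_0\}}w_j<\theta_i$ (drop the minimum-weight element $j_0$), so $|S|\le\frac{\theta_i}{2m_i}+1$, and the monotone function equals the OR of its minterms. Hence $f$ is literally a $(\theta_0,m_0)$-LTF of width-$O\left(\frac{\theta}{m}\right)$ DNFs, and the theorem immediately above applies as a black box with $w=O\left(\frac{\theta}{m}\right)$, yielding $u\left(\frac1\eps,\frac{\theta}{m},w\right)=u'\left(\frac1\eps,\frac{\theta}{m}\right)$ with no union bound over $k$, no $\eps'=\frac{\eps}{2k}$, and no residual dependence on $k$ or $n$ (the number of DNF inputs is already irrelevant in that theorem because the hashing of Lemma~\ref{lem:disjunction} collapses them to $\poly\left(\frac{\theta}{m}\right)\log\frac1\eps$ blocks). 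Your machinery --- Lemmas~\ref{lem:round:weights}, \ref{lem:disjunction} and Theorem~\ref{thm:junta:approx} applied to each inner LTF separately --- is therefore unnecessary for this step, and as used it does not prove the stated bound.
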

Note that any matroid of rank $r$ can be expressed as a linear threshold function of DNFs, where each clause contains the $r$ independent basis elements. 
Therefore, matroid rank functions can be sketched succinctly under the uniform distribution:
\begin{theorem}
Given $0<\eps<1$ to be the probability of failure, matroids of rank $r$ have a randomized linear sketch of size $g\left(r,\frac{1}{\eps}\right)$ under the uniform distribution, where $g$ is some function. 
\end{theorem}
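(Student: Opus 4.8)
The plan is to obtain the statement as a short corollary of the preceding results on linear threshold functions of bounded-width DNFs (equivalently, of the junta-approximation result, Theorem~\ref{thm:junta:approx}), by slicing $rank_M$ into its level-set indicators and handling each one separately under the uniform distribution. First I would write $rank_M \colon \{0,1\}^n \to \{0,1,\dots,r\}$ as $rank_M(x) = \sum_{k=1}^{r} f_k(x)$ with $f_k(x) = \mathbf{1}[rank_M(x) \ge k]$; this identity holds because if $rank_M(x)=j$ then $f_k(x)=1$ exactly for $k \le j$. Since $rank_M$ is monotone and the independent sets of $M$ are downward closed, $rank_M(x) \ge k$ holds iff $\supp(x)$ contains an independent set of size $k$, so each $f_k$ is the monotone DNF $f_k(x) = \bigvee_{I \in \cI,\, |I|=k}\ \bigwedge_{i\in I} x_i$ of width $k \le r$, where $\cI$ is the family of independent sets; the number of terms can be as large as $\binom{n}{k}$, but only the width matters. (For $k=r$ the terms range over the bases of $M$, matching the remark preceding the theorem; for $k > r$ the function is identically $0$.)

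Next I would sketch each $f_k$ under the uniform distribution. One route is to note that $f_k = \sgn{-\frac12 + \sum_{I \in \cI,\, |I|=k} \bigwedge_{i\in I} x_i}$, whose top gate is a $(\frac12,\frac12)$-LTF and whose bottom conjunctions of $\le r$ positive literals are $(k-\frac12,\frac12)$-LTFs; so $f_k$ is a monotone linear threshold function of $(\theta_i,m_i)$-LTFs with $\theta/m = O(r)$, built over DNFs of width $\le r$, and the preceding theorem/corollary (via Lemma~\ref{lem:disjunction} and Theorem~\ref{thm:junta:approx}) yields $\mathcal{D}^{lin,U}_{1-\eps'}(f_k) \le s(r, 1/\eps')$ for a function $s$ independent of $n$. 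A cleaner route is to apply Theorem~\ref{thm:junta:approx} directly to the width-$\le r$ DNF $f_k$: it is $\eps'$-approximated on the uniform distribution by a junta on $(1/\eps')^{O(r)}$ coordinates, and a junta on $t$ coordinates has a trivial exact linear sketch (the $t$ single-coordinate parities), giving the same kind of bound.

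Finally I would combine the $r$ sketches: set $\eps' = \eps/r$, take the sketch for each of $f_1,\dots,f_r$ with failure probability $\eps'$, concatenate them, and let the post-processing output $\sum_{k=1}^{r}\widehat{f_k}$. A union bound over $k$ shows that, with probability at least $1 - r\eps' = 1-\eps$ over the uniform input and the internal randomness, every $\widehat{f_k}(x)$ equals $f_k(x)$, hence the output equals $\sum_k f_k(x) = rank_M(x)$ exactly; the total size is $\sum_{k=1}^{r} s(r, r/\eps) =: g(r, 1/\eps)$, which depends only on $r$ and $1/\eps$. The proof is essentially assembly once these pieces are in place, and the one point that needs care is the first step: checking that the level indicators $f_k$ are genuinely DNFs of width at most $r$ (so their uniform-distribution complexity is independent of $n$ even though they may have exponentially many terms) and that the gates in the LTF representation have margins bounded away from $0$ so the $(\theta,m)$-LTF hypothesis applies with $\theta/m = O(r)$. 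The $r$-fold union bound only costs a factor $r$ inside the $1/\eps$ dependence and does not change the form of the bound.
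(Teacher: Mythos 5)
Your proposal is correct, and it uses the same central ingredient as the paper (the junta approximation of bounded-width DNFs, Theorem~\ref{thm:junta:approx}, applied to DNFs whose terms are independent sets), but it is organized differently and is in some respects more complete. The paper's own justification is a one-line remark that the rank function ``can be expressed as a linear threshold function of DNFs'' whose clauses are the size-$r$ independent sets, after which it invokes the preceding theorem on $(\theta,m)$-LTFs of width-$w$ DNFs, whose proof goes through the hashing argument of Lemma~\ref{lem:disjunction}; how the integer-valued function $rank_M$ is recovered from a single Boolean threshold-of-DNFs is left implicit. Your layer-cake decomposition $rank_M=\sum_{k=1}^r \mathbf{1}[rank_M\ge k]$, with each level indicator written as a width-$\le r$ monotone DNF over the independent sets of size $k$, makes that step explicit, and your ``cleaner route'' — junta-approximate each level DNF to error $\eps/r$ and read off the relevant coordinates exactly, then union bound — bypasses the LTF/hashing machinery altogether, at the negligible cost of folding a factor $r$ into the $1/\eps$ dependence. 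Your first route (viewing each $f_k$ as a threshold of conjunctions with margin $\tfrac12$, so $\theta/m=O(1)$, and citing the paper's LTF-of-DNFs theorem) is essentially the paper's argument applied levelwise. Both routes yield a bound of the form $r\cdot(r/\eps)^{O(r)}$, which is a valid instantiation of $g(r,1/\eps)$; note this is weaker than the $O\bigl((r\log r+\log\tfrac{1}{\eps})^{r+1}\bigr)$ bound the paper subsequently obtains from the submodular junta theorem of~\cite{BOSY13} (Corollary~\ref{cor:matroid-approximation-uniform-rank}), but the statement being proved only asks for some function of $r$ and $1/\eps$, so your argument suffices.
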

In fact, the function $f(\cdot)$ can be improved using the following observation about using juntas to approximate monotone submodular functions.
\begin{theorem}
[\cite{BOSY13}, Theorem 6]
Let $f:\{0,1\}^n\to\{a_1,\ldots,a_r\}$ be a monotone submodular function. 
For any $\eps\in\left(0,\frac{1}{2}\right)$, $f$ is $\eps$-close to a $(z+1)^{r+1}$-junta, where $z=O\left(r\log r+\log\frac{1}{\eps}\right)$.
\end{theorem}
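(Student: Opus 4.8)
The plan is to prove this by induction on the number $R\le r$ of distinct values taken by $f$, peeling off one level set at a time; each level costs a multiplicative factor of $z+1$ in the junta size, so over at most $r+1$ levels we land at $(z+1)^{r+1}$. Relabel the range so that $f:\{0,1\}^n\to\{0,1,\dots,R-1\}$ (closeness to a junta under the uniform measure is invariant under relabeling the codomain), and set $z=O(R\log R+\log(1/\eps))$. If $R=1$, $f$ is constant, hence a $0$-junta, which is the base case.

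For the inductive step, let $v^\star=R-1$, let $A=\{x:f(x)=v^\star\}$, and let $g(x)=\min(f(x),v^\star-1)$; clamping a submodular function from above by a constant preserves monotonicity and submodularity, so $g$ is monotone submodular with range of size $R-1$, and $f(x)=g(x)$ whenever $x\notin A$. By the induction hypothesis with error $\eps/2$, $g$ is $(\eps/2)$-close to a junta $\tilde g$ on a set $J_g$ with $|J_g|\le (z'+1)^{R-1}$ and $z'\le z$. The remaining task is to show that $\mathbf 1_A$ is $(\eps/2)$-close to a junta on a set $C$ of size at most $z$: granting this, the function equal to $v^\star$ when $x_C$ lies in the accepting set of the junta for $\mathbf 1_A$ and equal to $\tilde g(x_{J_g})$ otherwise is a junta on $C\cup J_g$ of size $\le z+(z'+1)^{R-1}\le (z+1)^{R}\le (z+1)^{r+1}$, and it disagrees with $f$ with probability at most $\Pr[\mathbf 1_A\text{ differs from its junta}]+\Pr[g\ne\tilde g]\le\eps$, closing the induction.

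To produce $C$, I would first record two consequences of submodularity. By the greedy characterization $f(x)=\max\{f(T):T\subseteq\supp(x),\,|T|\le v^\star\}$, membership in $A$ is always witnessed by a subset of $\supp(x)$ of size $\le v^\star$, so $\mathbf 1_A$ is a monotone width-$(R-1)$ DNF. Second, $\sum_i\mathrm{Inf}_i(f)=O(R)$: for any fixed $x$ at most $R$ coordinates $i\in\supp(x)$ satisfy $f(x)>f(x\setminus e_i)$ (those marginals are positive integers summing to at most $f(x)-f(\emptyset)\le R$), while the expected number of $i\notin\supp(x)$ with $f(x+e_i)>f(x)$ is $O(R)$ — this ``positive‑marginal count'' is a decreasing function of $x$ by submodularity, and along a uniformly random maximal chain the indicator ``$f$ increases at this step'' sums in expectation to at most $R$, forcing the count to be $O(R)$ near the middle layers. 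If $\Pr[x\notin A]\le\eps/2$, take $C=\emptyset$; otherwise $\log(1/\Pr[x\notin A])\le\log(2/\eps)$, and one uses the submodular structure of the minimal ``bases'' of $A$: the inequality $f(A\cup\{i\})-f(A)\ge f(B\cup\{i\})-f(B)$ prevents these minterms from fragmenting into a large sunflower, so all but $O(R\log R+\log(1/\Pr[x\notin A]))$ coordinates are inessential in the sense that fixing them to a generic assignment changes $\mathbf 1_A$ with probability $O(1/\mathrm{poly})$, and amplifying/iterating drives this below $\eps/2$; the remaining coordinates form $C$.

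The main obstacle is precisely the last step: the sharp bound $|C|=O(R\log R+\log(1/\eps))$ on the number of relevant variables for the top level set. The generic DNF‑to‑junta theorem (Theorem~\ref{thm:junta:approx}) applied to the width-$(R-1)$ DNF $\mathbf 1_A$ only gives a $(1/\eps)^{O(R)}$-junta, which is exactly the bound this theorem improves upon, so submodularity must be exploited beyond the width bound — one genuinely needs that the family of minimal bases of $\{f=v^\star\}$ is ``spread-free'' under the submodular inequality, so that boundedly many coordinates already determine $\mathbf 1_A$ up to the allowed error. Making this precise, and correctly separating the $R\log R$ contribution (a union bound over structure at the $R$ levels) from the $\log(1/\eps)$ contribution (the failure budget), is the heart of the argument; the rest is bookkeeping. (In the regimes where the asserted junta size exceeds $n$, the statement is vacuous, which is what lets the case analysis above go through without extra hypotheses relating $r$, $\eps$, and $n$.)
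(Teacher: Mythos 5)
First, a framing point: the paper does not prove this statement at all --- it is quoted verbatim from \cite{BOSY13} (their Theorem 6) and used as a black box to derive Corollary~\ref{cor:matroid-approximation-uniform-rank}. So there is no in-paper argument to match your proposal against; what matters is whether your attempt stands on its own, and it does not yet. The decisive gap is the one you yourself flag: the claim that the indicator $\mathbf{1}_A$ of the top level set is $(\eps/2)$-close to a junta on only $O(r\log r+\log\tfrac{1}{\eps})$ coordinates is never proved. The sunflower-freeness of the minimal witnesses, the assertion that all but $O(r\log r+\log\tfrac{1}{\eps})$ coordinates are ``inessential,'' and the ``amplify/iterate'' step are stated as intentions, not arguments. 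Without that lemma, the only thing your outline actually delivers is the width-$(r-1)$ DNF structure, and feeding that into the generic DNF-to-junta bound (Theorem~\ref{thm:junta:approx}) gives $(1/\eps)^{O(r)}$, which is precisely the bound the cited theorem is supposed to beat. So the proposal is a plan whose central step --- the part where submodularity must do work beyond the width bound --- is missing.

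There is also a concrete technical error earlier in the argument. You relabel the range to $\{0,1,\dots,R-1\}$ on the grounds that closeness to a junta is relabeling-invariant; closeness is, but \emph{submodularity is not preserved} under a monotone relabeling of the codomain. (Example on ground set $\{1,2,3\}$: $f(\emptyset)=0$, $f(\{1\})=f(\{2\})=f(\{3\})=1$, $f(\{1,2\})=1.1$, $f(\{1,3\})=f(\{2,3\})=2$, $f(\{1,2,3\})=2.1$ is monotone submodular, but after relabeling its values to $0,1,2,3,4$ the marginal of element $2$ at $\{3\}$ becomes $2$ while its marginal at $\emptyset$ is $1$, violating submodularity.) Your subsequent steps need integrality and submodularity simultaneously: the ``marginals are positive integers summing to at most $R$'' witness bound and the influence bound $\sum_i \mathrm{Inf}_i(f)=O(R)$ use unit-separated values, while the clamping step and the claimed structure of the minimal bases of $A$ use submodularity. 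With the original real values $a_1,\dots,a_r$ the marginals can be arbitrarily small and the size-$(R-1)$ witness argument as written fails; with the relabeled values the submodular inequality you invoke may no longer hold. Any correct proof has to handle an arbitrary $r$-valued range (this is how the theorem is applied to scaled matroid rank functions in the paper), so this is not a cosmetic issue --- either the level-set decomposition must be rephrased to use only the order structure of the values, or the key junta lemma must be proved in a way that never relies on unit gaps.
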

\begin{corollary}\label{cor:matroid-approximation-uniform-rank}
\label{thm:uniform:matroid:exact}
Matroids of rank $r$ under the uniform distribution have a sketch of size $O\left(\left(r\log r+\log\frac{1}{\eps}\right)^{r+1}\right)$, where $\eps$ is the probability of failure.
\end{corollary}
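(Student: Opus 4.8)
The plan is to combine the junta-approximation theorem of~\cite{BOSY13} stated just above the corollary with the elementary observation that a junta is trivially an $\ftwo$-sketch: a function depending only on the coordinates in a set $T$ is computed exactly by the sketch consisting of the singleton parities $\{\chi_{\{i\}}\}_{i \in T}$, which has dimension $|T|$. So the whole argument is: approximate $rank_M$ by a small junta under the uniform distribution, then read off the relevant coordinates.

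First I would recall that for a matroid $M$ of rank $r$ the (scaled) rank function $rank_M$ is monotone submodular --- as recorded in the preliminaries --- and takes at most $r+1$ distinct values, namely $0,1,\dots,r$ up to scaling. Hence the theorem of~\cite{BOSY13} applies with failure parameter $\eps \in (0,1/2)$: there is a set $T \subseteq [n]$ with $|T| = (z+1)^{r+1}$, where $z = O\!\left(r\log r + \log\tfrac1\eps\right)$, and a function $h \colon \ftwo^n \to \mathbb R$ depending only on the coordinates in $T$ with $\Pr_{x \sim U(\ftwo^n)}\!\left[h(x) \neq rank_M(x)\right] \le \eps$.

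Next I would turn $h$ into an $\ftwo$-sketch. Evaluating the parities $\chi_{\{i\}}(x) = x_i$ for $i \in T$ recovers the restriction of $x$ to $T$, and since $h$ is a $|T|$-junta supported on $T$ we let the post-processing map $g \colon \ftwo^{|T|} \to \mathbb R$ be the truth table of $h$, so that $g\big((x_i)_{i \in T}\big) = h(x)$ for every $x$. This sketch has dimension $|T| = (z+1)^{r+1} = O\!\left((r\log r + \log\tfrac1\eps)^{r+1}\right)$, and it outputs $rank_M(x)$ on every $x$ where $h(x) = rank_M(x)$, hence on all but an $\eps$-fraction of uniformly random $x$. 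This is precisely the claimed exact distributional $\ftwo$-sketch under the uniform distribution with failure probability $\eps$.

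The computation is short, so there is no real obstacle; the only points needing care are (i) invoking~\cite{BOSY13} with the correct parameter, i.e. the number of distinct values of $rank_M$, which is $O(r)$, and absorbing the resulting constants into the exponent, and (ii) checking that ``$\eps$-close under the uniform distribution'' in the sense of~\cite{BOSY13} coincides with failure probability $\eps$ in the distributional $\ftwo$-sketch model --- which it does, since the sketch above reproduces the junta $h$ exactly rather than merely approximating it.
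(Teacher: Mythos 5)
Your proposal is correct and takes essentially the same route as the paper, which states the corollary as an immediate consequence of the quoted junta-approximation theorem of~\cite{BOSY13}, the junta being realized as an $\ftwo$-sketch exactly as you describe (singleton parities on its support plus a truth-table post-processing map). The only quibble is your point (i): $rank_M$ takes $r+1$ values, so a literal application of the theorem gives exponent $r+2$ rather than $r+1$, a slack that cannot be ``absorbed into constants'' but which the paper's own statement glosses over in the same way.
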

\noindent
Finally, we note the concentration of matroid rank functions on uniform distributions (see also~\cite{V10}):
\begin{theorem}
[\cite{V10},\cite{GHRU13} Lemma 2.1]
Let $f:2^U\to\mathbb{R}$ be a $1$-Lipschitz submodular function. 
Then for any product distribution $\mathcal{D}$ over $2^U$,
\[\Pr_{S\sim\mathcal{D}}\left[|f(x)-\mathbb{E}[f(S)]|\ge t\right]\le2\exp\left(-\frac{t^2}{2\left(\mathbb{E}[f(S)]+5t/6\right)}\right).\]
\end{theorem}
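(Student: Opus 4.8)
The plan is to obtain this as a black-box instance of the entropy (``self-bounding'') method for concentration of functions of independent random variables of Boucheron, Lugosi and Massart; this is essentially how~\cite{V10} proves it. Identify $2^U$ with $\{0,1\}^n$ for $n=|U|$, let $S\sim\mathcal D$ have independent coordinates $S_1,\dots,S_n$, and for each $i$ put $f_i(S):=\min\{f(S\cup\{i\}),\,f(S\setminus\{i\})\}$ (for monotone $f$ this equals $f(S\setminus\{i\})$ and does not depend on $S_i$). The two facts I would establish are: (a) $0\le f(S)-f_i(S)\le 1$ for all $i$ and $S$, which is immediate from $1$-Lipschitzness; and (b) $\sum_{i=1}^n\bigl(f(S)-f_i(S)\bigr)\le f(S)$, which is the only place submodularity enters.

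For step (b) in the monotone case, only $i\in S$ contribute; ordering $S=\{i_1,\dots,i_k\}$ and telescoping, $f(S)-f(\emptyset)=\sum_j\bigl(f(\{i_1,\dots,i_j\})-f(\{i_1,\dots,i_{j-1}\})\bigr)\ge\sum_j\bigl(f(S)-f(S\setminus\{i_j\})\bigr)$ by diminishing returns, so $\sum_{i\in S}\bigl(f(S)-f(S\setminus\{i\})\bigr)\le f(S)-f(\emptyset)\le f(S)$. Given (a) and (b), the self-bounding inequality yields $\Pr[f(S)\ge\E[f(S)]+t]\le\exp\bigl(-t^2/(2(\E[f(S)]+ct))\bigr)$ for an absolute constant $c$, together with the cleaner lower tail $\Pr[f(S)\le\E[f(S)]-t]\le\exp\bigl(-t^2/(2\E[f(S)])\bigr)$; taking the weaker of the two exponents and a union bound over the two tails gives the stated two-sided bound with the leading factor $2$, and carrying the constant $c$ through the Herbst/modified-log-Sobolev step (together with a mildly lossy reduction to the monotone, non-negative case when $f$ is not monotone) is what produces the specific $5t/6$ in the denominator.

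The main obstacle is precisely this constant bookkeeping: the off-the-shelf self-bounding bound carries $t/3$ rather than $5t/6$, so one must either invoke the sharp weakly-self-bounding variant with optimal constants or absorb the loss into the monotonization, keeping the denominator in the form $\E[f(S)]+5t/6$ throughout rather than, say, $\E[f(S)]+t$. An alternative that avoids the entropy method entirely is a Doob martingale that exposes $S_1,\dots,S_n$ coordinate by coordinate combined with a Freedman/Bernstein-type tail bound: the martingale increments are bounded by $1$ by Lipschitzness, and submodularity again gives $\sum_i\Var(\Delta_i\mid\mathcal F_{i-1})=O(\E[f(S)])$ for the conditional variances, which is the same combinatorial input as (b) reappearing. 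Either way, once (a) and (b) are in place the statement is routine, so the genuinely submodularity-specific work is the single telescoping inequality (b).
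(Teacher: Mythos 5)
The paper never proves this statement---it is imported as a known result from \cite{V10} and Lemma~2.1 of \cite{GHRU13}---and your self-bounding argument (properties (a) and (b) plus the Boucheron--Lugosi--Massart self-bounding inequality) is exactly the proof given in those references, so your approach coincides with the source the paper relies on. One correction to your ``main obstacle'': in the monotone, non-negative case there is no constant bookkeeping to do, since the off-the-shelf self-bounding tail with $t/3$ in the denominator is \emph{stronger} than the stated bound with $5t/6$ and hence implies it outright (and matroid rank functions, the only place the paper uses this, are monotone and non-negative). The constant $5/6$ genuinely enters only for non-monotone submodular $f$, where your inequality (b) can fail---for a graph cut function of a perfect matching one gets $\sum_i\bigl(f(S)-f_i(S)\bigr)=2f(S)$---and one must instead use $\sum_i\bigl(f(S)-f_i(S)\bigr)\le 2f(S)$ together with the $(a,b)$-self-bounding version with $a=2$, whose constant $(3a-1)/6=5/6$ is precisely the one in the statement; your proposed Freedman-type alternative would also need the predictable quadratic variation controlled pathwise, not just in expectation, so the entropy-method route you lead with is the right one.
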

\begin{corollary}\label{cor:matroid-approximation-uniform}
For matroids of rank $\Omega(n)$ and $\epsilon>\frac{1}{\sqrt{n}}$, the expectation of the matroid rank function $rank_M$ suffices for a $\eps$-approximation to the matroid rank. 
\end{corollary}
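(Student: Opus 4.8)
The plan is to observe that the corollary is really a statement about a \emph{dimension-zero} sketch: the constant function that always outputs $c := \E_{S \sim U(2^{[n]})}[rank_M(S)]$ has expected squared error exactly $\Var_{S \sim U}[rank_M(S)]$, so it suffices to show that for a (rescaled) matroid rank function of rank $r = rk(M) = \Omega(n)$ one has $\Var_{S \sim U}[rank_M(S)] = O(\eps^2)$. I would work first with the unscaled rank function $f := r\cdot rank_M$, which takes values in $\{0,1,\dots,r\}$ and is $1$-Lipschitz and submodular, so that the concentration theorem of~\cite{V10} applies with $\mathcal D = U$.

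The two estimates needed are the following. First, a lower bound $\mu := \E[f(S)] \ge r/2$: fix any basis $B$ of $M$ with $|B| = r$; since $S \cap B \subseteq B$ is independent, monotonicity of rank gives $f(S) \ge |S \cap B|$, and $\E_{S \sim U}[|S\cap B|] = |B|/2$. Together with the trivial bound $\mu \le r$ this pins down $\mu = \Theta(r) = \Theta(n)$. Second, a variance bound $\Var[f(S)] = O(\mu)$, obtained from $\Var[f(S)] = \int_0^\infty 2t\,\Pr[\,|f(S) - \mu| \ge t\,]\,dt$ by splitting the integral at $t = \mu$: for $t \le \mu$ the denominator $2(\mu + 5t/6)$ in the exponent of the tail bound is $\Theta(\mu)$, giving a sub-Gaussian tail $2\exp(-\Omega(t^2/\mu))$ whose contribution to the integral is $O(\mu)$; for $t > \mu$ the denominator is $\Theta(t)$, giving a sub-exponential tail $2\exp(-\Omega(t))$ contributing only $O(1)$, which is dominated by $O(\mu)$ since $\mu = \Omega(n)$. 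Rescaling back, $\Var_{S\sim U}[rank_M(S)] = \Var[f(S)]/r^2 = O(\mu/r^2) = O(1/r) = O(1/n)$.

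To finish, since $\eps > 1/\sqrt{n}$ we have $\eps^2 > 1/n$, so the dimension-zero sketch outputting the constant $c$ has expected squared error $O(1/n) = O(\eps^2)$, i.e. root-mean-squared error $O(\eps)$; and since $rank_M(S)$ concentrates around $c \ge 1/2$ by the same tail bound, this also yields a multiplicative $O(\eps)$-approximation on all but an $O(\eps)$-fraction of the inputs. Taking the hidden constant in ``$\Omega(n)$'' sufficiently large absorbs the absolute constants so that the error is at most $\eps$. The only genuinely fiddly part of the argument is the two-regime integration of the tail bound in the variance estimate and tracking how the $1/r$ rescaling interacts with the precise additive-versus-multiplicative meaning of an $\eps$-approximation; neither presents a real obstacle, just bookkeeping.
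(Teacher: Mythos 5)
Your proposal is correct and follows essentially the same route as the paper: both rest on the concentration bound for $1$-Lipschitz monotone submodular functions under product distributions quoted in Appendix~\ref{app:uniform-sketches}, applied to the unscaled rank function together with the fact that a rank-$\Omega(n)$ matroid has expectation $\Theta(n)$, so after rescaling the constant $\E[rank_M]$ meets the error requirement once $\eps>1/\sqrt{n}$. Your tail-integration variance estimate and the basis argument giving $\E[f]\ge r/2$ are simply the bookkeeping that the paper's two-line proof leaves implicit.
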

\begin{proof}
Recall that the matroid rank function $rank_M$ is always a submodular $1$-Lipschitz function. 
Thus, matroids of rank $\Omega(n)$ are highly concentrated around their expectation and so to get an $\eps$-approximation to the matroid rank, it suffices to simply output the expectation of $f$, provided $\epsilon>\frac{1}{\sqrt{n}}$. 
\end{proof}

\end{document}